\documentclass[11pt]{article}
% \usepackage[noBBpl,sc,osf]{mathpazo} 
% \linespread{1.05} % Palatino needs more leading (space between lines)
% \usepackage[scaled]{helvet} % ss
% \usepackage{zlmtt} % tt
% \usepackage[T1]{fontenc}
% \usepackage{textcomp}
% %\normalfont

%\usepackage[margin=1.3in]{geometry}
\usepackage[top=2cm]{geometry}

\usepackage[utf8]{inputenc}
\usepackage{amsthm}
\usepackage{amsmath}
\usepackage{tikz}
\usepackage{mathtools}
\usepackage{amssymb}
\usepackage{amsfonts}
\usepackage{parskip}
\usepackage{mathrsfs}
\usepackage{bbm}
\usepackage{setspace}
\usepackage{caption}
\usepackage{subcaption}
\usepackage[round]{natbib}
\usepackage{centernot}
\usepackage{float}
\usetikzlibrary{positioning}
\onehalfspacing
\newtheorem{lemma}{Lemma}
\newtheorem*{lemma A}{Lemma A}
\newtheorem{corollary}{Corollary}
\newtheorem{theorem}{Theorem}
\newtheorem{remark}{Remark}
\newtheorem{assumption}{Assumption}

\newtheorem{proposition}{Proposition}
\newtheorem*{proposition*}{Proposition}
\newtheorem{example}{Example}
\newtheorem{fact}{Fact}
\allowdisplaybreaks
\usepackage{verbatim}
\usepackage{hyperref}
\usepackage{soul}

\usetikzlibrary{arrows.meta}
\usetikzlibrary{fit,positioning}

   \usepackage{xcolor}
\hypersetup{colorlinks=true,linkcolor=blue,urlcolor=blue,citecolor=purple} %colors for hyperref, citation%%%%

\renewcommand{\b}{\mathbf}
\renewcommand{\c}[1]{\mathcal{#1}}
\newcommand{\hb}[1]{\hat{\mathbf{#1}}}

\newcommand{\lmax}{\lambda_1}
\newcommand{\lmin}{\lambda_n}
\newcommand{\pmat}[1]{\begin{pmatrix}#1\end{pmatrix}}
\newcommand{\bmat}[1]{\begin{bmatrix}#1\end{bmatrix}}

\renewenvironment{proof}[1]{{\bf{Proof of #1.} }}{\hfill $\Box$} 
\bibpunct[: ]{(}{)}{;}{a}{}{,}

\begin{document}

\title{\sc  Welfare and Distributional Effects of Joint Intervention in Networks\footnote{We are grateful to the editor Marzena Rostek, an advisory editor, two anonymous referees, Francis Bloch, Yann Bramoulle, Antonio Cabrales, Vasco M. Carvalho, Krishna Dasaratha, Sihua Ding, Matt Elliott, Andrea Galeotti, Sanjeev Goyal,  Cheng-Zhong Qin,  Evan Sadler, Yang Sun,  Alireza Tahbaz-Salehi,   Guofu Tan,  Fernando Vega-Redondo, Wei Zhao and seminar participants for very helpful comments. Junjie Zhou gratefully acknowledges financial support from the NSFC (Grant Nos. 72450001,72521003, and 72342032) and the Tsinghua Strategy for Heightening Arts, Humanities, and Social Sciences: “Plateaus \& Peaks” (No. 2022TSG08102). The usual disclaimer applies.}
}
\author{
Ryan Kor\thanks{Department of Economics, National University of Singapore.  {\tt e0004083@u.nus.edu}}
\and Yi Liu\thanks{Department of Economics, Yale University.  {\tt yi.liu.yl2859@yale.edu}}
\and Yves Zenou\thanks{Department of Economics, Monash University.  {\tt yves.zenou@monash.edu}}
 \and Junjie Zhou\thanks{School of Economics and Management, Tsinghua University.  {\tt zhoujj03001@gmail.com}} }
\date{\today}

\maketitle

\begin{abstract}
We study the optimal joint intervention of a planner who can influence both the standalone marginal utilities of agents in a network and the weights of the links connecting them. The welfare-maximizing intervention displays two key features. First, when the planner’s budget is moderate (yielding interior solutions), the optimal change in link weight between any pair of agents is proportional to the product of their eigen-centralities. Second, when the budget is sufficiently large, the optimal network converges to a simple structure: a complete network under strategic complements, or a complete balanced bipartite network under strategic substitutes. We show that welfare effects are governed by the principal eigenvalue of the network, while distributional outcomes are driven by the dispersion of the corresponding eigen-centralities. Comparing joint interventions to single interventions targeting only standalone marginal utilities, we find that joint interventions consistently generate higher aggregate welfare, but may also increase inequality, revealing a potential trade-off between efficiency and equity.
\medskip

\noindent \textbf{Classification JEL:} D21, D29, D82.

\noindent \textbf{Keywords: } eigen-centrality, joint intervention,  inequality
\end{abstract}

%\newpage  \tableofcontents 
\newpage

\section{Introduction}

In many socio-economic settings, individual behaviors are embedded in networks   where one person’s actions directly influence others. For a given network structure, a planner can intervene along two main dimensions: by altering individual incentives  or by reshaping the network itself. In practice, effective policies often combine both. Consider infrastructure networks: to reduce pollution or congestion, a planner may adjust incentives through fuel taxes, parking fees, congestion charges, or public transport subsidies, while simultaneously modifying the network through road expansions, dedicated bus lanes, traffic-light coordination, or zoning policies that change how one driver’s trip affects congestion for others.

The theoretical literature has extensively studied these two levers, but often in isolation. A seminal contribution by \cite{bcz} identifies the optimal individual whose removal maximally reduces aggregate activity in network games, thus offering guidance on optimal network-based interventions. In contrast, \cite{ggg} analyze how planners should optimally allocate incentives across individuals in the presence of network spillovers, showing that when actions exhibit strategic complements, incentive allocations should be proportional to agents' network centralities. While both types of interventions can be highly effective, they operate through distinct channels. Moreover, interventions along one dimension may affect the effectiveness of interventions along the other. 
%For example, \cite{bcz2} demonstrate that when criminals have alternative lawful employment options, the identity of the optimal key player to remove depends jointly on the network structure and outside economic opportunities, emphasizing the interplay between network position and private incentives. Similarly, recent empirical work \citep{Lindquist2025} shows that changes in criminal sanctions not only affect participation but also modify the intensity and composition of co-offending partnerships, directly influencing the link weights within the network.
These observations motivate the need for a more comprehensive analysis of joint interventions, where both network ties and individual incentives can be shaped simultaneously to achieve welfare-improving outcomes.

This paper develops a model in which a benevolent planner aims to maximize total welfare in a weighted network where agents’ actions depend on both private returns and peer effects, and efforts can be either strategic complements or strategic substitutes. The planner can intervene along two costly dimensions: first, by adjusting agents’ standalone marginal utilities through characteristic interventions (e.g., targeted subsidies or penalties), as in \cite{ggg}; and second, by modifying the strength of links between agents, thereby influencing behavior through peer spillovers, as in \cite{szz}. The central objective is to determine the optimal allocation of a fixed intervention budget across these two instruments.

 In this paper, we take the infrastructure network as our  real-world application, where nodes represent drivers or users. The nature of strategic interaction in this context depends on the externalities at play. Commuting via public transit, using ride-sharing platforms (such as Uber or Lyft), or adopting electric vehicles (EVs) all create situations where usage efforts are \emph{strategic complements}: higher participation improves service frequency, reduces waiting times, or supports denser charging networks, thereby increasing the individual payoff from usage \citep{small2007urban,cachon2017role,hardman2018review}. Accordingly, in terms of standalone incentives, the planner may adjust fare subsidies, offer EV purchase incentives, or provide bonuses for ride-sharing. In terms of link weights, the planner may extend transit lines, install EV charging stations, or design platforms that amplify how one user’s participation benefits others.
By contrast, in the case of road traffic, driving efforts are \emph{strategic substitutes}: when others drive more, congestion lowers the marginal benefit of one’s own driving. Thus, in terms of standalone incentives, the planner may adjust fuel taxes, parking fees, or congestion charges, or provide subsidies for public transport. In terms of link weights, the planner may expand roads, introduce dedicated bus lanes, coordinate traffic lights, or implement zoning policies that determine how strongly one driver’s trips create congestion for others  \citep{wardrop1952road, pigou1920economics, roughgarden2002bad}.
In both cases, the central feature is that the \emph{network structure is exogenous}, determined by physical infrastructure or centrally managed platforms, while individual agents choose only their participation levels.

%In criminal networks, as noted above, both law enforcement operations and changes in legal incentives influence the structure and strength of criminal ties \citep{Lindquist2025}. In transportation networks, link intensities reflect infrastructure capacity, which can be modified by public investment decisions \citep{Fajgelbaum}. In such contexts, local actors may have limited ability to reshape connectivity, placing network design under the control of a central planner. The planner can choose, for example, road widths or public transit frequency, while pre-existing infrastructure imposes constraints on feasible adjustments. Characteristic interventions may also take the form of localized amenities that affect the marginal utility of individuals residing in specific regions. Our model captures the interplay between these two types of interventions and characterizes optimal design when both channels are jointly available.

%Analogous considerations arise in organizational settings, such as distribution or supply networks within firms \citep{shang}. Here, link intensities reflect internal allocations of supply flows or information, which are centrally managed, while individual outlets lack authority to alter these flows directly. Management faces costs when reallocating supply across the network, as well as when adjusting the productivity or incentives of individual units.

Our model reveals rich interactions between these intervention channels. The marginal returns to characteristic interventions depend on the strength of peer effects and the connections among agents, while the marginal returns to link interventions depend on agents' network centralities, shaped by both private incentives and network topology. 
In Theorem \ref{prop-1} we derive necessary conditions for optimality  using variational methods. These conditions appear simpler in the characteristic  dimension after applying the spectral decomposition method,  as described in \cite{ggg}. However, the complexity persists because the network dimension introduces more variables to solve, which are intertwined with the characteristic dimension.

Specializing to the case where standalone marginal utilities are negligible, Proposition \ref{th-1} shows that the optimal change in link weights is proportional to the product of the eigen-centralities of the connected agents. These eigen-centralities correspond to the leading eigenvector associated with the largest eigenvalue under strategic complements, and with the smallest eigenvalue under strategic substitutes. Under strategic complements, eigen-centralities have uniform sign, implying that link intensities increase under optimal intervention, with the largest increases concentrated on central agents. Under strategic substitutes, eigen-centralities have mixed signs, naturally partitioning the network into two groups; optimal intervention strengthens cross-group ties while weakening within-group ties.

In Proposition \ref{lem-ext2}, we relax the assumption that standalone marginal utilities are negligible and instead consider the general case. When the budget is sufficiently large, the influence of standalone marginal utilities becomes negligible, and the results from Proposition \ref{th-1} serve as accurate approximations. Furthermore, Proposition \ref{lem-ext2} establishes both lower and upper bounds for the equilibrium utility. These bounds depend on the benchmark case with negligible standalone marginal utilities and the variation in the available budget.
Therefore, the results from Proposition \ref{th-1} give an approximate characterization of the optimal joint intervention problems.  Proposition \ref{lem-ext2} also provides the approximation ratio, which is $1$ plus a term of order $1$ over the square root of the budget. Notably, the approximation ratio approaches $1$ as the budget becomes large.  Furthermore, Proposition  \ref{prop-instance}  demonstrates that the approximation ratio in Proposition \ref{lem-ext2} is asymptotically optimal.

Regardless of the initial structure, Theorem \ref{th-2} establishes that for sufficiently large budgets, the complete network and the balanced complete bipartite network are optimal  under strategic complements and strategic substitutes, respectively.  This result builds on prior observations by \cite{ggg}, showing that the shadow price of the planner's budget is increasing in the leading eigenvalue (for complements) and decreasing in the smallest eigenvalue (for substitutes). Thus, the optimal network design problem reduces to maximizing (or minimizing) the corresponding eigenvalue, as characterized in Lemma \ref{lem-graph}. We further analyze the configuration of the bipartite structure and its computational complexity in Proposition \ref{prop-np}, connecting it to the classical maximum cut problem.

Next, we compare welfare outcomes under joint versus single interventions. Theorem \ref{th-welfare} establishes that network design yields substantial welfare gains, which increase with the strength of spillovers. Theorem \ref{th-ineq} shows that, under sufficiently large budgets, joint interventions can eliminate payoff inequality by equalizing eigen-centralities. In contrast, inequality may persist under single interventions that hold the network structure fixed. However, with moderate budgets, joint interventions may exacerbate inequality due to trade-offs between aggregate efficiency and distributional equity, as illustrated in Example \ref{ex-4}. Proposition \ref{prop-4} quantifies the welfare cost of imposing equality constraints. Finally, we consider several extensions in Propositions \ref{prop-pure}, \ref{prop-util}, and \ref{prop-cost}, showing that while these extensions affect welfare levels, the optimal network structure remains either complete or complete bipartite under large budgets, consistent with Theorem \ref{th-2}.

\subsection{Related Literature}

\subsubsection*{Fixed networks}  

Our paper builds on the linear-quadratic framework introduced by \cite{bcz} and \cite{bkd} to analyze players' activity levels and welfare, contributing to the growing literature on optimal interventions in networks.

A first strand of this literature studies incentive targeting when the network is fixed. In network-based discriminatory pricing, for example, players receive personalized prices depending on their centrality, as shown by \cite{cbo} and \cite{bq}. \cite{dem} extends the analysis to more general targeting frameworks and functional forms, while \cite{boy} examine competitive targeting through advertising and information diffusion, allowing for asymmetric equilibria. Related work includes applications to industrial policy (\citealp{ernest}) and carbon tax reforms via sectoral targeting (\citealp{king}). 

Redistributive policies have also been studied as forms of targeted interventions. In a public good game on a fixed network, \cite{allouch} show that the welfare effects of income redistribution depend on agents' Bonacich centralities. \cite{ggg} study optimal targeting of standalone utilities in networks with strategic complements, using principal component analysis, and demonstrate the importance of eigen-centralities in guiding optimal incentives. Our analysis of joint interventions directly builds on these insights, extending the framework to allow simultaneous intervention on both individual incentives and link weights. While standalone interventions on individual incentives may be relatively easy to implement through pricing or advertising (\citealp{cbo}), modifying the network structure typically requires costly infrastructure or institutional changes with long-term effects (\citealp{oabl}). In such contexts, network interventions become central to welfare maximization. While \cite{ggg} show how principal component analysis guides optimal targeting in fixed networks, our model demonstrates that in the joint intervention setting, eigen-centralities also determine optimal adjustments to link weights.\footnote{Our analysis, which  highlights the role of eigen-centralities in shaping payoff inequality under optimal interventions, is related  to recent work by \cite{elliot}, who shows that Pareto efficiency in public goods networks is linked to the principal eigenvalue of the network,   \cite{penta}, who study implementation problems where robust design depends on the spectral radius of payoff externality networks, and \cite{BFLZ2021} who provide further microfoundations for eigencentralities in network models of perceived competition. } 

A second strand of literature investigates interventions on the network structure itself. Since the seminal work of \cite{bcz} on identifying key players and the subsequent works of \cite{bcz2} and  \cite{gl} on key links, a number of papers have explored optimal network design. These questions are particularly relevant in criminal networks, where interventions target the structure of co-offending relationships, as shown by \cite{mp}. More generally, \cite{bbd} characterize optimal unweighted undirected networks as nested split graphs, while \cite{li} extend these results to weighted and directed networks, identifying generalized nested split graphs as optimal structures.   \cite{svm} provide empirical evidence supporting the welfare relevance of structural interventions in collaborative networks.

\subsubsection*{Endogenous Networks}

A related literature considers models where the network structure is endogenously determined by players' decisions. In particular, \cite{ccz} analyze joint determination of socialization and activity levels, while  \cite{ktz}  and \cite{sg} study endogenous networks with  nested split graphs as equilibrium outcomes. \cite{rogers} compare decentralized equilibrium networks to socially efficient networks, showing conditions under which private and social incentives coincide. \cite{baumann} obtain similar results for reciprocal equilibria where link investments are symmetric. \cite{blochdutta} characterize stars as efficient and stable networks under weighted link formation, while \cite{kinateder} identify complete core-periphery networks as equilibrium outcomes in public good games, and \cite{kinateder2} further show that the highest welfare occurs in a star network where the central hub is the sole provider of the public good, financed by income transfers from the peripheral agents.  \cite{sihua} develops a general framework with link substitutability that generates a variety of equilibrium topologies, and \cite{carlson} studies optimal bipartite network design in two-sided platform settings. 

While previous work by \cite{szz} has studied equivalences between characteristic and structural interventions, our contribution lies in analyzing how these two types of interventions interact under a binding budget constraint. In this respect, our model reveals richer interactions than earlier equivalence results. Relatedly, \cite{hpt} analyze joint design of airline networks and pricing, attributing hub-and-spoke structures to traffic economies. While their planner's objective differs, our model similarly combines network design and incentive decisions, but in a broader strategic environment allowing both complements and substitutes. As a result, our optimal networks depart from the hub-and-spoke structure identified in \cite{hpt}.

Taken together, these literatures underscore the importance of both individual incentives and network structure in shaping welfare outcomes. Our contribution lies in unifying these two intervention levers within a common framework, and characterizing the optimal allocation of resources between them.

The remainder of the paper is as follows. Section \ref{sec-2} introduces the model and the definitions and notations used throughout. Section \ref{sec-Applications} provides real-world applications about our model.  Section \ref{sec-3} provides a characterization of the optimal intervention. Section \ref{sec-4} analyzes the resulting welfare and distributional effects and provides a comparison with the literature without structural interventions. Section \ref{sec-5} discusses some generalizations while Section \ref{sec_Con} concludes the paper. The Appendix  contains the proofs that are omitted in the main text.

\section{Model}\label{sec-2}
\subsection{Setup}

Consider a game on a weighted network \(\b g\) over a set of 
players \(\c N=\{1,\cdots,n\}\). Each player \(i\in\c N\) chooses an action \(x_i\in\mathbb R\) and receives payoff 
\begin{equation}\label{eq-pi}\pi_i(x_i;\b x_{-i})=a_ix_i-\frac{1}{2}x_i^2+\phi\sum_{j=1}^ng_{ij}x_ix_j,\end{equation}
where \(a_i\) represents player \(i\)'s standalone marginal utility, $g_{ij}$ denotes the weight of the link between $i$ and $j$,  and \(\phi\) captures the strategic interactions between players.\footnote{See, for instance, \cite{bcz,bkd,ggg}.} The network  and the standalone marginal utilities  \(a_i\) are exogenous to the players. The case \(\phi>0\) corresponds to strategic complements, while the case \(\phi<0\) corresponds to strategic substitutes.  We use the adjacency matrix \(\b g=(g_{ij})_{1\leq i,j\leq n}\) to summarize the network structure.  We suppose that \( \b g\) is symmetric, has no self-loops, and that there exists an exogenous cap \(\bar w>0\) such that \(g_{ij}\in[0,\bar w]\) for all \(i,j\). That is, \(\b g\) lies in the space
\[\mathcal G_n=\{\b g\in\mathbb R^{n\times n}:g_{ij}=g_{ji}\in[0,\bar w]\text{ for all }i,j,\text{ and }g_{kk}=0\text{ for all }k.\}.\]

Let \begin{equation}\b x=\bmat{x_1\\\vdots\\x_n},\ \b a=\bmat{a_1\\\vdots\\a_n},\ \b g=\bmat{g_{11}&\cdots&g_{1n}\\\vdots&\ddots&\vdots\\g_{n1}&\cdots&g_{nn}}.\end{equation}
In this game, \cite{bcz} showed that the players' equilibrium choices of \(\b x^*\)   satisfy
\begin{equation}
 \b x^*(\b a,\b g)=(x^*_1(\b a,\b g),\cdots,x^*_n(\b a,\b g))^T=[\b I-\phi\b g]^{-1}\b a,
 \label{eq-x*}
\end{equation}
subject to the regularity condition
whereby the largest eigenvalue of \(\phi\b g\) is less than \(1\).\footnote{ This regularity condition guarantees the existence and uniqueness of an equilibrium; see \cite{bcz}.}
 
We will later show in Remark \ref{rmk-2} that this regularity condition is satisfied for \emph{any} $\b g\in\mathcal G_n$ if and only if the following  holds: 

\begin{assumption}\label{as-1}
\[\bar w<\begin{cases}\frac{1}{\phi(n-1)},& when \, \, \phi>0;\\-\frac{2}{\phi n},& when \, \, \phi<0\text{ and }n\text{ is even};\\-\frac{2}{\phi\sqrt{n^2-1}},& when \, \, \phi<0\text{ and }n\text{ is odd}.\end{cases}\]
\end{assumption}

From \eqref{eq-pi}, each player's equilibrium payoff,  as a function of \(\b a\) and \(\b g\), is given by
 \begin{equation}
  \label{eq-pi*} 
 \pi_i(\b x^*(\b a,\b g))=(1/2) (x^*_i(\b a,\b g))^2, i\in N,
 \end{equation}
so (twice of) the total payoff is 
\begin{equation}\label{eq-v}V(\b a,\b g):=2\sum_{i=1}^n\pi_i(\b x^*(\b a,\b g))
=\sum_{i=1}^{n} (x^*_i(\b a,\b g))^2
 =\b a^T[\b I-\phi\b g]^{-2}\b a,\end{equation}
where we use \eqref{eq-pi*} in the second equality and \eqref{eq-x*} in the last equality.

Suppose that the original standalone marginal utilities and network link weights are given by \(\hb a\) and \(\hb g\). The planner is able to intervene on \(\hb a\) and \(\hb g\), and selects post-intervention utilities and network so that  \(\b a\) and \(\b g\)  maximize the players' total payoff \eqref{eq-v}.
Furthermore, we assume that this intervention comes at a quadratic cost to the planner, so the planner solves the system:
\begin{align*}
\max_{\b a\in\mathbb R^n,\ \b g\in\mathcal G_n}\quad& V(\b a,\b g;\hb g,\hb a, C)=\b a^T[\b I-\phi\b g]^{-2}\b a,\\\text{s.t.}\quad&\kappa\|\b g-\hb g\|^2+\|\b a-\hb a\|^2\leq C.\stepcounter{equation}\tag{\theequation}\label{eq-prob}
\end{align*}
Note that we allow $\hat a_i$ and $a_i$ to be negative; in this case, we can interpret $a_i$ as the price or marginal cost of consuming the activity. $C>0$ is the total budget and  \(\kappa\in(0, +\infty]\) is a parameter that measures the relative cost of intervening in \(\b g\) compared to \(\b a\).    
 The quadratic form of the intervention cost greatly simplifies computation, although we expect that qualitatively similar results hold with alternative convex costs. See  Section \ref{sec-5.1} for details of results under alternative specifications of cost functions and objective functions.\footnote{See \cite{ggg} for a related discussion.}

In the special case for  which \(\kappa=+\infty\), we recover the setting of \cite{ggg} where the planner cannot intervene in the network design; thus $\b g=\hb g$. Formally, the planner solves the problem \begin{align*}
\max_{\b a\in\mathbb R^n}\quad& V(\b a,\b g;\hb g,\hb a, C)=\b a^T[\b I-\phi\b g]^{-2}\b a\\\text{s.t.}\quad&\|\b a-\hb a\|^2\leq C, \mbox{ and } \b g=\hb g.\stepcounter{equation}\tag{\theequation}\label{eq-prob2}
\end{align*}
For any finite \(\kappa\), we will refer to the intervention with exogenous \(\b g\) in \eqref{eq-prob2} as the \emph{single intervention} and the intervention with endogenous \(\b g\) in \eqref{eq-prob}  as the \emph{joint intervention}. Consequently, we write the solution to \eqref{eq-prob2} as \(V^*_{single}(\hb g,\hb a,C)\) and the solution to \eqref{eq-prob} as \(V^*_{joint}(\hb g,\hb a,C)\).\footnote{In Section \ref{sec-5.2} we discuss another special case in which $\hb a$ is fixed and the planner can design $\b g$ optimally. }    While problem \eqref{eq-prob} tends to problem \eqref{eq-prob2} in the limit as $\kappa \rightarrow +\infty$, their solutions differ in general when \(\kappa\) is finite.
In addition, we emphasize that the dimension of the joint intervention problem is $n^2$, which is quadratic in the size of the network, whereas in the single intervention problem  \citep{ggg}, the number of variables is $n$, which grows linearly with the size of the network.

\subsection{Notations}
In this paper, for any \(p,q\in\mathbb Z^+\), we write \(\b 1_p\) as the length \(p\) vector of ones, \(\b I_p\) as the \(p\times p\) identity matrix, \(\b J_{pq}\) as the \(p\times q\) matrix of ones, and \(\b 0_p\) as the \(p\times p\) matrix of zeros. If subscripts are omitted, we assume the matrices to be of size \(n\times n\).
We denote  \(K_p\) as the complete graph represented by the adjacency matrix \(\b J_{pp}-\b I_p,\) and \(K_{p,q}\) as the complete bipartite graph represented by the adjacency matrix \(\pmat{\b 0_p&\b J_{pq}\\\b J_{qp}&\b 0_q}.\)

Finally, for any \(p\times p\) symmetric matrix \(\b m\), denote  \(\lmax(\b m)\) and \(\lambda_p(\b m)\) as the largest and smallest eigenvalues of \(\b m\), respectively. Denote further  \(\b u^1(\b m)\) and \(\b u^p(\b m)\) as the representative unit eigenvectors corresponding to \(\lmax(\b m)\) and \(\lambda_p(\b m)\), respectively.\footnote{Pick the eigenvector arbitrarily if \(\lambda_1(\b m)\) or \(\lambda_p(\b m)\) occur with multiplicity larger than 1.}

\section{Real-world applications}\label{sec-Applications}

To illustrate how the model applies in real-world contexts, it is useful to map each component of the model to concrete settings where the planner plausibly controls both the standalone utilities $a_i$ and the link weights $g_{ij}$, while agents choose only their effort levels $x_i$. We propose two leading applications, one for the case of strategic complements ($\phi>0$) and one for strategic substitutes ($\phi<0$).

\subsection{Example of strategic complements: Infrastructure Networks}
\begin{itemize}
    \item \textbf{Nodes ($i$):} Individual users of a transportation infrastructure, such as commuters in a public transit system, ride-sharing drivers, or electric vehicle (EV) owners. 
    \item \textbf{Choice ($x_i$):} Usage effort, such as frequency of taking public transport, intensity of offering rides on a platform (such as Uber or  Lyft), or the extent of EV driving. 
    \item \textbf{Payoff ($\pi_i$):} Each user benefits directly from their own usage ($a_i x_i - \tfrac{1}{2}x_i^2$) and from the participation of others. With $\phi > 0$, efforts are \emph{strategic complements}: higher usage by others raises the payoff from one’s own usage, for instance because more riders support higher service frequency, more EVs justify greater charging infrastructure, or more ride-share drivers reduce waiting times. 
    \item \textbf{Planner’s instruments:}
    \begin{itemize}
        \item Standalone incentives $a_i$: fare subsidies, EV purchase incentives, or bonuses for ride-sharing participation. 
        \item Link weights $g_{ij}$: expansions of transit lines, deployment of charging stations for EV, or platform design that strengthens how strongly one user’s participation benefits others. 
    \end{itemize}
    \item \textbf{Why links are exogenous for the agents?} The infrastructure network (transit lines, charging station locations, or ride-matching algorithms) is centrally planned and exogenous to individual users. Agents can only decide how much to use the system, not how the network is structured. 
\end{itemize}

This setting naturally illustrates \emph{strategic complements} ($\phi>0$), where the utility of using the infrastructure rises with the intensity of others’ usage. Examples include public transit demand increasing with ridership density \citep{small2007urban}, network effects in ride-sharing platforms \citep{cachon2017role}, and EV adoption being reinforced by charging infrastructure and peer usage \citep{hardman2018review}.
	
\subsection{Example of strategic substitutes: Infrastructure Networks}

\begin{itemize}
    \item \textbf{Nodes ($i$):} Individual drivers (or commuter groups) using a city’s transportation system. 
    \item \textbf{Choice ($x_i$):} Driving effort, such as the number of trips taken, distance traveled, or commuting frequency. 
    \item \textbf{Payoff ($\pi_i$):} Drivers benefit from mobility (reaching destinations, accessing jobs and services), modeled by $a_i x_i - \tfrac{1}{2}x_i^2$. Congestion externalities enter through $\phi \sum_j g_{ij} x_i x_j$: with $\phi < 0$, efforts are \emph{strategic substitutes}, since higher driving by others lowers the payoff from one’s own driving through increased congestion. 
    \item \textbf{Planner’s instruments:}
    \begin{itemize}
        \item Standalone incentives $a_i$: fuel taxes, parking fees, congestion charges, or subsidies for public transport. 
        \item Link weights $g_{ij}$: road expansions, dedicated bus lanes, traffic-light coordination, or zoning policies that affect how strongly one driver’s trips generate congestion for others.\footnote{In this case,  the planner can  impose negative marginal utilities ($a_i<0$) to reduce congestion.} 
    \end{itemize}
    \item \textbf{Why links are exogenous for the agents?} The set of interactions between drivers is dictated by the physical road network. Drivers cannot choose who they share road segments with; these interdependencies ($g_{ij}$) are determined by infrastructure and city planning, making the network exogenous. 
\end{itemize}

This formulation provides a clear example of \emph{strategic substitutes} ($\phi<0$), since congestion implies that more driving by others reduces the marginal return to one’s own trips \citep{wardrop1952road,pigou1920economics,roughgarden2002bad}. The same reasoning extends to other infrastructure settings, such as electricity grids, water distribution, or pipeline networks, where agents choose usage but not the underlying connectivity.

\section{Analysis}\label{sec-3}

In this section, we provide two complementary approaches to characterize the planner's program \eqref{eq-prob}.  The first approach uses standard variational analysis to pin down the necessary optimality conditions for any candidate solution. In the second approach, we reformulate program \eqref{eq-prob} as a two-stage program, in which, in the first stage, the planner implements a post-intervention network $\b g$, and, then, in the second stage, selects the optimal post-intervention standalone marginal utilities $\b a$ subject to the adjusted budget (after subtracting the intervening cost of implementing $\b g$). Exploiting several key results in \cite{ggg} in the second stage regarding the optimal $\b a^*$ with an exogenous network $\b g$ and the shadow price of the budget,\footnote{When $\b g$ is given, the problem in the second step is precisely the optimal targeting intervention problem as in \cite{ggg}.} we are able to gain insights into the optimal network endogenously selected by the planner in the first stage.

\subsection{A variational approach}

To obtain the optimal intervention, we first want to determine the marginal increase in the players' total payoff from interventions in both the standalone marginal utilities and the network. Define \[\theta\triangleq\frac{\partial V}{\partial \b a}\text{ and }\xi\triangleq\frac{\partial V}{\partial \b g}\] as the marginal benefits of intervening in $\b a$ and \(\b g\) respectively. Further define the matrix \(\b M=[\b I-\phi\b g]^{-1}\) as in \cite{bcz}. 
\begin{lemma}\label{lem-mb} The marginal benefits \(\theta\) and \(\xi\) are given by the following equations:
\begin{align}
    &\theta=2\b M^2\b a,\label{eq-theta}
    \\&\xi=\phi\b M\b a\b a^T\b M^2+\phi\b M^2\b a\b a^T\b M.\label{eq-xi}
\end{align}
\end{lemma}

Both expressions are obtained by differentiating  \(V=\b a^T\b M^2\b a\) with respect to $\b a$ and \(\b g\), respectively.  %Explicit derivations using matrix calculus can be found in Lemma \ref{lem-dir} in the Appendix.
Observe that \eqref{eq-theta} can be rewritten using the equilibrium condition \eqref{eq-x*} as \(\theta=\b M\b x\). Therefore, for any \(i\), the marginal benefit of increasing player \(i\)'s utility is 
\[\theta_i=\sum_{k=1}^n m_{ki}x_k=\sum_{k=1}^n m_{ki}b_k(\b g,\b a),\] 
where \(b_k(\b g,\b a)\) represents the Katz-Bonacich centrality of player \(k\) in   network \(\b g\) with weights \(\b a\). The marginal benefits can thus be seen as a weighted sum of the Katz-Bonacich centralities across the network.

To simplify the analysis, we follow the methods proposed in \cite{ggg},  which decomposes the intervention \(\b a\) into orthogonal principal components of \(\b g\) that  are determined by the network and are ordered according to their associated eigenvalues. Let \(\lambda_1>\cdots>\lambda_n\) be the eigenvalues of \(\b g\),\footnote{We make the generic assumption that the eigenvalues of \(\b g\) are distinct.} and let \(\{\b u_1,\cdots,\b u_n\}\) be an orthonormal basis of \(\mathbb R^n\) such that each \(\b u_k\) is an eigenvector of \(\b g\) with corresponding eigenvalue \(\lambda_k\). Then there exists unique scalars \(\rho_1,\cdots,\rho_n\) such that \[\b a=\sum_{k=1}^n \rho_k\b u_k.\]
Alternatively, let $\rho_k\triangleq \b a^T \b u_k.$ 
Using this decomposition, equation \eqref{eq-theta} becomes 
\[\theta=2\b M^2\sum_{k=1}^n \rho_k\b u_k=2\sum_{k=1}^n\frac{\rho_k}{(1-\phi\lambda_k)^2}\b u_k.\] 
Similarly, equation \eqref{eq-xi} can be written as
\[\xi=2\phi\sum_{k,l=1}^n\frac{\rho_k}{1-\phi\lambda_k}\frac{\rho_l}{(1-\phi\lambda_l)^2}\b u_k\b u^T_l.\]

Finally, the optimal intervention can be determined by equating these marginal benefits with the respective marginal costs of the intervention. Letting \(\mu\) represent the shadow price of the budget, that is, \(\mu=\frac{\partial V^*}{\partial C}\), we can compute the marginal cost of intervention at \(\b a\) to be \(2\mu(\b a-\hb a)\), while the corresponding marginal cost of intervention at \(\b g\) is \(2\mu\kappa(\b g-\hb g)\). However, we have the constraint that \(\b g\) does not have self-loops, so our first order constraint only holds for the off-diagonal entries of \(\b g\). %For simplicity, we thus define the following notation: For any two square matrices \(\b h\) and \(\b h'\) of the same size, \[\b h\stackrel{D}{=}\b h'\text{ if and only if } h_{ij}=h'_{ij}\text{ for all }i\neq j.\]

By summarizing the principal component analysis above, we can  write the original standalone marginal utilities as \(\hb a=\sum_{k=1}^n\hat\rho_k\b u_k\). Note that \(\mu\) appears as the marginal costs to capture the trade-off in allocating the budget between intervening in \(\b a\) and  \(\b g\). The conditions for optimality are thus summarized as follows:

\begin{theorem}\label{prop-1} The solution to the system \eqref{eq-prob} must satisfy
\begin{align}
    &\label{eq-foc1}\frac{\rho_k}{(1-\phi\lambda_k)^2}\b u_k=\mu(\rho_k-\hat\rho_k)\b u_k, \, \,  \text{ for all }k, \tag{A1}\\
    &\label{eq-foc2}\sum_{k,l=1}^n\frac{\rho_k}{1-\phi\lambda_k}\frac{\rho_l}{(1-\phi\lambda_l)^2}\phi(\b u_k\b u^T_l)_{ij}\left\{\begin{matrix}
        =\mu\kappa(\b g^*-\hb g)_{ij}, & g_{ij}^*\in(0,\bar w); \\
        \leq \mu\kappa(\b g^*-\hb g)_{ij}, & g_{ij}^*=0; \\
        \geq \mu\kappa(\b g^*-\hb g)_{ij}, & g_{ij}^*=\bar w, 
    \end{matrix}
    \right. \text{ for all $i\neq j$,} \tag{A2}\\
    &\label{eq-foc3}\sum_{k=1}^n(\rho_k-\hat\rho_k)^2+\kappa\|\b g^*-\hb g\|^2=C,\tag{A3}
\end{align}
where \((\b u_k,\lambda_k)\) are the eigenpairs of \(\b g^*\), in decreasing order of eigenvalues, while \(\rho_k,\hat \rho_k\) are the magnitudes of the projections of \(\b a^*\) and \(\hb a\) to \(\b u_k\).
\end{theorem}

The first two equations, \eqref{eq-foc1} and \eqref{eq-foc2}, are the first-order conditions with respect to \(\b a\) and \(\b g\),   while the third equation, \eqref{eq-foc3}, is the first-order condition associated with the budget constraint, which must bind at the optimal intervention. These conditions are stated in the eigen-space of $\b g$.

More generally, this theorem characterizes the optimal intervention using a variational approach. The planner must allocate a fixed budget between modifying standalone marginal utilities and altering the network structure. The optimal solution aligns the direction of intervention with the principal components (eigenvectors) of the network. Intuitively, the planner allocates resources to the directions in which the network most effectively amplifies individual incentives, depending on whether strategic interactions are complements or substitutes.

\subsection{The optimal intervention}

We begin by considering the case where \(\hb a=0\), or equivalently, \(\hat\rho_k=0\) for all \(k\). Equation  \eqref{eq-foc1} reduces to 
\[\rho_k\left(\frac{1}{(1-\phi\lambda_k)^2}-\mu\right)=0,\]
for all \(k\). Clearly, this implies that \(\rho_k\neq0\) for at most one value of \(k\). Following \cite{ggg}, the maximizer occurs when \(\rho_1\neq0\) when \(\phi>0\), and at \(\rho_n\neq0\) when \(\phi<0\). Correspondingly, the shadow price of the budget will be \[\mu=\begin{cases}\frac{1}{(1-\phi\lambda_1)^2},&\phi>0;\\\frac{1}{(1-\phi\lambda_n)^2},&\phi<0.\end{cases}\]

Using the above, we can now fix \(\b a\) to be in the direction of \(\b u_1(\phi\b g)\), and maximize the total payoff over all choices of \(\b g\). That is, we write the value function as \begin{equation}\label{eq-f}f(C;\hb g,\kappa)\triangleq V^*_{joint}(\hb g,\hb a=0,C)=\sup_{\b g}\frac{C-\kappa\|\b g-\hb g\|^2}{(1-\lambda_1(\phi\b g))^2}.\end{equation}

Note that for fixed \(\b g\), the expression \(\frac{C-\kappa\|\b g-\hb g\|^2}{(1-\lambda_1(\phi\b g))^2}\) is linear in \(C\). Hence \(f\) is the supremum of a set of linear functions, so \(f\) itself is convex in \(C\). 
Furthermore, the tangent of \(f\) at \(C\), by Envelope Theorem,  is equal to the shadow price of the budget, given by \(\mu^*=(1-\lambda_1(\phi\b g^*))^{-2}\). Here \(\mu^*\) is increasing in \(\lambda_1(\phi\b g^*)\). Therefore, we also obtain that \(\mu^*\) and \(\lambda_1(\phi\b g^*)\) are increasing in \(C\).

To determine \(\b g^*\), we now consider the first order condition with respect to \(\b g\). When \(\phi>0\) and \(g^*_{ij}\in(0,\bar{w})\),\footnote{The case \(\phi<0\) is similar.} we have only \(\rho_1\neq0\), so condition \eqref{eq-foc2} in Theorem \ref{prop-1} reduces to \[\frac{\phi\rho_1^2}{(1-\phi\lambda_1)^3}(\b u_1\b u_1^T)_{ij}=\mu\kappa(\b g^*-\hb g)_{ij}=\frac{1}{(1-\phi\lambda_1)^2}\kappa(\b g^*-\hb g)_{ij}.\]

That is, the optimal intervention on the network structure is proportional to the outer product of the first eigenvector with itself. Consequently, the degree of intervention is larger for links between nodes of high eigen-centrality. Furthermore, note that \(\rho_1^2=\|\b a^*\|^2=C-\kappa\|\b g^*-\hb g\|^2\) by condition \eqref{eq-foc3} in Theorem \ref{prop-1}. We summarize the above results in the following proposition:

\begin{proposition}\label{th-1}
    Suppose \(\hb a=0\) and Assumption \ref{as-1} holds. 

(a) \(\b a^*\) is in the direction of \(\b u_1(\phi\b g^*)\).
    
(b) \(V^*_{joint}(\hb g,\hb a,C)\) is convex in \(C\).

(c) \(\mu^*\) and \(\lambda_1(\phi\b g^*)\) are increasing in \(C\). 

Denote \(\b u_1(\phi\b g):=(u^1_1,\cdots,u^1_n)\). If the solution \(\b g^*\) is interior, then 

(d) \begin{equation}\label{eq-th1d}
        g^*_{ij}-\hat g_{ij}=\frac{\phi(C-\|\b g^*-\hb g\|^2)}{\kappa(1-\lambda_1(\phi\b g))}u^1_iu^1_j\, \, \text{ for all }i\neq j,
    \end{equation}
and \\
(e) \begin{equation}\kappa\|\b g^*-\hb g\|^2=\frac{\phi^2(C-\kappa\|\b g^*-\hb g\|^2)^2}{\kappa(1-\lambda_1(\phi\b g^*))^2}\left(1-\sum_{i=1}^n(u^1_i)^4\right).\label{eq-th1e}\end{equation}

\end{proposition}

When the  standalone marginal utilities are negligible, the planner allocates the entire budget to modifying the network. The first three results, (a), (b), and (c), are important technical findings. Result (d) is more intuitive and interesting.   When all marginal standalone utilities are zero, and given a fixed intervention budget for $\b a$ and network structure $\b g$, \cite{ggg} show that the optimal $\b a^*$ aligns with the direction of $\b u_1(\phi \b g)$. Therefore, the planner's problem becomes:
\[
\sup_{\b g} \frac{C - \kappa\|\b g - \hb g\|^2}{(1 - \lambda_1(\phi \b g))^2}.
\]
Thus, the optimal solution must satisfy the first-order condition (FOC), which for any $i \neq j$ is given by:
\[
\underbrace{-\kappa(g_{ij} - \hat{g}_{ij})}_{\text{costs of changing } g_{ij}} + \underbrace{(C - \kappa\|\b g - \hb g\|^2)\frac{\phi u_i^{1} u_{j}^{1}}{\kappa(1 - \lambda_1(\phi \b g))}}_{\text{benefits from increasing } \lambda_1 \text{ by changing } g_{ij}} = 0,
\]
which is exactly what Proposition \ref{th-1}(d) states.

Furthermore, result (d) characterizes the optimal intervention by relating it to the principal components of the network, showing that the degree of intervention in the strength of the link between two players is proportional to the product of their eigenvector weights. In other words, the higher the eigenvector weights of two agents, the greater the weight assigned to them by the planner. We clarify this result by considering the following ratio:
\begin{equation}
\frac{g^*_{ij}-\hat{g}_{ij}}{g^*_{ik}-\hat{g}_{ik}} = \frac{u^1_j}{u^1_k}, \label{eq-th1}
\end{equation}
for any \(g^*_{ij}, g^*_{ik} \in (0, \bar{w})\) such that \(u^1_k \neq 0\). Equation \eqref{eq-th1} shows that the ratio of interventions depends solely on the relative components of the eigenvector of \(\b g^*\).

Since the assumption \(\hat{\b a} = \b 0\) in Proposition \ref{th-1} is somewhat restrictive, we now consider the case of a general \(\hat{\b a}\). When the budget \(C\) is sufficiently large, \(\hat{\b a}\) plays a diminishing role, and the results in Proposition \ref{th-1} hold approximately. Moreover, by applying the triangle inequality, we can bound and approximate \(V^*(\hat{\b g}, \hat{\b a}, C)\) by analyzing the case where \(\hat{\b a} = 0\).

\begin{proposition}\label{lem-ext2} Fix \(\hb a,\hb g\), and \(\kappa\). If $C\geq \|\hb a\|^2$,
\begin{align}
V^*(\hb g,0,(\sqrt{C}-\|\hb a\|)^2)\leq V^*(\hb g,\hb a,C)\leq V^*(\hb g,0,(\sqrt{C}+\|\hb a\|)^2).
\end{align}
Moreover, the optimal solution to the problem \(\max V(\b a,\b g; \hb g,\hb a = 0,(\sqrt{C}-\|\hb a\|)^2)\) is a feasible  intervention for the problem \(\max V(\b a,\b g;\hb g,\hb a,C)\) and 
$$
\frac{V^*(\hb g,\hb a,C)}{V^*(\hb g,0,(\sqrt{C}-\|\hb a\|)^2)} \leq 1 + \frac{4\sqrt{C}\|\hb a\|}{(\sqrt{C}-\|\hb a\|)^2-\kappa\|\b g^{*}-\hb g\|^2},
$$
where $\b g^*$ is the optimal network to the problem $\max V(\b a,\b g;\hb g, \hb a=0,(\sqrt{C}+\|\hb a\|)^2)$.
\end{proposition}

Proposition \ref{lem-ext2}   shows that even when initial standalone utilities are not zero, the general intervention problem can be well approximated by solving the simplified version with zero initial utilities and adjusted budget. The approximation error becomes negligible as the budget increases. The intuition is that, for large budgets, the planner's optimal action is dominated by the network's structural features rather than initial heterogeneities.

Specifically, Proposition \ref{lem-ext2}  provides lower and upper bounds for the equilibrium utility based on the case \(\hat{\b a} = \b 0\) and a variation in the budget. The intuition behind this result is that, by the triangle inequality, any feasible intervention to the problem 
$\max V(\b a, \b g; \hat{\b g}, \hat{\b a} = 0, (\sqrt{C} - \|\hat{\b a}\|)^2)$ 
(resp. \(\max V(\b a, \b g; \hat{\b g}, \hat{\b a}, C)\)) is also feasible for the problem 
$\max V(\b a, \b g; \hat{\b g}, \hat{\b a}, C)$ 
(resp. \(\max V(\b a, \b g; \hat{\b g}, \hat{\b a} = 0, (\sqrt{C} + \|\hat{\b a}\|)^2)\)). Therefore, the optimal solution to 
\[
\max V(\b a, \b g; \hat{\b g}, \hat{\b a} = 0, (\sqrt{C} - \|\hat{\b a}\|)^2)
\] 
serves as a good approximation to the general problem \(\max V(\b a, \b g; \hat{\b g}, \hat{\b a}, C)\), and the approximation ratio can be bounded by the ratio 
\[
\frac{f((\sqrt{C} + \|\hat{\b a}\|)^2; \hat{\b g}, \kappa)}{f((\sqrt{C} - \|\hat{\b a}\|)^2; \hat{\b g}, \kappa)}.
\]

Proposition \ref{lem-ext2} thus shows that, while deriving a closed-form solution for the general \(V^*\) is challenging, it can be bounded by studying the case \(\hat{\b a} = 0\), with minimal compromise on the budget \(C\), as characterized by Proposition \ref{th-1}. Furthermore, since \(\kappa \|\b g^* - \hat{\b g}\|^2\) is constant, Proposition \ref{lem-ext2} provides a tractable method (with the optimal solution also described in Proposition \ref{lem-ext2}) to approximate the general joint intervention problem within a factor of \(1 + \mathcal{O}\left(\frac{1}{\sqrt{C}}\right)\).\footnote{We write \(f(x) = \mathcal{O}(g(x))\) if there exist constants \(N\) and \(c\) such that for any \(x > N\), \(f(x) \leq c\, g(x)\).} This also yields the convergence rate, illustrating how Proposition \ref{th-1} holds approximately.

% \hl{[JJ: to Liu Yi,  is it better to state  this new paragraph as a formal result in the end of this subsection, given R2 was very keen on this point?]}
The approximation ratio $1+\mathcal{O}(\tfrac{1}{\sqrt{C}})$ is the best that can be achieved when using the joint intervention problem with $\hb a = \b 0$ to approximate the general problem $\max V (\b a,\b g; \hb g, \hb a, C)$. The intuition is that two factors may affect this ratio: (i) the portion of the budget discarded to ensure feasibility, and (ii) heterogeneity in standalone marginal utilities. However, Propositions \ref{lem-ext2} and \ref{prop-instance} show that only the latter matters, which is unavoidable given our approximation philosophy. In fact, Proposition \ref{prop-instance} constructs a case where, even without any discarded budget (whereas our approximation of discarding budget is of order $\mathcal{O}(\sqrt{C})$), the ratio cannot be improved beyond $1+\mathcal{O}(\tfrac{1}{\sqrt{C}})$.

\begin{proposition}\label{prop-instance}
There exist $(\hb g,\hb a)$ such that
\[
    \frac{V^*(\hb g,\hb a,C)}{V^*(\hb g,0,C)} = 1+\Theta\!\left(\frac{1}{\sqrt{C}}\right).\footnote{We write \(f(x) = \Theta(g(x))\) if there exist constants \(N\), \(c_1>0\), and \(c_2>0\) such that for all \(x > N\), \(c_1 g(x)\leq f(x) \leq c_2\, g(x)\).}
\]
\end{proposition}

To illustrate, consider the general problem $\max V (\b a,\b g; \hb g = \bar{w}K_n, \hb a = (1,1,\ldots,1)^{\top}, C)$. Its exact value function is
\[
   V^*(\hb g,\hb a)=\frac{n}{(1-(n-1)\phi\bar{w})^2}\left(1+\sqrt{\tfrac{C}{n}}\right)^2.
\]
In contrast, for the restricted problem $\max V (\b a,\b g; \hb g, \hb a=\b 0, C)$, the exact value is
\[
   V^*(\hb g,\b 0)=\frac{C}{(1-(n-1)\phi\bar{w})^2}.
\]
Thus, the ratio equals
\[
   \frac{n+2\sqrt{nC}+C}{C}=1+\Theta\!\left(\tfrac{1}{\sqrt{C}}\right),
\]
showing that the approximation ratio in Proposition \ref{lem-ext2} is asymptotically optimal.

Next, using the approximation results (Proposition \ref{lem-ext2}), we return to the case $\hb a = \b 0$ and provide insights into the network structure when the budget is sufficiently large. Recall that the eigenvector \(\b u_1(\phi \b g^*)\) corresponds to the largest eigenvalue \(\lambda_{\max}(\b g^*)\) under strategic complementarity (\(\phi > 0\)), and to the smallest eigenvalue \(\lambda_{\min}(\b g^*)\) under strategic substitution (\(\phi < 0\)). Optimal interventions in the network differ dramatically depending on the sign of \(\phi\). We have:

\begin{enumerate}
\item[(i)]
When \(\phi > 0\), by the Perron-Frobenius theorem, the signs of \(u^1_i\) are identical for all \(i\), implying that \(g_{ij}^* > \hat{g}_{ij}\) for all \(i, j\) by \eqref{eq-th1d}. That is, the planner does not reduce the weight of any link. Intuitively, this suggests that, as the budget increases, the optimal graph tends toward the complete graph (we will formalize this observation later).

\item[(ii)]
When \(\phi < 0\), we can partition the players into two subsets\footnote{We ignore nodes with \(u^1_i = 0\), since their weights remain unchanged by Proposition \ref{th-1}.}:
\[
S^+ = \{i : u^1_i > 0\}, \quad S^- = \{j : u^1_j < 0\}.
\]
The planner increases the weights of links across the two sets while reducing the weights of links within each set. Specifically, by \eqref{eq-th1d},
\[
g_{ij}^* - \hat{g}_{ij} =
\begin{cases}
> 0 & \text{if } (i \in S^+,\, j \in S^-) \text{ or } (i \in S^-,\, j \in S^+), \\
< 0 & \text{if } i, j \in S^+ \text{ or } i, j \in S^-.
\end{cases}
\]
\end{enumerate}
We illustrate  the latter  in Figure \ref{fig-2}. 
As the budget grows large, the links across the sets \(S^+\) and \(S^-\) increase to \(\bar w\), while the links within the sets decrease to \(0\). This reveals a tendency for the network to take  a complete bipartite structure as $C$ increases.

\begin{figure}[H]
\begin{center}
\begin{tikzpicture}[
roundnode/.style={circle, draw=black!100, very thick, minimum size=20mm}
]
\node[roundnode] (1) at (0,0) {}; 
\node at (0,-1.5) {\(S^+\)};
\node[roundnode] (2) at (6,0) {}; 
\node at (6,-1.5) {\(S^-\)};
\draw[blue] (1.east)--(2.west) node[midway,above] {\(g_{ik}\uparrow\)};
\draw[red] (1.north west)..controls +(-1.5cm,0.5cm) and +(-1.5cm,-0.5cm) .. (1.south west) node[midway, left]{\(g_{ij}\downarrow\)};
\draw[red] (2.north east)..controls +(1.5cm,0.5cm) and +(1.5cm,-0.5cm) .. (2.south east) node[midway, right]{\(g_{kl}\downarrow\)};
\end{tikzpicture}
\end{center}
\caption{Changes in edge weights}

\label{fig-2}
\end{figure}

%\end{enumerate}

\subsection{The case of large budgets}\label{sec-3.3}
We now formally analyze the case where the planner's budget is large. In this regime, joint intervention leads to significant differences. We begin by showing that, in this case, the optimal network always takes a simple form.

\begin{theorem}\label{th-2}
Suppose \(\bar w\) satisfies Assumption \ref{as-1}. 

(a) If \(\phi>0\), then there exists \(\overline C \) such that for all \(C>\overline C\), \[\b g^*(C)=\bar wK_n.\]

(b) If \(\phi<0\), then there exists \( \overline{\overline{C}}\) such that for all \(C>\overline{\overline{C}}\), \[\b g^*(C)\cong\bar wK_{\lfloor\frac{n}{2}\rfloor,\lceil\frac{n}{2}\rceil}.\footnote{Given two graphs \(H\) and \(H'\) on \(p\) vertices, we say \(H\) is isomorphic to \(H'\) \((H\cong H')\) if there exists a permutation \(\sigma\) on \(\{1,\cdots,p\}\) such that \(h_{ij}=h'_{\sigma(i)\sigma(j)}\) for all \(i,j\).}\]

\end{theorem}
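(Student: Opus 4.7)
The strategy is to exploit the two-stage decomposition in equation \eqref{eq-4} together with the asymptotic scaling of the single-intervention value function given by Proposition \ref{prop-ggg}. For any fixed $\b g\in\c G_n$, Proposition \ref{prop-ggg} yields $V^*_{single}(\b g,\hb a,C)/C\to 1/(1-\lambda_1(\phi\b g))^2$ as $C\to\infty$, and since the intervention cost $\kappa\|\b g-\hb g\|^2$ is uniformly bounded on the compact set $\c G_n$, the leading-order joint-intervention problem reduces to
\[
\lim_{C\to\infty}\frac{V^*_{joint}(\hb g,\hb a,C)}{C}\;=\;\max_{\b g\in\c G_n}\frac{1}{(1-\lambda_1(\phi\b g))^2}.
\]

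I would then identify the maximizers of this limiting problem. For $\phi>0$ the objective is an increasing function of $\phi\lambda_1(\b g)$, which by Fact \ref{fact-1}(a) is uniquely attained in $\c G_n$ at $\b g=\bar wK_n$, with maximum $\phi\bar w(n-1)$. For $\phi<0$ the objective is a decreasing function of $\lambda_n(\b g)$; Lemma \ref{lem-graph} together with Fact \ref{fact-1}(b) identifies $\bar wK_{\lfloor n/2\rfloor,\lceil n/2\rceil}$ (unique up to graph isomorphism) as the minimizer, since among complete bipartite graphs $K_{p,q}$ on $n$ vertices the smallest last eigenvalue $-\bar w\sqrt{pq}$ is attained when $p$ and $q$ are as balanced as possible.

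The central obstacle is upgrading this asymptotic statement into exact equality for every $C$ beyond a finite threshold. I would handle it in two layers. First, a global separation argument confines $\b g^*(C)$ to a small neighborhood of the extremal: outside any such neighborhood the limit objective falls below its maximum by at least some fixed $\delta>0$, so the welfare shortfall from choosing such a $\b g$ is $\Theta(\delta C)$, which dominates the $O(1)$ saving in intervention cost. Second, inside the neighborhood I would apply the optimality conditions of Lemma \ref{lem-foc}(ii). The left-hand side there scales like $\|\b a^*\|^2\cdot u^1_iu^1_j/(1-\lambda_1(\phi\b g^*))^3 = \Theta(C)$, using that $\|\b a^*\|^2\sim C$ by the budget identity and that $\b a^*$ is approximately parallel to $\b u^1(\phi\b g^*)$ by \eqref{eq-a*}, whereas the right-hand side is at most $2\mu^*\kappa\bar w = O(1)$ because $\mu^*$ stays bounded via \eqref{eq-mu}. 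This imbalance rules out interior links for $C$ large, forcing $g^*_{ij}\in\{0,\bar w\}$ for every $i\neq j$. The sign of $\phi\cdot u^1_iu^1_j$ at the limiting eigenvector then picks the corner: for $\phi>0$ the Perron--Frobenius theorem makes all $u^1_iu^1_j$ share one sign and sends every link to $\bar w$, yielding $\bar wK_n$; for $\phi<0$ the last-eigenvalue eigenvector in Fact \ref{fact-1}(b) has two opposite constant signs, producing a bipartite cut that must also be balanced by the earlier identification of the extremal.

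The most delicate technical point is ensuring that the eigen-centrality factors $u^1_iu^1_j$ stay bounded away from zero as $C\to\infty$, so that the $\Theta(C)$ growth of the LHS of Lemma \ref{lem-foc}(ii) is genuinely uniform over all pairs. In both cases the extremal principal eigenvector has entries of constant magnitude---$1/\sqrt{n}$ for $\bar wK_n$, and $\pm 1/\sqrt{2p},\pm 1/\sqrt{2q}$ for the balanced bipartite graph after normalization---and the corresponding extremal eigenvalue is simple, so continuity of the principal eigenvector in a neighborhood of the extremal supplies the needed uniform lower bound and closes the argument.
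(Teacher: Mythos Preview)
Your proposal is correct and follows essentially the same route as the paper: the two-stage decomposition with Proposition \ref{prop-ggg} to isolate the eigenvalue-maximizing limit problem, then the scaling argument from Lemma \ref{lem-foc}(ii) to rule out interior link weights for large $C$, with Lemma \ref{lem-graph} pinning down the extremal network. Your treatment is in fact more explicit than the paper's own proof about the global separation step and the eigenvector continuity/simplicity needed to keep $u^1_iu^1_j$ uniformly bounded away from zero; one minor correction is that the uniqueness of $\bar wK_n$ as the maximizer when $\phi>0$ should be attributed to Lemma \ref{lem-graph}(i) rather than Fact \ref{fact-1}(a).
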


Theorem \ref{th-2} identifies the optimal network architecture for large budgets. With strategic complements ($\phi > 0$), the planner benefits from reinforcing mutual connections, leading to a complete network  \(K_n\). With strategic substitutes ($\phi < 0$), the planner minimizes redundant interactions, favoring a complete bipartite network \(K_{\lfloor \frac{n}{2} \rfloor, \lceil \frac{n}{2} \rceil}\). These simple structures are optimal because they maximize or minimize the key spectral value (the largest or smallest eigenvalue), which drives the equilibrium multiplier.
We have:\footnote{We write \(f(x)=o(x)\) if for any \(\epsilon>0\) there exists \(x_0\) such that \(|f(x)|<\epsilon x\) for all \(x>x_0\).} 
\begin{align}V^*_{joint}(\hb g,\hb a,C)&=\max_{\b g\in\mathcal G_n}V_{single}^*(\b g,\hb a,C-\kappa\|\b g-\hb g\|^2).\notag\\
&=\max_{\b g\in\mathcal G_n}\frac{1}{(1-\lmax(\phi\b g))^2}(C-\kappa\|\b g-\hb g\|^2)+o(C).\notag\\
&=\max_{\b g\in\mathcal G_n}\frac{1}{(1-\lmax(\phi\b g))^2}C+o(C).\label{eq-o}
\end{align}
The first equality follows from reinterpreting program \eqref{eq-prob} as a sequential maximization problem, while the last equality holds since the cost of network design, \(\kappa \|\b g - \hat{\b g}\|^2\), is bounded. For sufficiently large \(C\), the dominant term in expression \eqref{eq-o} is governed by the \emph{social multiplier} \(\frac{1}{(1 - \lmax(\phi \b g))^2}\), which is increasing in \(\lambda_1(\phi \b g)\). That is, if \(\b g\) and \(\b g'\) are two networks such that \(\lmax(\phi \b g') > \lmax(\phi \b g)\), then 
\[
V_{\text{single}}^*(\b g', \hat{\b a}, C - \kappa \|\b g' - \hat{\b g}\|^2) > V_{\text{single}}^*(\b g, \hat{\b a}, C - \kappa \|\b g - \hat{\b g}\|^2)
\]
whenever \(C\) is sufficiently large. Consequently, as the budget tends to infinity, the largest eigenvalue of the optimal network under joint intervention, \(\lmax(\phi \b g^*)\), must approach the maximal possible value among all \(\b g \in \mathcal{G}_n\).\footnote{Furthermore, the above argument can be strengthened to show that the optimal network must coincide with either the complete or the complete balanced bipartite graph for sufficiently large \(C\), as stated in the theorem; the technical details are provided in the Appendix.} 

When \(\phi > 0\), we have \(\lambda_1(\phi \b g^*) = \phi \lambda_1(\b g^*)\), while when \(\phi < 0\), we have \(\lambda_1(\phi \b g^*) = \phi \lambda_n(\b g^*)\). Hence, depending on the sign of \(\phi\), we either seek the network maximizing the largest eigenvalue \(\lmax\) or minimizing the smallest eigenvalue \(\lmin\). Lemma \ref{lem-graph} in the Appendix characterizes the eigenvalue-maximizing networks.

The largest eigenvalue of a nonnegative matrix is monotone in its entries (see, for instance, the Perron–Frobenius theorem). Therefore, the largest \(\lambda_1(\b g)\) is achieved when \(\b g\) corresponds to \(\bar{w} K_{n}\). The problem of finding the smallest possible \(\lambda_n\) in the case of unweighted graphs has been studied by \cite{bkd}.\footnote{See also \cite{constantine}.} In particular, \cite{bkd} show that for any unweighted graph \(\b g\) on \(n\) vertices, $\lambda_n(\b g) \geq \lambda_n\left(K_{\lfloor \frac{n}{2} \rfloor, \lceil \frac{n}{2} \rceil}\right).$ 
In Lemma \ref{lem-graph}, we show that a similar argument can be extended to general weighted graphs.
As a by-product, Lemma \ref{lem-graph}  justifies our choice of bounds in Assumption \ref{as-1}. We have:
\begin{remark}\label{rmk-2}
\(\lmax(\phi\b g)<1\) for all \(\b g\in\mathcal G_n\) whenever Assumption \ref{as-1} holds.
\end{remark}

That is, the regularity condition \(\lmax(\phi\b g)<1\) is satisfied for any choice of intervention by the planner, ensuring that the players' equilibrium exists.

To complete our analysis of the optimal joint intervention, we now characterize the optimal choice of \(\b a^*\), which is the main focus of \cite{ggg}. Remark \ref{rmk-3} follows by applying equation \eqref{eq-foc1} to the post-intervention network characterized in Theorem \ref{th-2}, with corresponding eigenvectors \(\b u_1(\phi \b g^*)\) given in Fact \ref{fact-1}.

\begin{fact}\label{fact-1}
\begin{enumerate}
\item[(a)] The largest eigenvalue of \(K_n\) is \(\lmax(K_n) = n-1\), with corresponding eigenspace \(\mathrm{span}\{(1,1,\dots,1)\}\).
\item[(b)] The smallest eigenvalue of \(K_{\lfloor \frac{n}{2} \rfloor, \lceil \frac{n}{2} \rceil}\) is 
\[
\lambda_n\left(K_{\lfloor \frac{n}{2} \rfloor, \lceil \frac{n}{2} \rceil}\right) = -\sqrt{\lfloor \tfrac{n}{2} \rfloor \lceil \tfrac{n}{2} \rceil},
\]
with corresponding eigenspace
\[
\mathrm{span}\{( 
\underbrace{\sqrt{\lceil \tfrac{n}{2} \rceil}, \dots, \sqrt{\lceil \tfrac{n}{2} \rceil}}_{\lfloor \frac{n}{2} \rfloor \text{ terms}},
\underbrace{-\sqrt{\lfloor \tfrac{n}{2} \rfloor}, \dots, -\sqrt{\lfloor \tfrac{n}{2} \rfloor}}_{\lceil \frac{n}{2} \rceil \text{ terms}}
) \}.
\]
\end{enumerate}
\end{fact}
This fact presents spectral properties of the complete and complete bipartite graphs. These structures attain the extreme eigenvalues (largest or smallest), which makes them optimal for maximizing the planner's objective under large budgets. The associated eigenvectors are also simple: uniform for the complete graph, and two-level for the bipartite case.

\begin{remark}\label{rmk-3}
Suppose \(\bar{w}\) satisfies Assumption \ref{as-1}.
\begin{enumerate}
\item[(a)] If \(\phi > 0\), then there exists \(\xi \in \{1, -1\}\) such that
\[
\lim_{C \to \infty} \frac{\b a^*(C) - \hat{\b a}}{\sqrt{C}} = \frac{\xi}{\sqrt{n}} \b 1_n.
\]
\item[(b)] If \(\phi < 0\), then there exists a sequence \((c_i)\) with \(c_i \to \infty\), and a choice of eigenvector \(\b u_n\left(K_{\lfloor \frac{n}{2} \rfloor, \lceil \frac{n}{2} \rceil}\right)\) in the eigenspace of 
\(\lambda_n\left(K_{\lfloor \frac{n}{2} \rfloor, \lceil \frac{n}{2} \rceil}\right) = -\sqrt{\lfloor \tfrac{n}{2} \rfloor \lceil \tfrac{n}{2} \rceil}\),
such that
\[
\lim_{i \to \infty} \frac{\b a^*(c_i) - \hat{\b a}}{\sqrt{c_i}} = \b u_n\left(K_{\lfloor \tfrac{n}{2} \rfloor, \lceil \tfrac{n}{2} \rceil}\right).
\]
\end{enumerate}
\end{remark}

This remark explains how the optimal standalone utilities converge in direction as the budget grows. For $\phi > 0$, all agents are treated symmetrically and receive equal boosts. For $\phi < 0$, the intervention splits the population into two groups with opposite signs, reflecting the optimal bipartition. This aligns with the dominant eigenvector of the corresponding optimal network.

Thus far, we have established that when \(\phi\) is negative, \(\b g^*(C)\) converges to the complete balanced bipartite graph for large \(C\). The remaining issue is to identify the optimal partition into two balanced subsets---that is, subsets of (approximately) equal size. When the standalone marginal utilities are identical, i.e., \(\hat{a}_i = \hat{a}_j\) for all \(i, j\), the optimal partition minimizes the cost of intervention in the network weights, as all nodes are otherwise symmetric. However, we show that this problem is computationally difficult even in this special case.

\begin{proposition}\label{prop-np}
When \(\hat a_i=\hat a_j\) for all \(i,j\), the configuration problem of choosing the optimal partition of \(\b g^*\) that maximizes total payoffs is NP-hard.
\end{proposition}

This result underscores the computational complexity of determining the optimal bipartite partition in the case of strategic substitutes. Even when all agents are initially symmetric, finding the best division is NP-hard, as it corresponds to a constrained maximum cut problem. This highlights a practical limitation for implementing optimal interventions in large systems.

Specifically, given a sufficiently large budget \(C\), Theorem \ref{th-2} tells us that \(\b g^*\) must be isomorphic to \(K_{\lfloor\frac{n}{2}\rfloor,\lceil\frac{n}{2}\rceil}\), with value function \[V_{joint}^*(\hb g,\hb a,C)=\mu\|\b a^*\|^2=\frac{1}{(1-\phi\lmin(K_{\lfloor\frac{n}{2}\rfloor,\lceil\frac{n}{2}\rceil}))^2}(\sqrt{C-\kappa\|\b g^*-\hb g\|^2}+\|\hb a\|)^2.\]
Therefore, \(V^*_{joint}(\hb g,\hb a,C)\) is maximized when \(\kappa\|\b g^*-\hb g\|^2\) is minimized. Letting \(S\) be a part of the partition of \(\mathcal N\) induced by \(\b g^*\) with size \(\left\lfloor\frac{n}{2}\right\rfloor\), we see that \(\b g^*\) minimizes \[\|\b g^*-\hb g\|^2=\sum_{i,j\in S}\hat g_{ij}^2+\sum_{i,j\notin S}\hat g_{ij}^2+\sum_{\substack{i\in S\\j\notin S}}(\bar w-\hat g_{ij})^2=\|\hb g\|^2+2\left\lfloor\frac{n}{2}\right\rfloor\left\lceil\frac{n}{2}\right\rceil\bar w^2-2\bar w\sum_{i\in S, j\notin S}\hat g_{ij}.\]
Recall the definition of the weight of a cut \(S\subset\mathcal N\) as \[Cut(S)=\sum_{i\in S, j\notin S}\hat g_{ij},\]
so that the orientation \(\b g^*\) of \(K_{\lfloor\frac{n}{2}\rfloor,\lceil\frac{n}{2}\rceil}\) is the one that maximizes \(Cut(S)\). Aside from the constraint on the size of \(S\), this is similar to the nonnegative weighted maximum cut problem (MAX-CUT), which is known to be NP-hard \citep{karp}. In the Appendix, we complete the proof of NP-hardness of the orientation problem by showing reducibility from the constrained version. The homogeneous standalone marginal utilities here is a special case of the general joint intervention problem. Therefore, we also show that the general problem \eqref{eq-prob} is at least NP-hard.

\subsection{Simulations for intermediate budgets}

In this section, we present simulation results to illustrate Theorem \ref{th-2}. The original standalone marginal utilities and the initial network are given in the following example. We plot the optimal networks under intermediate budgets for the cases \(\phi > 0\) (\(\phi = 0.05\)) and \(\phi < 0\) (\(\phi = -0.05\)), respectively. All other parameter settings are provided in Example \ref{ex-2}.

\begin{example}\label{ex-2}
Let \(n=8,\kappa=0.25,\ \bar w=1\), and \[\hb a=\pmat{1 \\ 0.2\\0.1667 \\0.1333 \\ 0.1\\ 0.0667\\0.0333\\0}, ~~ \hb g=\pmat{0&0.5&0.5&0.5&0.5&0.5&0.5&0.5 \\
0.5&0&0&0&0&0&0&0\\0.5&0&0&0&0&0&0&0\\  0.5&0&0&0&0&0&0&0\\0.5&0&0&0&0&0&0&0\\0.5&0&0&0&0&0&0&0\\0.5&0&0&0&0&0&0&0\\0.5&0&0&0&0&0&0&0}.\]

\end{example}

First, we present simulation results for the case \(\phi = 0.05\). Figures \ref{fig:sim+0} to \ref{fig:sim+50} display the optimal networks under different budget levels \(C\). In these graphs, the width of each edge reflects its weight, while the color indicates the rate of change relative to the original network. Gray denotes no change, blue indicates a positive change, and red indicates a negative change. The darker the color, the larger the magnitude of the change.

\begin{figure}[htbp]
\begin{subfigure}{0.45\textwidth}
\begin{center}
\includegraphics[scale = 0.25]{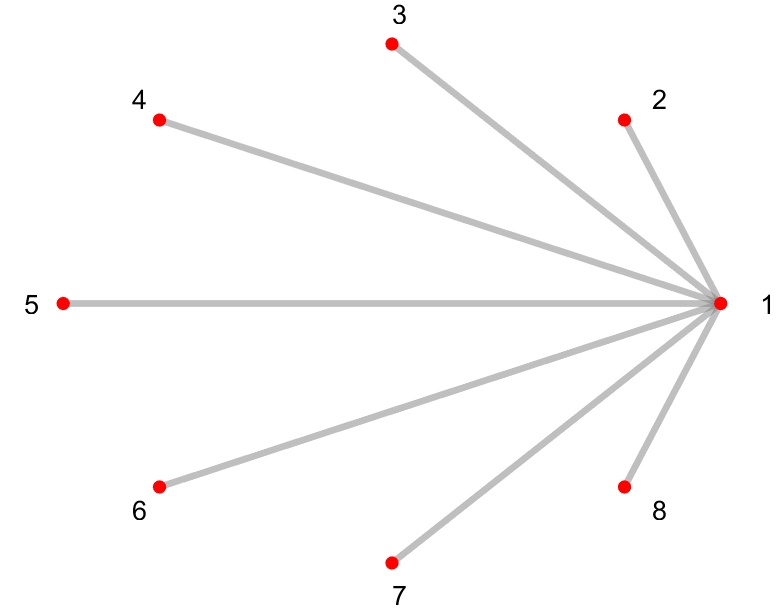}
\end{center}
\caption{\(C=0\)}\label{fig:sim+0}
\end{subfigure}
\qquad
\begin{subfigure}{0.45\textwidth}
\begin{center}
\includegraphics[scale = 0.25]{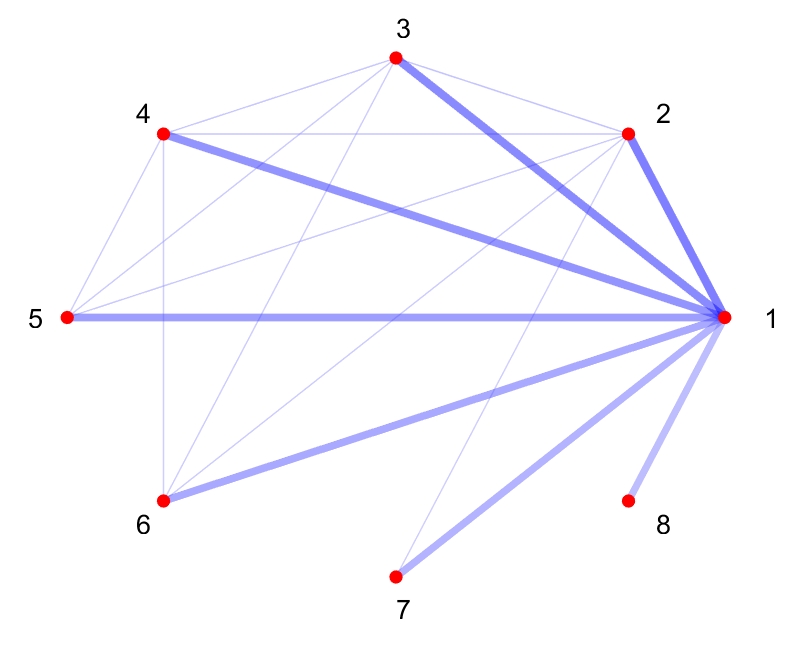}
\end{center}
\caption{\(C=1\)}\label{fig:sim+1}
\end{subfigure}
\\
\begin{subfigure}{0.45\textwidth}
\begin{center}
\includegraphics[scale = 0.25]{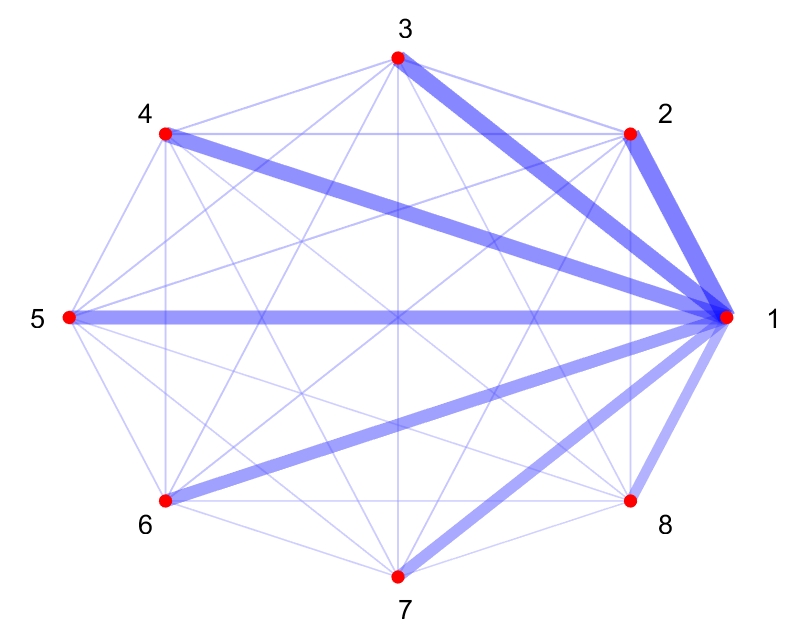}
\end{center}
\caption{\(C=6\)}\label{fig:sim+6}
\end{subfigure}
\qquad
\begin{subfigure}{0.45\textwidth}
\begin{center}
\includegraphics[scale = 0.25]{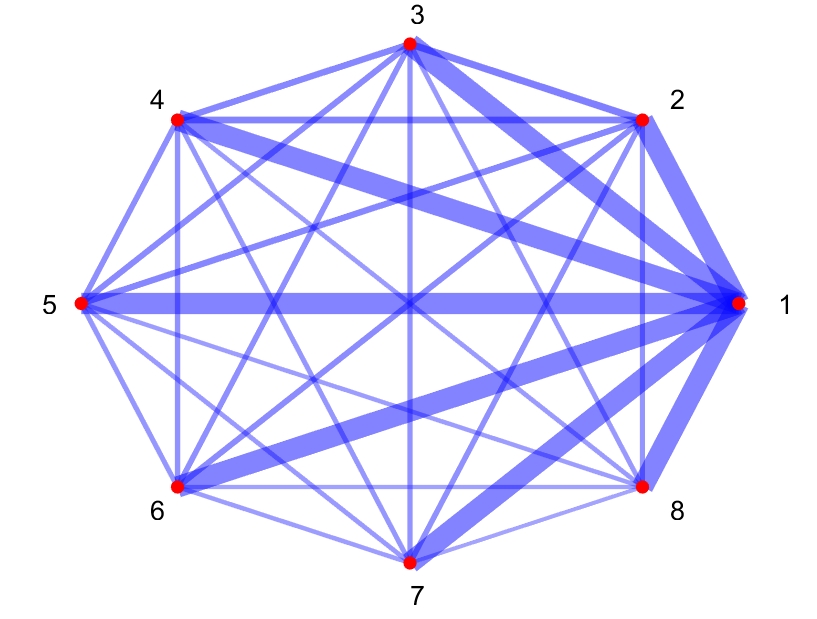}
\end{center}
\caption{\(C=20\)}\label{fig:sim+20}
\end{subfigure}
\\
\begin{subfigure}{0.45\textwidth}
\begin{center}
\includegraphics[scale = 0.25]{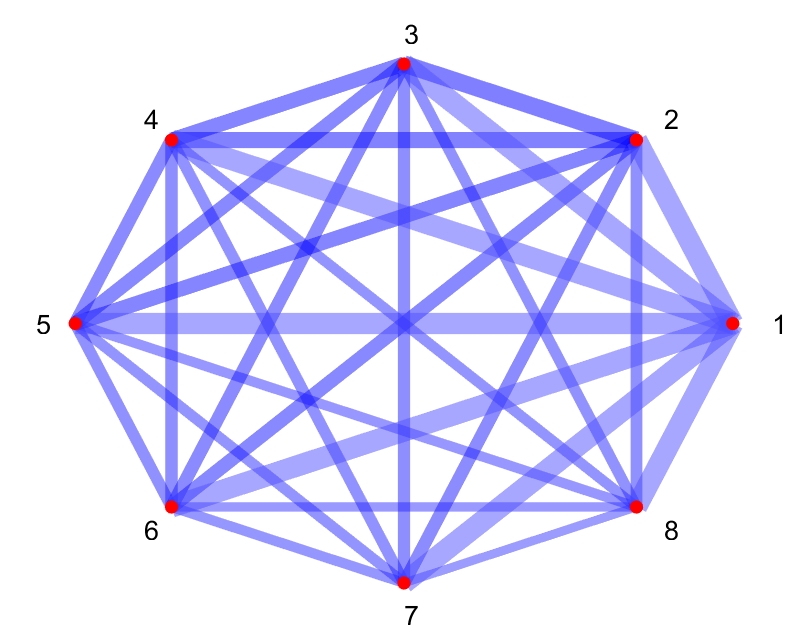}
\end{center}
\caption{\(C=30\)}\label{fig:sim+30}
\end{subfigure}
\qquad
\begin{subfigure}{0.45\textwidth}
\begin{center}
\includegraphics[scale = 0.25]{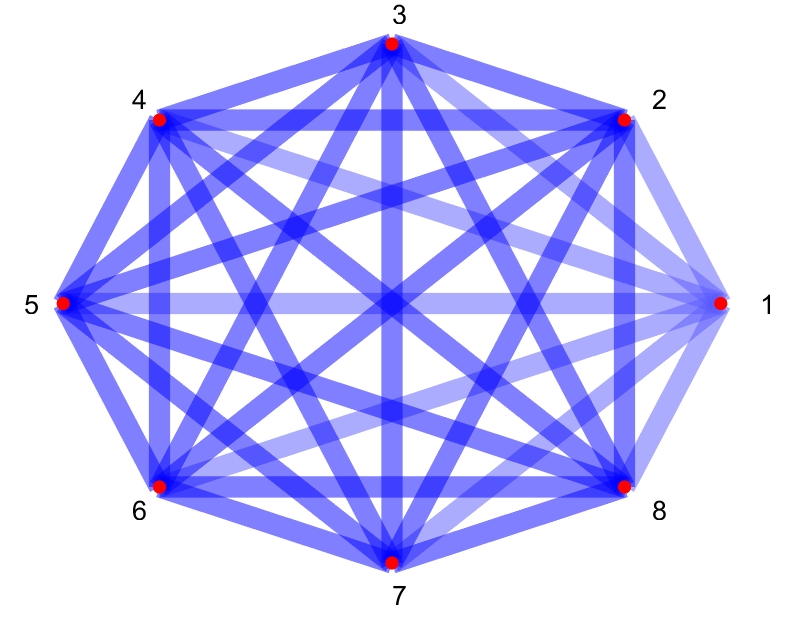}
\end{center}
\caption{\(C\geq 50\)}\label{fig:sim+50}
\end{subfigure}
\caption{Optimal Networks ($\phi=0.05$)}\label{fig:sim+g}
\end{figure}

When \(\phi > 0\), as predicted, the optimal network eventually converges to the complete network \(\bar{w} K_{n}\). In this case, each \(g_{ij}\) increases monotonically until it reaches the upper bound \(\bar{w}\), though at different rates. Figure \ref{fig-sim+a} plots the absolute value of the cosine\footnote{We use the standard notion of cosine similarity: the similarity between two vectors is given by the cosine of the angle between them in the plane they jointly define.} of the angle between \(\b a^*(C)\) and the eigenvector corresponding to the largest eigenvalue of \([I - \phi \b g^*(C)]^{-1}\). As \(C\) increases, \(\b a^*\) and \(\b g^*\) approach \((a^*, \dots, a^*)^{T}\) and \(\bar{w} K_{n}\), respectively. However, the convergence rate remains ambiguous. 

We also plot the evolution of \(\lambda_1(\phi \b g^*)\) with respect to \(C\) in Figure \ref{fig:sim+lambda}. As shown in Figure \ref{fig:sim+lambda}, \(\lambda_1\) increases monotonically, and the optimal network reaches \(\bar{w} K_n\) when \(C\) lies between 35 and 40. 

\begin{figure}[h]
\begin{subfigure}{0.45\textwidth}
\begin{center}
\includegraphics[scale=0.2]{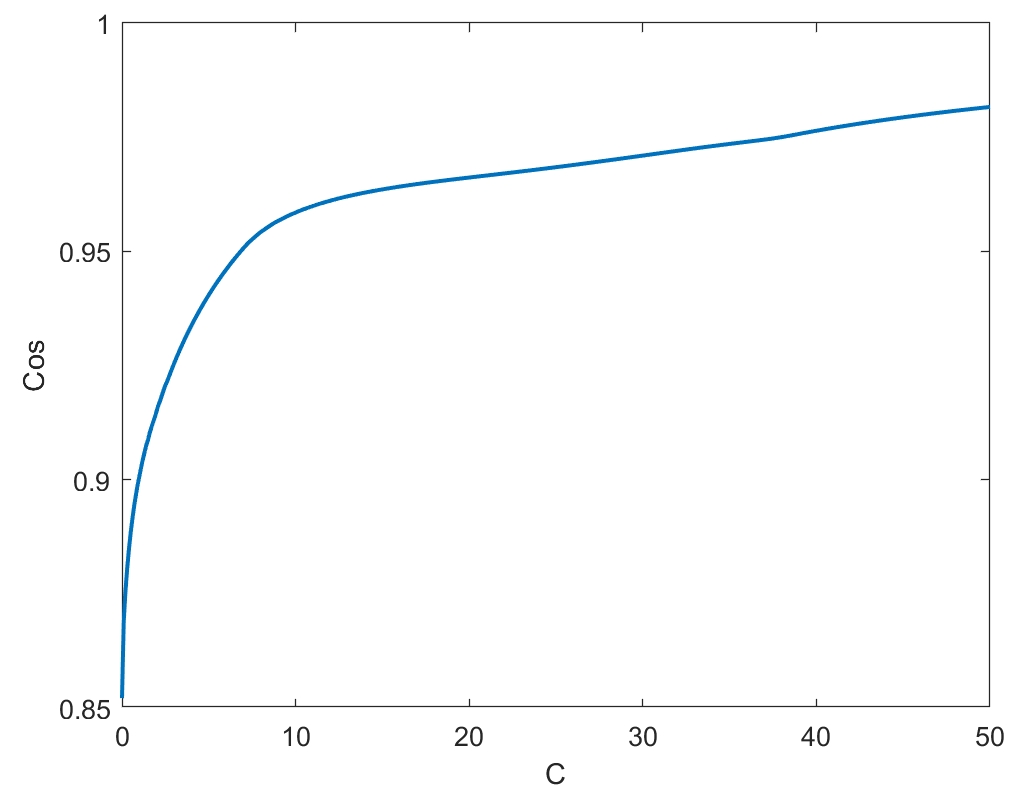}
\end{center}
\caption{The cosine of $\b a^*$ and the eigenvector.}
\label{fig-sim+a}
\end{subfigure}
\qquad
\begin{subfigure}{0.45\textwidth}
\begin{center}
\includegraphics[scale = 0.2]{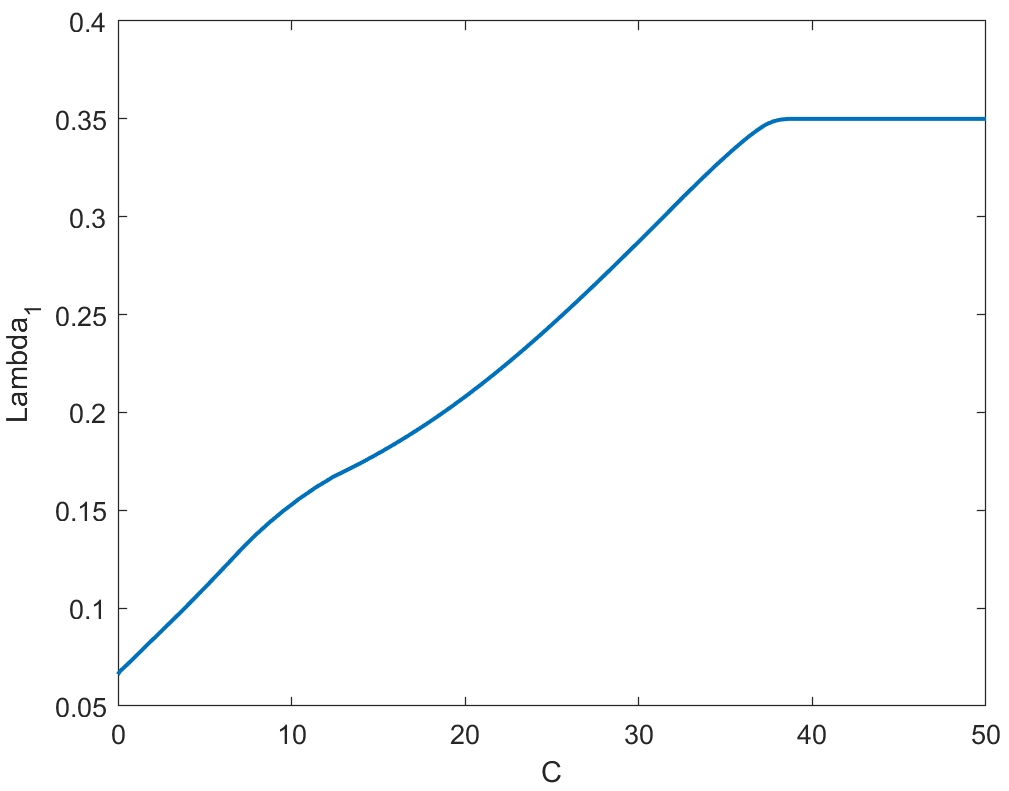}
\end{center}
\caption{\(\lambda_1(\phi\b g^*)\)).($\phi=0.05$)}\label{fig:sim+lambda}
\end{subfigure}
\caption{}
\end{figure}
Second, in Figures \ref{fig:sim-1}-\ref{fig:sim-59}, we present the simulation results and illustrate the ambiguous effect of \(C\) on the optimal network and equilibrium actions when \(\phi = -0.05 < 0\), under intermediate budget levels. 
\begin{figure}[htbp]
\begin{subfigure}{0.45\textwidth}
\begin{center}
\includegraphics[scale = 0.25]{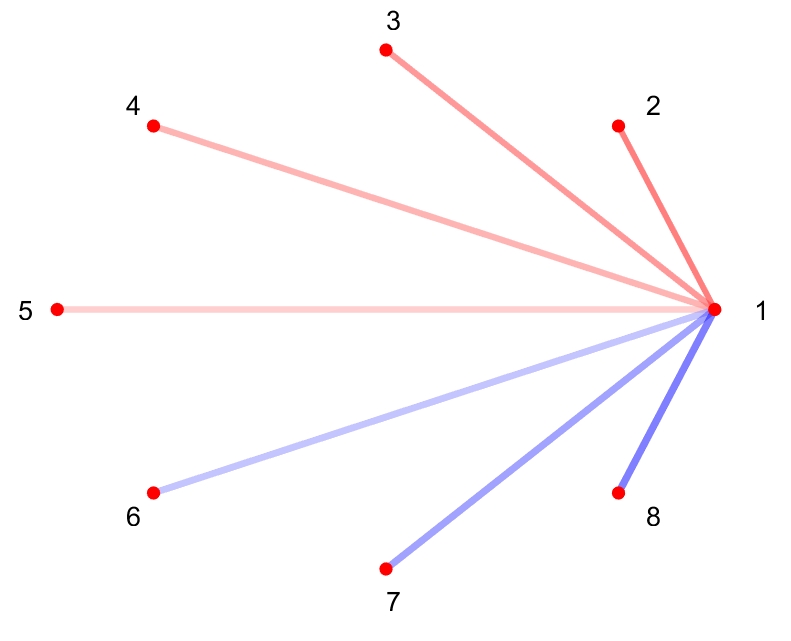}
\end{center}
\caption{\(C=1\)}\label{fig:sim-1}
\end{subfigure}
\qquad
\begin{subfigure}{0.45\textwidth}
\begin{center}
\includegraphics[scale = 0.25]{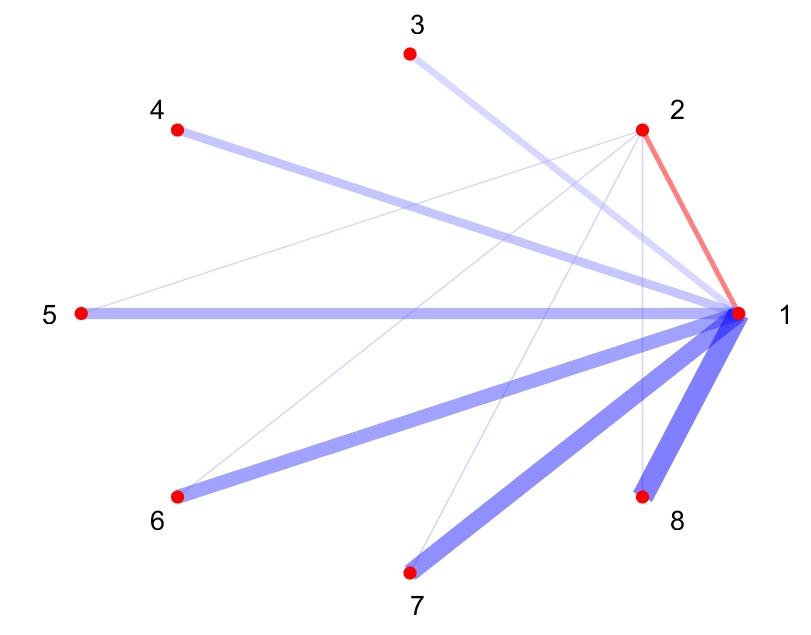}
\end{center}
\caption{\(C=8\)}\label{fig:sim-8}
\end{subfigure}
\\
\begin{subfigure}{0.45\textwidth}
\begin{center}
\includegraphics[scale = 0.25]{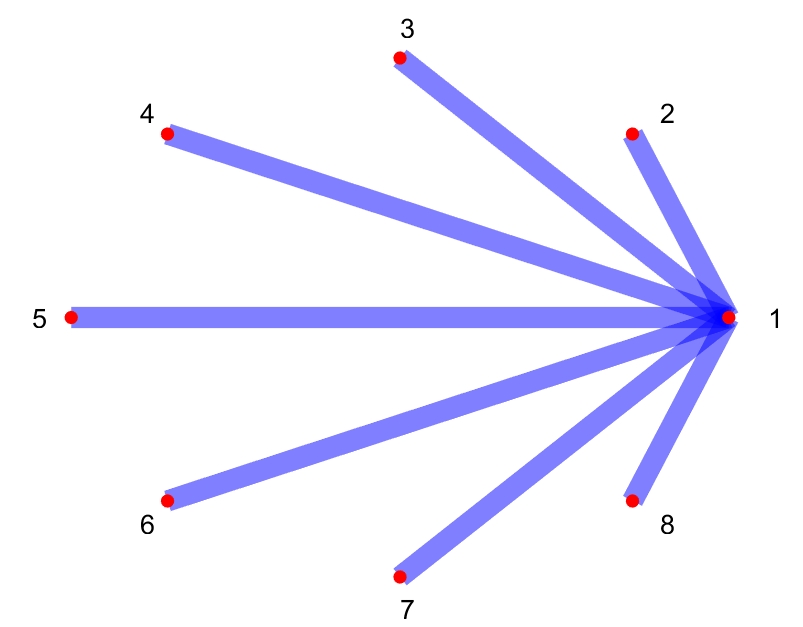}
\end{center}
\caption{\(C=34\)}\label{fig:sim-34}
\end{subfigure}
\qquad
\begin{subfigure}{0.45\textwidth}
\begin{center}
\includegraphics[scale = 0.25]{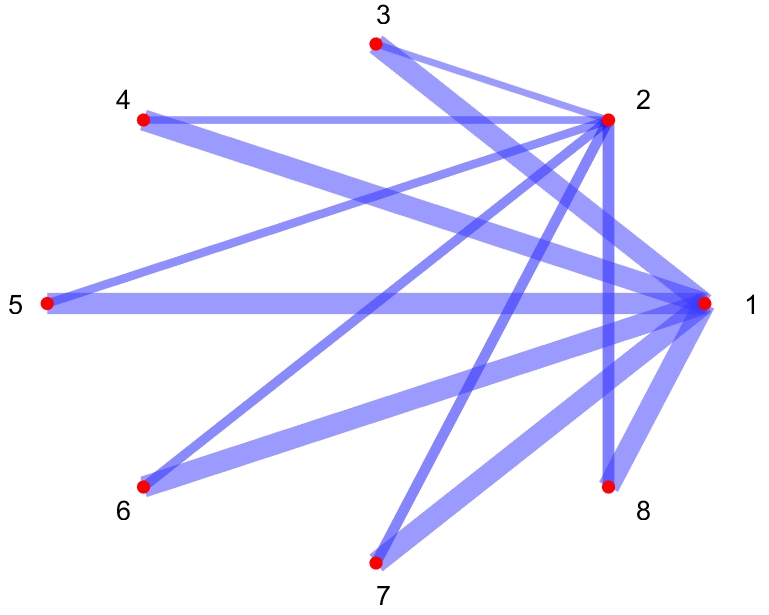}
\end{center}
\caption{\(C=35\)} \label{fig:sim-35}
\end{subfigure}
\\
\begin{subfigure}{0.45\textwidth}
\begin{center}
\includegraphics[scale = 0.25]{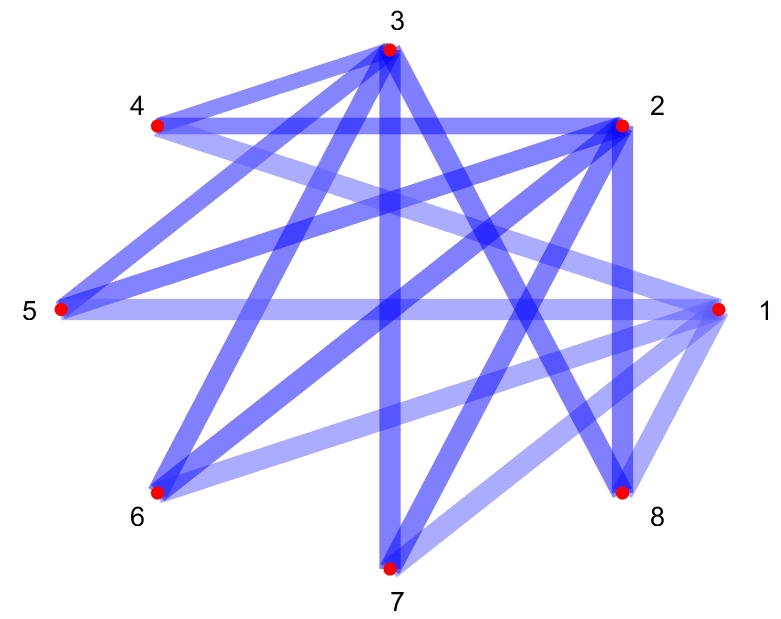}
\end{center}
\caption{\(C=43\)}\label{fig:sim-43}
\end{subfigure}
\qquad
\begin{subfigure}{0.45\textwidth}
\begin{center}
\includegraphics[scale = 0.25]{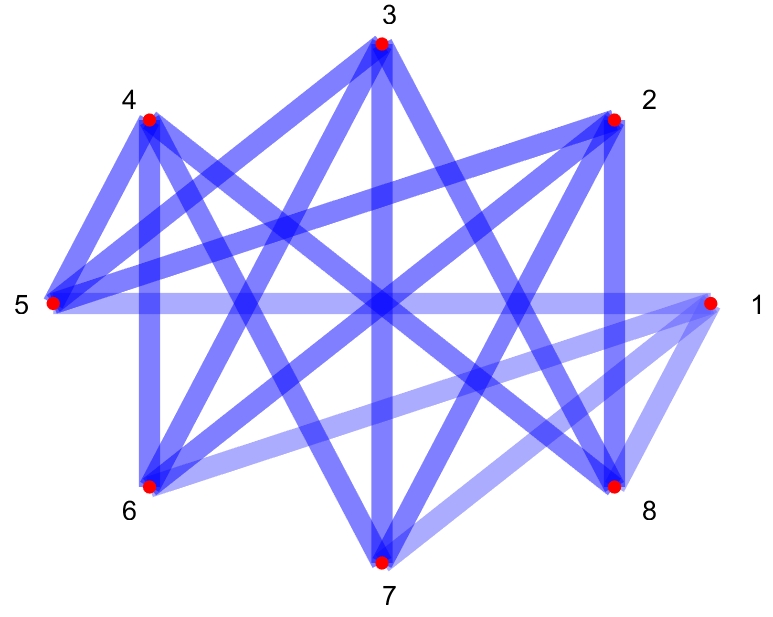}
\end{center}
\caption{\(C\geq 59\)} \label{fig:sim-59}
\end{subfigure}
\caption{Optimal Networks ($\phi=-0.05$)}
\end{figure}

When \(C\) is sufficiently large (e.g., Figure \ref{fig:sim-59}), the optimal network is isomorphic to the bi-partite network  \(\bar{w} K_{\lfloor \frac{n}{2} \rfloor, \lceil \frac{n}{2} \rceil}\). Moreover, based on the simulation results (Figures \ref{fig:sim-1} and \ref{fig:sim-8}), we observe that the weights of links \((1,5)\), \((1,4)\), and \((1,3)\) at \(C = 1\) are smaller than their corresponding weights at \(C = 0\) and \(C = 8\). Therefore, \(g_{15}^*, g_{14}^*, g_{13}^*\) are not monotonic in \(C\). More interestingly, the optimal network is isomorphic to \(\bar{w} K_{1,7}\) when \(C = 34\) (Figure \ref{fig:sim-34}), to \(K_{2,6}\) when \(C = 35\) (Figure \ref{fig:sim-35}), to \(\bar{w} K_{3,5}\) when \(C = 43\) (Figure \ref{fig:sim-43}), and finally to \(\bar{w} K_{4,4}\) when \(C = 59\) (Figure \ref{fig:sim-59}). We also observe that the optimal network tends to approach \(\bar{w} K_{2,6}\) when \(C\) is between 35 and 42.

Figure \ref{fig-sim-a} plots the absolute value of the cosine of the angle between \(\mathbf{a}^*(C)\) and the eigenvector corresponding to the largest eigenvalue of \([I - \phi \mathbf{g}^*(C)]^{-1}\). Figure \ref{fig:sim-lambda} depicts the evolution of \(\lambda_1(\phi \mathbf{g}^*)\) as a function of \(C\).

\begin{figure}[H]
\begin{subfigure}{0.45\textwidth}
\begin{center}
\includegraphics[scale=0.2]{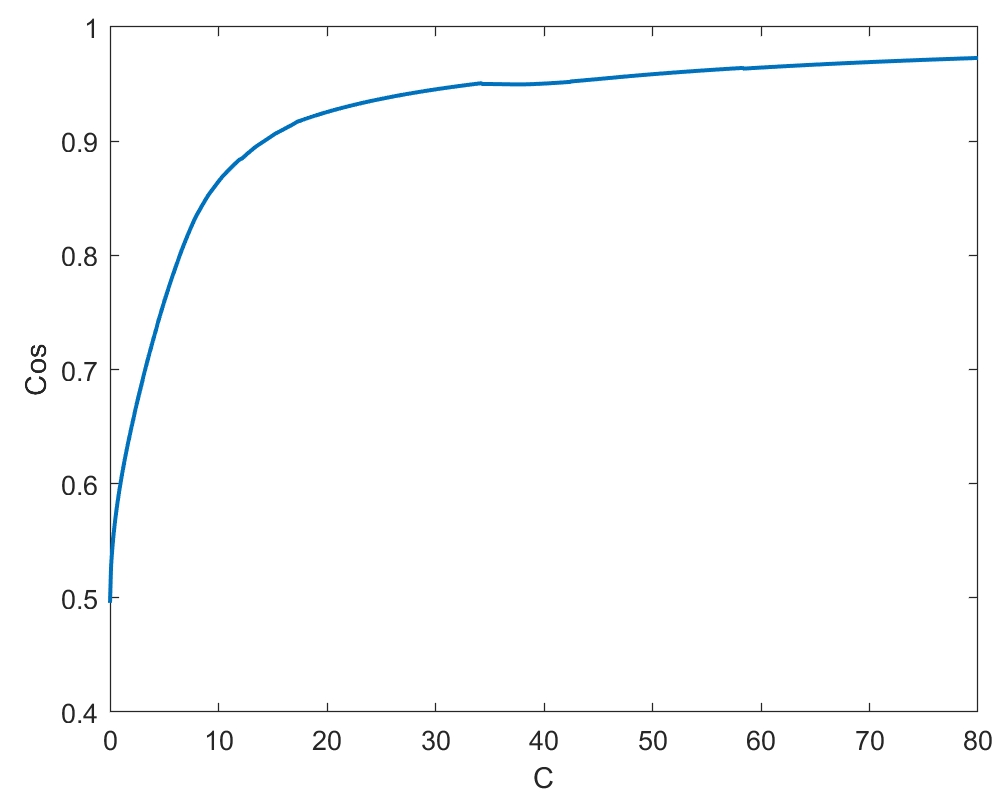}
\end{center}
\caption{The cosine of $\b a^*$ and the eigenvector.}
\label{fig-sim-a}
\end{subfigure}
\qquad
\begin{subfigure}{0.45\textwidth}
\begin{center}
\includegraphics[scale = 0.2]{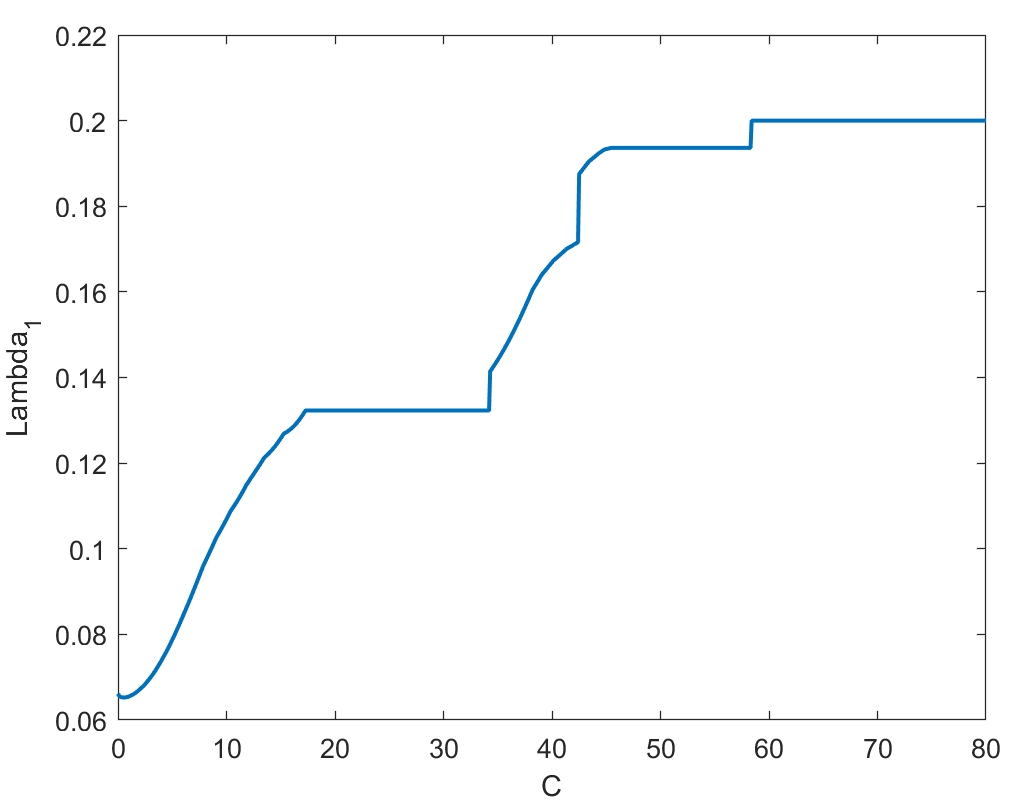}
\end{center}
\caption{\(\lambda_1(\phi\b g^*)\)).($\phi=0.05$)}\label{fig:sim-lambda}
\end{subfigure}
\end{figure}

There are two apparent stages when \(\phi < 0\). When the budget is relatively small, the optimal network is somewhat irregular and unpredictable. As shown in Figure \ref{fig:sim-lambda}, in order to balance the allocation of the budget between \(\b a^*\) and \(\b g^*\), \(\lambda_1(\phi \b g^*)\) may initially decrease, which facilitates intervention in \(\hat{\b a}\). When the budget becomes sufficiently large, the optimal network takes the form of a bipartite graph, corresponding to a local optimum. To reach the global optimum, one must compare these local optima; as \(C\) becomes large enough, \(\bar{w} K_{4,4}\), having the steepest slope, eventually becomes the globally optimal network.

There are three discontinuities in Figure \ref{fig:sim-lambda}: one occurs near \(C = 30\), another near \(C = 40\), and the third near \(C = 60\). Between \(C = 43\) and \(C = 60\), the remaining budget is entirely allocated to \(\mathbf{a}^*\) to further reduce the angle between \(\mathbf{a}^*\) and the eigenvector of \(\bar{w} K_{3,5}\). We observe that local optima arise at \(\mathbf{g} = \bar{w} K_{3,5}\) and \(\mathbf{g} = \bar{w} K_{4,4}\). Since it is less costly to intervene and adjust \(\hat{\mathbf{g}}\) to \(\bar{w} K_{3,5}\), the network \(\bar{w} K_{3,5}\) becomes globally optimal under moderate budgets. Consequently, the optimal solution exhibits discontinuities as a function of the budget \(C\). The joint intervention problem is complex due to its non-convex nature, featuring multiple local optima. Finally, Figures \ref{fig-sim+budget} and \ref{fig:sim-budget} compare the budget allocated to network interventions as a function of \(C\) for both \(\phi > 0\) and \(\phi < 0\).

\begin{figure}[H]
\begin{subfigure}{0.45\textwidth}
\begin{center}
\includegraphics[scale=0.2]{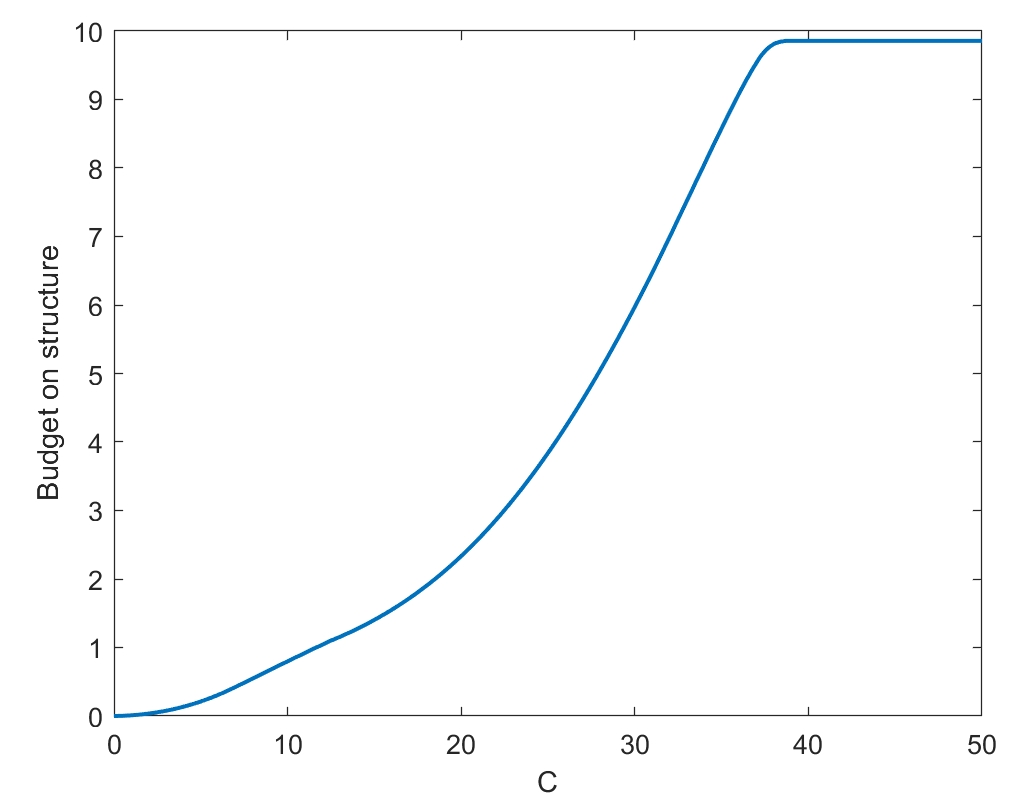}
\end{center}
\caption{Budget on the Network. ($\phi>0$)}
\label{fig-sim+budget}
\end{subfigure}
\qquad
\begin{subfigure}{0.45\textwidth}
\begin{center}
\includegraphics[scale = 0.2]{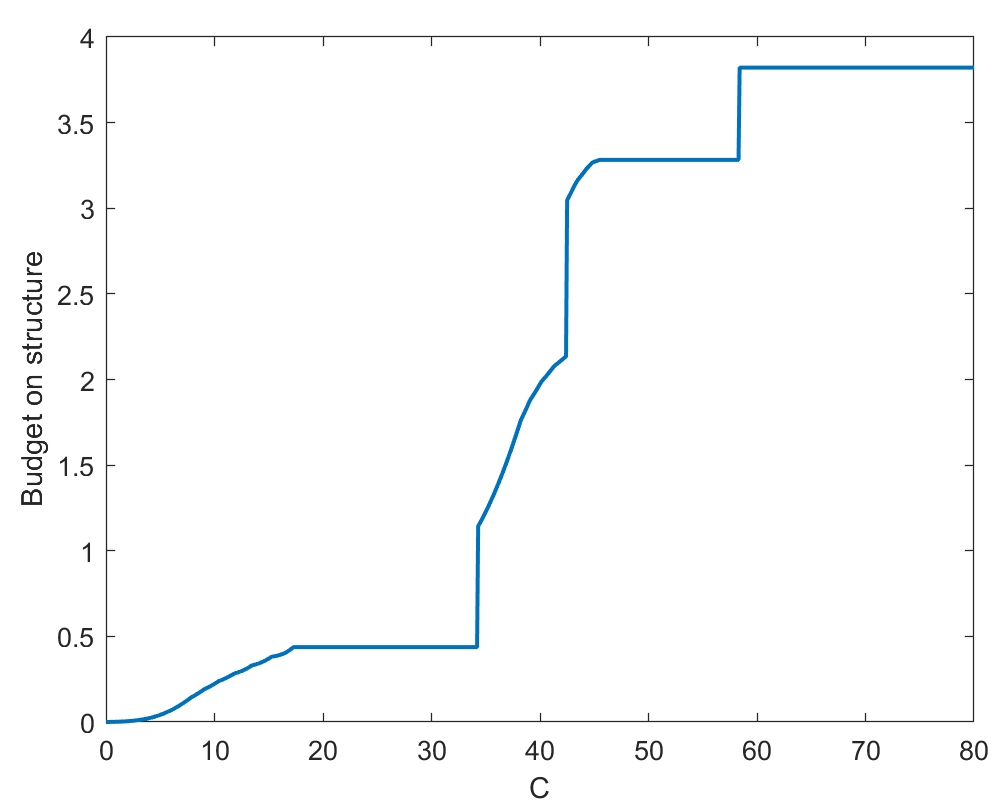}
\end{center}
\caption{Budget on the Network. ($\phi<0$)}\label{fig:sim-budget}
\end{subfigure}
\end{figure}

\section{Welfare and distributional effects}
\label{sec-4}
In this section, we analyze the effects of joint intervention on welfare and inequality. We compare the outcomes under joint and single interventions, with the single intervention serving as both a special case and an important benchmark for the analysis.

We first discuss welfare. To compare the optimal welfare under joint and single interventions, we consider the ratio
\[
r^*(\hat{\b g}, \hat{\b a}, C) = \frac{V_{\text{joint}}^*(\hat{\b g}, \hat{\b a}, C)}{V_{\text{single}}^*(\hat{\b g}, \hat{\b a}, C)}.
\]
When \(C\) is sufficiently large, the numerator of \(r^*\) is characterized by Theorem \ref{th-2}, while the denominator is characterized by \cite{ggg}. This allows us to provide an asymptotic characterization of \(r^*\) as follows.

\begin{theorem}\label{th-welfare}
Suppose Assumption \ref{as-1} holds.

(a) If \(\phi>0\), then \[\lim_{C\to\infty}r^*(\hb g,\hb a,C)=\left(\frac{1-\lambda_1(\phi\hb g)}{1-\lambda_1(\phi\bar w K_n)}\right)^2=\left(\frac{1-\phi\lambda_1(\hb g)}{1-(n-1)\phi\bar w)}\right)^2.\]

(b) If \(\phi<0\), then 
\[\lim_{C\to\infty}r^*(\hb g,\hb a,C)=\left(\frac{1-\lambda_1(\phi\hb g)}{1-\lambda_1(\phi\bar w K_{\lfloor\frac{n}{2}\rfloor,\lceil\frac{n}{2}\rceil})}\right)^2=\left(\frac{1-\phi\lambda_n(\hb g)}{1+\phi\bar w\sqrt{\left\lfloor\frac{n}{2}\right\rfloor\left\lceil\frac{n}{2}\right\rceil}}\right)^2.\]
\end{theorem}

By using the results from Theorem \ref{th-2} and Lemma \ref{lem-graph}, Theorem \ref{th-welfare} quantifies the value of the additional intervention in network design for large budgets by comparing the social multipliers under the initial network \(\hat{\b g}\) and the optimal network \(\b g^*\). Clearly, \(r^*(\hat{\b g}, \hat{\b a}, C) \geq 1\), since the planner's feasible set is larger under joint intervention than under single intervention. The expressions derived in Theorem \ref{th-welfare} also allow us to analyze how the welfare ratio is affected by the various model primitives.

\begin{corollary} \label{cor-2}Suppose Assumption \ref{as-1} holds. Then, \\
(a) \(\lim_{C\to\infty} r^*(\hb g,\hb a,C)\) is decreasing in \(\lambda_1(\phi\hb g)\).

(b) \(\lim_{C\to\infty} r^*(\hb g,\hb a,C)\) is increasing in \(|\phi|\).

(c) \(\lim_{C\to\infty} r^*(\hb g,\hb a,C)\) is independent of \(\kappa\) and \(\hb a\).
\end{corollary}

This corollary describes how model parameters shape the welfare ratio for a very large budget; it  follows directly from differentiating the expressions derived in Theorem \ref{th-welfare}. Corollary \ref{cor-2}(a) is intuitive, since the social multiplier under single intervention depends on the size of \(\lambda_1(\phi \hat{\b g})\). Corollary \ref{cor-2}(b) implies that joint intervention becomes more valuable as spillovers become stronger. Intuitively, the intensity of spillovers amplifies the effect of link modifications on players’ actions and, consequently, on total welfare. In contrast, Corollary \ref{cor-2}(c) suggests that the relative cost of link intervention plays a limited role in determining welfare, a point we further discuss in Section \ref{sec-5.1}.

What determines the distributional effects of interventions? Given two vectors \(\b v, \b w \in \mathbb{R}^n_+\) such that \(\b 1^T \b v = \b 1^T \b w = 1\), we say that \(\b v\) is more equitable than \(\b w\) (denoted \(\b v \succ_L \b w\)) if \(\b w\) majorizes \(\b v\).\footnote{That is, if we reorder the components such that \(v_{(1)} \geq \cdots \geq v_{(n)}\) and \(w_{(1)} \geq \cdots \geq w_{(n)}\), then \(\sum_{i=1}^k w_{(i)} \geq \sum_{i=1}^k v_{(i)}\) for all \(k \in \{1, \dots, n\}\).} 
Equivalently, interpreting \(\b v\) and \(\b w\) as wealth distributions, \(\b v\) Lorenz-dominates \(\b w\), which implies that \(\b v\) exhibits less inequality than \(\b w\) under standard measures such as the Gini coefficient and the Theil entropy index \citep{atkinson}. Moreover, the maximal elements under \(\succ_L\) are the vectors with equal entries, representing the case of perfect equality. 

Finally, we define the payoff distribution \(\mathcal{D}\) as the normalized vector of payoffs:
\[
\mathcal{D}^*_j(\hat{\b g}, \hat{\b a}, C) = \frac{\pi^*_j(\hat{\b g}, \hat{\b a}, C)}{\b 1^T \pi^*_j(\hat{\b g}, \hat{\b a}, C)}, \quad j \in \{\text{single}, \text{joint}\}.
\]
Since \(\b 1^T \mathcal{D}^*_j(\hat{\b g}, \hat{\b a}, C) = 1\), the relation \(\succ_L\) defines a partial order over \(\mathcal{D}^*_j(\hat{\b g}, \hat{\b a}, C)\), up to permutations of the indices.

\begin{theorem}\label{th-ineq}Suppose Assumption \ref{as-1} holds.

(a) The welfare-maximizing joint intervention achieves equality of payoffs as the budget goes to infinity when either \(\phi>0\) or \(\phi<0\) with \(n\) even.\footnote{When \(\phi<0\) and \(n\) is odd, the asymmetry in the optimal network for large budgets \(\b g^*=K_{\frac{n-1}{2},\frac{n+1}{2}}\) results in nonzero but low inequality.} That is, \(\lim_{C\to\infty}\frac{\pi^*_i}{\pi^*_j}=1\) for all \(i,j\).

(b) The welfare-maximizing joint intervention can induce a larger payoff inequality compared with single intervention.
That is, there exists a choice of parameters \(\hb g,\hb a,C\) such that \(\mathcal D^*_{single}(\hb g,\hb a,C)\succ_L\mathcal D^*_{joint}(\hb g,\hb a,C)\).
\end{theorem}

Theorem \ref{th-ineq} examines the distributional consequences of joint versus single interventions. Part (a) shows that when the budget is large enough, joint interventions can eliminate inequality entirely by producing uniform eigen-centralities. Part (b) shows that at moderate budget levels, network changes may increase inequality due to more unequal centrality distributions.

We first give an example to illustrate  Theorem \ref{th-ineq}(b).

\begin{example}\label{ex-4}
Let \(\kappa=0.5,\ \phi=0.15,\ \bar w=1,\ \hb a=\b 0\), and \[\hb g=\pmat{0&0.14&0.23&0.63&0.05\\0.14&0&0.25&0.14&0.46\\0.23&0.25&0&0.09&0.39\\0.63&0.14&0.09&0&0.11\\0.05&0.46&0.39&0.11&0}.\]
We have the following normalized payoff vectors:

\(\mathcal D^*_{single}(\hb g,\hb a,C)=\mathcal D^*_{joint}(\hb g,\hb a,0)=(0.217, 0.196, 0.188, 0.198, 0.2)^T\) for all \(C\).

\(\mathcal D^*_{joint}(\hb g,\hb a,4)=(0.199, 0.214, 0.201, 0.168, 0.219)^T\).

\(\mathcal D^*_{joint}(\hb g,\hb a,8)=(0.2, 0.2, 0.2, 0.2, 0.2)^T\).
\end{example}
We note that the inequality under single intervention is independent of \(C\), since the condition \(\hb a=\b 0\) implies that \(\b a^*\) is always an eigenvector of \(\b g^*\). Furthermore, it can be checked that 
\[
\mathcal D^*_{joint}(\hb g,\hb a,4)\succ_L\mathcal D^*_{joint}(\hb g,\hb a,0)=\mathcal D^*_{single}(\hb g,\hb a,C)\succ_L\mathcal D^*_{joint}(\hb g,\hb a,8),
\]
 illustrating that inequality increases under joint intervention when the planner's budget \(C=4\), which is where the graph intervention causes a dispersion in the  eigen-centralities of the optimal network, but inequality vanishes at the larger budget \(C=8\).\footnote{When \(C=4\), the optimal network is \(\b g^*=\pmat{0&0.71&0.80&1&0.62\\0.71&0&0.84&0.69&1\\0.80&0.84&0&0.64&0.99\\1&0.69&0.64&0&0.66\\0.62&1&0.99&0.66&0}\). When \(C=8\), the optimal network is the complete graph \(\b g^*=K_5\).}

To show Theorem \ref{th-ineq}(a), we make use of an important property of the equilibrium payoffs given in Lemma \ref{lem-2}.

\begin{lemma}\label{lem-2} Suppose the eigenspace corresponding to \(\lambda_1(\phi\b g^*)\) has dimension 1, and let \(\b u^1(\phi\b g^*)=(u_1^1,\cdots,u_n^1)\) be a representative unit eigenvector. Then
  \[\lim_{C\to\infty}\frac{\pi^*_i}{\pi^*_j}=\frac{(u^1_i)^2}{(u^1_j)^2}.\]
\end{lemma}
Intuitively, since each player's equilibrium payoff equals half of the square of their equilibrium effort, the relative payoff of two players is equal to the square of their relative Katz-Bonacich centralities. When \(C\to\infty\), \(\b a^*\)\textemdash hence, $\b x^*$\textemdash is approximately a principal eigenvector of \(\b g^*\) by equation \eqref{eq-foc1}. In other words, the relative standalone marginal utilities approximately equal the relative equilibrium efforts, which approximately equal the relative eigen-centralities---i.e.,
$
\frac{x_i^{*}}{x_j^{*}} \approx  \frac{a_i^*}{a_j^*} \approx \frac{u_i^1}{u_j^1} \text{ for all }i,j.
$
In combination, we obtain that 
  \begin{equation}\lim_{C\to\infty}\frac{\pi^*_i}{\pi^*_j}=\lim_{C\to\infty}\frac{x_i^{*2}}{x_j^{*2}}=\lim_{C\to\infty}\frac{a^{*2}_i}{a^{*2}_j}=\frac{(u^1_i)^2}{(u^1_j)^2}.\label{eq-ratio}\end{equation}
  
Consequently, the payoff inequality in the limit is solely determined by the inequality of the squared entries of the principal eigenvector \(\b u^1(\phi\b g^*)\).

Under joint intervention, when \(C\) is large, the planner selects either the complete network or the complete bipartite network, as characterized in Theorem \ref{th-2}. The principal eigenvectors of these networks are provided in Fact \ref{fact-1}. We observe that, by allowing for endogenous network formation, payoff inequality can be entirely eliminated for large \(C\) when either \(\phi > 0\) or \(\phi < 0\) and \(n\) is even, since in these cases we have \(|u^1_i(\phi \b g^*)| = \frac{1}{\sqrt{n}}\) for all \(i\).

\begin{remark} \label{rmk-5}
For large budgets, asymptotically zero inequality is achieved under single intervention if and only if \(|u^1_i(\phi \hat{\b g})| = \frac{1}{\sqrt{n}}\) for all \(i\). When \(\phi > 0\), this occurs if and only if \(\hat{\b g}\) is regular.
\end{remark}

Remark \ref{rmk-5} clarifies when the single-intervention planner can also eliminate inequality: only if the initial network is regular (equal degrees for all agents). Otherwise, even large budgets cannot achieve equal payoffs without network redesign.

\begin{remark}\label{rmk-LB}
For large budgets, since \(\pi^*_i / \pi^*_j \approx a^{*2}_i / a^{*2}_j\), joint intervention also results in approximately equal intervention levels in the standalone marginal utilities \(\b a\) across agents, whereas substantial heterogeneity in intervention may arise under single intervention depending on the principal components of \(\hat{\b g}\).
\end{remark}

This remark links the earlier lemma with intervention design. Equal payoffs imply equal efforts and hence equal standalone utilities. In contrast, if the network is fixed and irregular, effort and utility interventions must be unequal, limiting equality.

From Theorems \ref{th-welfare} and \ref{th-ineq}, we conclude that, for large budgets, allowing for joint intervention improves both total welfare and payoff equality. Therefore, a planner should implement both targeted interventions and network design in order to simultaneously achieve the dual social objectives of maximizing welfare and minimizing inequality.

However, for intermediate budgets, network changes may actually increase inequality, generating a trade-off between welfare and inequality. Focusing on the case \(\phi > 0\), such an increase in inequality is particularly pronounced when \(\hat{\b g}\) is close to being regular but not vertex-transitive.\footnote{A graph \(\b g\) is vertex-transitive if, for any two vertices \(\b v\) and \(\b v'\), there exists an automorphism \(\Psi\) on \(\b g\) such that \(\Psi(\b v) = \b v'\).} In this case, inequality at \(\hat{\b g}\) is initially low, but as the budget increases, there exists a range of \(C\) in which the welfare-maximizing network \(\b g^*\) deviates from regularity as it progressively approaches the complete graph \(K_n\), as illustrated in Example \ref{ex-4}.

\subsection{The welfare cost of equality}

Previously, we identified the potential adverse effect of joint intervention on payoff inequality for a welfare-maximizing planner. We now turn to studying the welfare cost of imposing zero payoff inequality under both single and joint interventions. To analyze this trade-off, we consider a related problem in which the planner prioritizes minimizing inequality over maximizing total welfare. Specifically, the planner solves \eqref{eq-prob} subject to the additional constraint \(\pi_i^* = \pi_j^*\) for all \(i, j\).

We show that the welfare loss from this equity constraint is negligible under joint intervention when \(n\) is large, but it can be substantial under single intervention. By Theorem \ref{th-ineq}(a), for large \(C\), if \(\phi > 0\), or \(\phi < 0\) and \(n\) is even, then \(\frac{\pi_i^*}{\pi_j^*} \to 1\), implying asymptotically zero inequality. We analyze the remaining case, where \(\phi < 0\) and \(n\) is odd, in the following lemma.

\begin{lemma}\label{lem-3}
Let \(n\geq 5\) be odd, and \(\b g\in\mathcal G_n\) such that \(|u^n_i(\b g)|=|u^n_j(\b g)|\) for all \(i,j\). Then \[\lambda_n(\b g)\geq-\bar w\left(\frac{n-1}{2}\right),\]
with equality when \[g_{ij}=\begin{cases}0,&i,j\leq \frac{n+1}{2};\\1,&i\leq \frac{n+1}{2}<j\text{ or }j\leq \frac{n+1}{2}<i;\\\frac{2}{k-3},& i,j>\frac{n+1}{2}\text{ and }i\neq j.\end{cases}\]
\end{lemma}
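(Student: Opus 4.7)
The plan is to use the eigenvalue equation under the symmetry constraint to obtain the lower bound, and then verify the proposed construction achieves it via an explicit spectral decomposition.

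First I would rescale to set $\bar w = 1$ (eigenvalues scale linearly in $\bar w$). The hypothesis $|u^n_i(\b g)| = |u^n_j(\b g)|$ for all $i,j$, combined with $\|\b u^n\|=1$, forces $u^n_i = s_i/\sqrt{n}$ for some signs $s_i\in\{+1,-1\}$. Define $S^+ = \{i : s_i = +1\}$ and $S^- = \{i : s_i = -1\}$ with cardinalities $p$ and $q = n-p$. Since $n$ is odd, we may assume without loss of generality that $p \geq q$, so $q \leq (n-1)/2$. Then I would read off the eigenvalue equation $\b g \b u^n = \lambda_n \b u^n$ at any index $i \in S^+$, which yields
\[
\sum_{j\in S^+,\,j\neq i} g_{ij} \;-\; \sum_{j\in S^-} g_{ij} \;=\; \lambda_n.
\]
Since each $g_{ij}\in[0,1]$, the first sum is non-negative and the second is at most $q \leq (n-1)/2$, giving the stated lower bound $\lambda_n(\b g) \geq -(n-1)/2$ immediately.

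For the equality statement I would exhibit the construction with $p = (n+1)/2$, $q = (n-1)/2$, and $S^+ = \{1,\dots,(n+1)/2\}$, and verify directly that $\b g \b u = -\tfrac{n-1}{2}\,\b u$. On $S^+$, the internal sum is $0$ by construction and the cross-sum equals $q = (n-1)/2$, matching $\lambda_n$. On $S^-$, the internal sum equals $(q-1)\cdot \frac{2}{n-3} = 1$ and the cross-sum equals $p$, so the left-hand side becomes $\tfrac{n+1}{2} - 1 = \tfrac{n-1}{2} = -\lambda_n$, as needed. Admissibility of weights requires $\frac{2}{n-3} \leq 1$, which forces $n \geq 5$ (hence the hypothesis).

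The main obstacle is confirming that $-(n-1)/2$ is truly the \emph{smallest} eigenvalue of this graph rather than merely some eigenvalue. I would exploit the block structure: any vector supported on $S^+$ and orthogonal to $\b 1_p$ lies in the kernel (eigenvalue $0$, multiplicity $p-1$); any vector supported on $S^-$ and orthogonal to $\b 1_q$ is an eigenvector of eigenvalue $-\tfrac{2}{n-3}$ (multiplicity $q-1$); and the remaining invariant subspace spanned by $(\b 1_p,\b 0)$ and $(\b 0,\b 1_q)$ reduces to a $2\times 2$ eigenvalue problem whose roots are $(1\pm n)/2$. Thus the full spectrum is $\{\tfrac{n+1}{2},\,-\tfrac{n-1}{2},\,0,\,-\tfrac{2}{n-3}\}$ with appropriate multiplicities. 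Since $-\tfrac{n-1}{2} \leq -\tfrac{2}{n-3}$ is equivalent to $(n-1)(n-3)\geq 4$, which holds for $n \geq 5$, this identifies $\lambda_n = -(n-1)/2$ and confirms that the associated eigenvector has the equal-magnitude form, closing the argument.
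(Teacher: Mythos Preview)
Your proof of the inequality is essentially identical to the paper's: both normalize the eigenvector to entries $\pm 1/\sqrt{n}$, pick an index on the larger side, read off the eigenvalue equation, and bound the two sums using $g_{ij}\in[0,\bar w]$ and the cardinality of the smaller side. For the equality case the paper simply asserts that ``it is easily verified that equality holds under the given choice of $\b g$,'' whereas you go further and compute the full spectrum via the block structure to confirm that $-(n-1)/2$ is actually $\lambda_n$ and not merely some eigenvalue; this extra step is correct and fills a gap the paper leaves implicit.
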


Lemma \ref{lem-3} is analogous to Lemma \ref{lem-graph}, but restricted to network structures whose smallest eigenvector has entries of equal magnitude (in absolute value). Such choices of \(\b g^*\) yield asymptotically zero inequality for large \(C\), due to the proportionalities established in \eqref{eq-ratio}. From Theorem \ref{th-2}, we know that, when \(\phi < 0\) and \(C\) is large, optimal welfare depends critically on the lower bound of \(\lambda_n(\b g)\). Therefore, comparing the lower bounds for \(\lambda_n(\b g)\) provided in Lemmas \ref{lem-graph} and \ref{lem-3} allows us to quantify the welfare cost of imposing equality.

Moreover, we observe that the ratio of these lower bounds satisfies
\[
\lim_{n \to \infty} \frac{-\bar{w} \left( \frac{n-1}{2} \right)}{-\bar{w} \sqrt{\frac{n^2 - 1}{2}}} = 1,
\]
implying that, as \(n\) grows large, the welfare loss from imposing payoff equality becomes negligible.

By contrast, under single intervention, inequality is closely tied to the structure of the initial network \(\hat{\b g}\). As a result, the planner may incur a much larger welfare loss in order to achieve equality.

\begin{proposition}\label{prop-4}
Suppose Assumption \(\ref{as-1}\) holds, \(\phi>0\) and \(\hb a=\b 0\).\footnote{For general \(\hb a\), similar results hold for \(C\to\infty\).} Then for any choice of single intervention such that \(\pi_i^*=\pi_j^*\) for all \(i,j\), the total welfare satisfies \[V_{single,eq}(\hb g,\hb a,C)\leq\frac{1}{\|[\b I-\phi\hb g]\b z\|^2},\] where \(\b z=\frac{1}{\sqrt n}\b 1_n\) is the normalized vector of ones.
\end{proposition}

Proposition \ref{prop-4} quantifies the welfare loss from enforcing equal payoffs under single intervention. It shows that if the network is not regular, enforcing equality forces the planner to use a suboptimal direction in the space of utility interventions, which reduces efficiency. This loss persists even with large budgets, unlike in the joint intervention case.

Specifically,  if \(\hat{\b g}\) is not regular, then \(\b z\) is not an eigenvector of \(\hat{\b g}\), and thus 
\begin{equation*}
 V_{\text{single,eq}}(\hat{\b g}, \hat{\b a}, C) / V_{\text{single}}(\hat{\b g}, \hat{\b a}, C)
\end{equation*}
 is strictly less than 1. In other words, under single intervention, an inequality-minimizing planner achieves a lower total payoff than a utilitarian planner whenever \(\hat{\b g}\) is not regular. 
Since the optimal joint intervention is identical for both types of planners, the gains in total payoff for the inequality-minimizing planner are even larger than those obtained by a utilitarian planner. As a result, our bounds in Theorem \ref{th-welfare} continue to apply under this lexicographic social welfare function.

In summary, we find that allowing for joint intervention not only improves welfare and reduces inequality, but also weakens the trade-off between inequality and total payoff, leading to even greater welfare improvements when the planner explicitly accounts for social inequality.

\subsection{Simulations of the welfare and equality for intermediate budgets}

Here, we provide simulation results of Example \ref{ex-2}. Figure \ref{fig:welfare-1} displays  the welfare under the optimal single intervention (red line) and the optimal joint intervention (blue line) while  Figure \ref{fig:welfare-r} plots $r^*$ with respect to the budget $C$.

\begin{figure}[H]
\begin{subfigure}{0.45\textwidth}
\begin{center}
\includegraphics[scale=0.2]{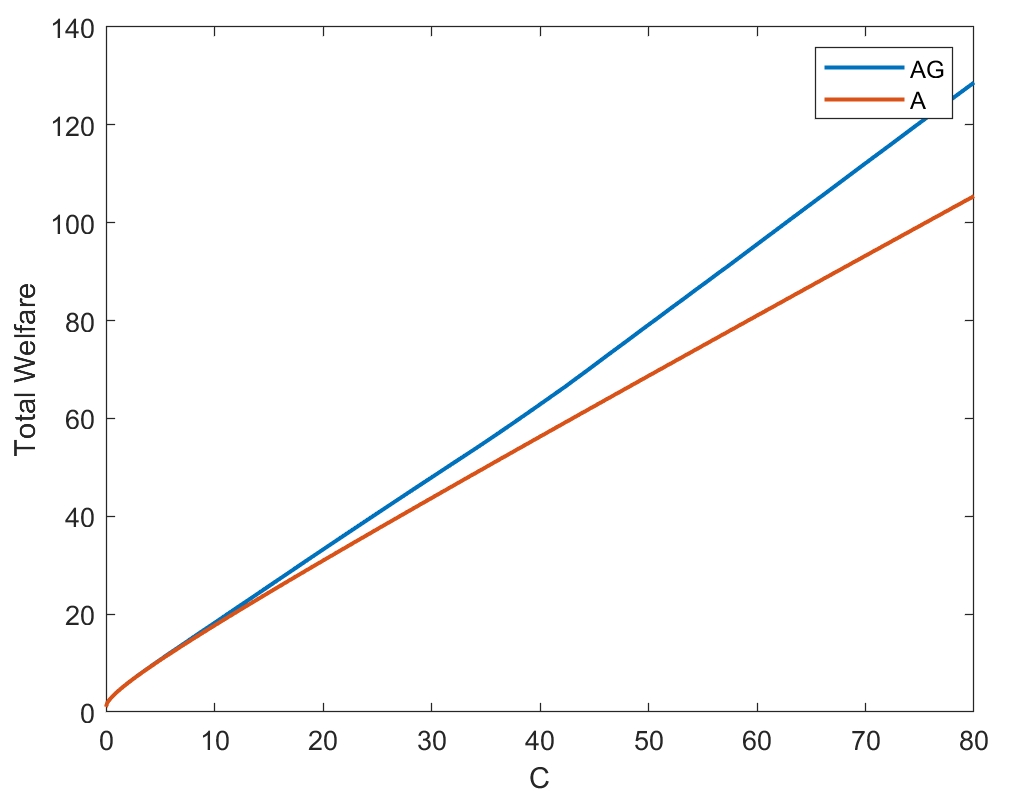}
\end{center}
\caption{Welfare (Example \ref{ex-2})}\label{fig:welfare-1}
\end{subfigure}
\qquad
\begin{subfigure}{0.45\textwidth}
\begin{center}
\includegraphics[scale = 0.2]{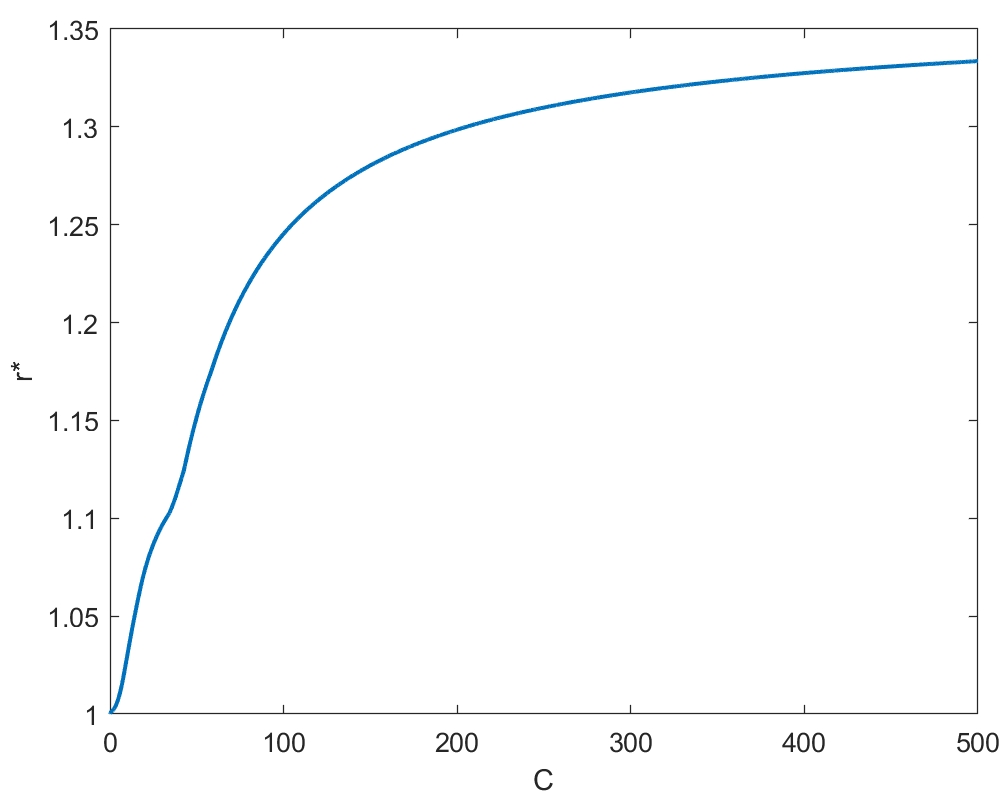}
\end{center}
\caption{\(r^*\) (Example \ref{ex-2})}\label{fig:welfare-r}
\end{subfigure}
\caption{Welfare simulations ($\phi<0$)} \label{fig:welfare}
\end{figure}

To evaluate the equality index of the payoffs of $n$ players, we calculate the Gini index of the payoffs under optimal joint and single intervention. In Figure \ref{fig:sim-gini-}, the red line represents the Gini index under single intervention and the blue line represents the Gini index under joint intervention with respect to the budget $C$.

\begin{figure}[H]
\centering
\includegraphics[scale = 0.2]{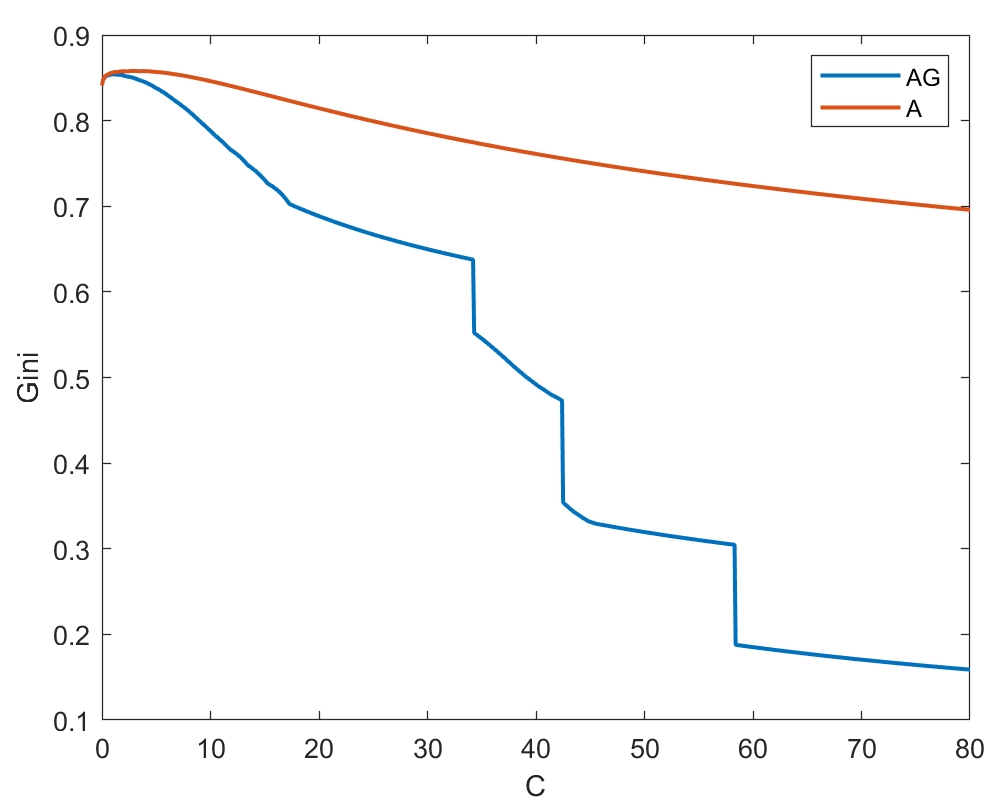}

\caption{Gini Index ($\phi<0$)}\label{fig:sim-gini-}
\end{figure}

Similarly, when $\phi=0.05$ and other settings are the same as example \ref{ex-2}, we simulate the Gini index in Figure \ref{fig:sim-gini+} and the $r^*$ in Figure \ref{fig:sim-r+} .

\begin{figure}[H]
\begin{subfigure}{0.45\textwidth}
\begin{center}
\includegraphics[scale=0.2]{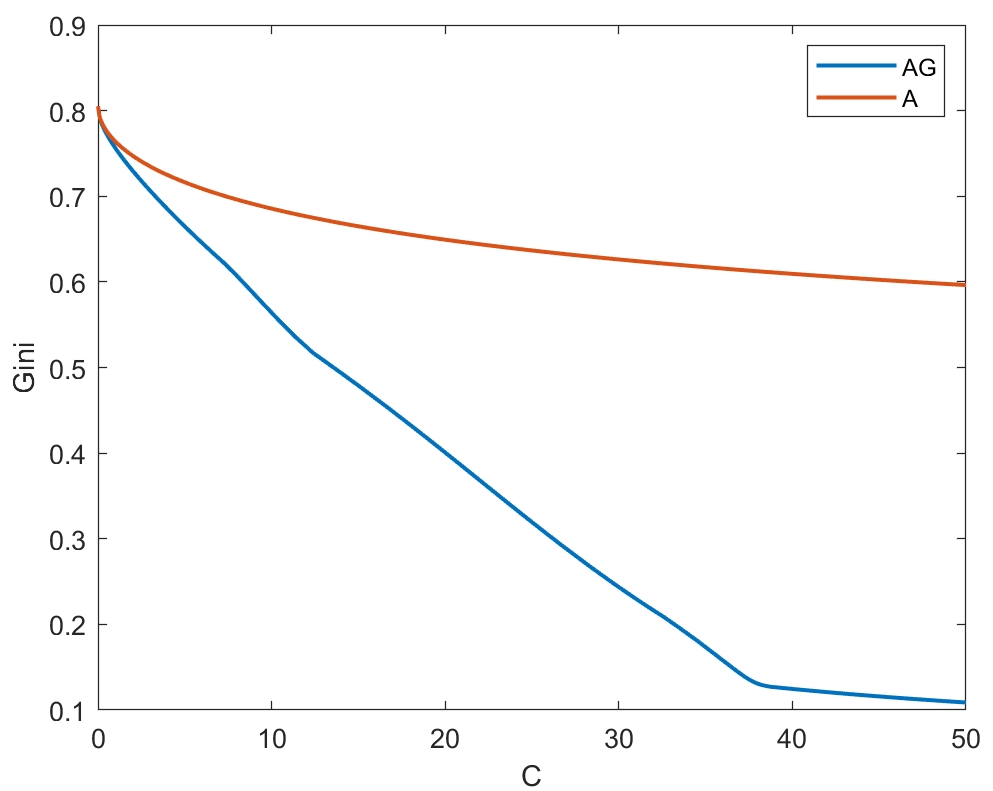}
\end{center}
\caption{Gini Index}\label{fig:sim-gini+}
\end{subfigure}
\qquad
\begin{subfigure}{0.45\textwidth}
\begin{center}
\includegraphics[scale = 0.2]{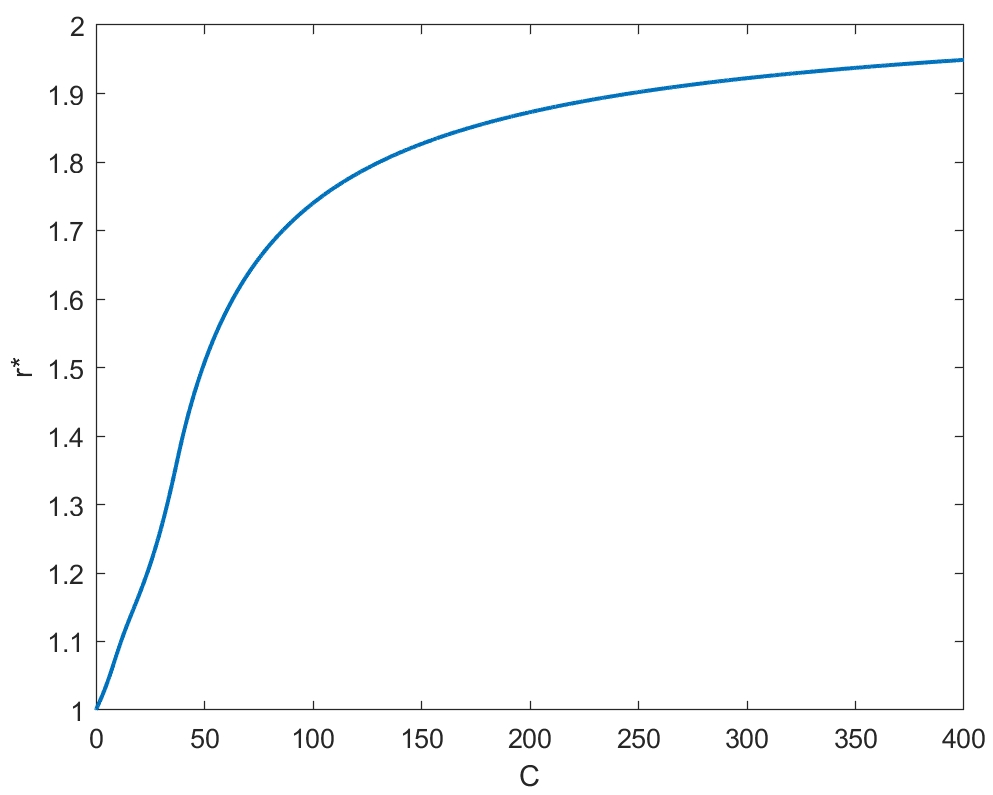}
\end{center}
\caption{\(r^*\)}\label{fig:sim-r+}
\end{subfigure}
\caption{Welfare simulations ($\phi>0$)}\label{fig:sim+}
\end{figure}

From the Figures \ref{fig:welfare} and \ref{fig:sim+}, as predicted, \(r^*\) converges to the ratio of the largest eigenvalues of \((I - \phi \hat{\b g})^2\) and \((I - \phi \bar{\b g})^2\), where \(\bar{\b g} = \bar{w} K_8\) for \(\phi > 0\) and \(\bar{\b g} = \bar{w} K_{4,4}\) for \(\phi < 0\), as \(C\) becomes large. In addition, the Gini index under joint intervention converges to zero. 
However, when \(C\) corresponds to an intermediate budget, the effect of \(C\) on inequality becomes ambiguous. For example, as shown in Figure \ref{fig:sim-gini-}, the Gini index is not monotonic.
% We provide additional simulations for intermediate budgets in the Appendix to further illustrate the ambiguous effect of $C$.

\section{Extensions}\label{sec-5}

\subsection{Pure network design}\label{sec-5.2}

As a counterpart to the single intervention analyzed in \cite{ggg}, we now consider the case in which the planner can intervene only in the network structure \(\b g\), while the standalone utilities   remain fixed at their pre-intervention values  \(\b a\). This corresponds to the optimization problem:
\begin{equation}
\max_{\b g\in\mathcal G_n}\quad \hb a^T[\b I-\phi\b g]^{-2}\hb a.
\label{eq-pure}
\end{equation}
Following the earlier discussion on general cost functions, and noting that the cost of modifying the network is bounded, we omit the budget constraint without loss of generality, as it does not bind in this case. Consequently, the total achievable welfare is also bounded. As a result, pure network design yields lower welfare gains than characteristic-based (utility) interventions when the planner has access to a sufficiently large budget.
Interestingly, although the optimal network configuration under pure design differs from those obtained under joint interventions, its structure retains similar features. In particular, the optimal network generally consists of connected components that are either complete or complete bipartite graphs.

\begin{proposition}\label{prop-pure}
Suppose \(\hat a_i\neq0\) for all \(i\), and Assumption \ref{as-1} holds. Define the sets \(A^+=\{i:\hat a_i>0\}\) and \(A^-=\{i:\hat a_i<0\}\). Then the solution to \eqref{eq-pure} satisfies the following:

(a) If \(\phi\to 0^+\), then \(\b g^*\) consists of two disjoint complete graphs formed by the sets \(A^+\) and \(A^-\).

(b) If \(\phi\to0^-\), then \(\b g^*\) is the complete bipartite graph with the partitions as the sets \(A^+\) and \(A^-\).
\end{proposition}
Proposition \ref{prop-pure} follows from a Taylor expansion of the objective function: 
\[\hb a^T[\b I-\phi\b g]^{-2}\hb a=\hb a^T\hb a+2\phi\sum_{i\neq j}g_{ij}\hat a_i\hat a_j+\mathcal{O}(\phi^2).\]
Since \(\hb a^T\hb a\) is exogenous, if \(|\phi|\) is sufficiently small, then maximizing \(\hb a^T[\b I-\phi\b g]^{-2}\hb a\) will be equivalent to maximizing the linear term \(2\phi\sum_{i\neq j}g_{ij}\hat a_i\hat a_j\). Since \(g_{ij}\in[0,\bar w]\), the maximum is achieved under the condition that \(g_{ij}=\bar w\) if \(\phi\hat a_i\hat a_j>0\), and \(g_{ij}=0\) otherwise. Therefore, when \(\phi>0\), then links are formed between \(i\) and \(j\) when \(\hat a_i,\hat a_j\) are of the same sign, while when \(\phi<0\), then links are formed between \(i\) and \(j\) when \(\hat a_i,\hat a_j\) are of opposite signs. This gives the characterization in Proposition \ref{prop-pure}.

\subsection{Alternative objective functions and cost functions}\label{sec-5.1}

Here, we show that although we made use of quadratic utilities and costs in the previous sections, our results for large budgets are robust to a variety of functional forms. Consider the case where the planner has an objective function \(f(\pi_1,\cdots,\pi_n)\), so the planner solves 
\begin{align*}
\max_{\b a\in\mathbb R^n,\ \b g\in\mathcal G_n}\quad& f(\pi_1,\cdots,\pi_n)\\\text{s.t.}\quad&\kappa\|\b g-\hb g\|^2+\|\b a-\hb a\|^2\leq C.\stepcounter{equation}\tag{\theequation}\label{eq-util}
\end{align*}

In \eqref{eq-util}, \(f\) represents the choice of social welfare function implemented by the planner. In our base model, we have considered the case of a utilitarian planner, with \(f(\pi_1,\cdots,\pi_n)=\sum_{i=1}^n\pi_i\). A possible alternative is the Rawlsian utility function \(f(\pi_1,\cdots,\pi_n)=\min_i\pi_i\), where the planner aims to maximize the lowest utility obtained across all players. Both of these cases are covered under the following proposition.

\begin{proposition}\label{prop-util}
Suppose \(f\) is symmetric, increasing, concave, and Assumption \ref{as-1} holds. Then as \(C\) goes to infinity, the solution to \eqref{eq-util} tends to \(\b g^*=\bar wK_n\) if \(\phi>0\), and tends to \(\b g^*=\bar wK_{\frac{n}{2},\frac{n}{2}}\) if \(\phi<0\) and \(n\) is even.
\end{proposition}

To show Proposition \ref{prop-util}, we note that when \(\b g=\bar wK_n\) or \(\b g=\bar wK_{\frac{n}{2},\frac{n}{2}}\) depending on the sign of \(\phi\), the payoffs for each agent are asymptotically equal by Theorem \ref{th-ineq}. The sum of payoffs \(\sum_{i=1}^n\pi_i\) is also maximized by Theorem \ref{th-2}. Since \(f\) is symmetric and concave, we have \(f(\pi_1,\cdots,\pi_n)\leq f(\bar\pi,\cdots,\bar\pi)\) for any \(\pi_1,\cdots,\pi_n\), where \(\bar\pi=\frac{1}{n}\sum_{i=1}^n\pi_i\). Furthermore, \(f(\bar\pi,\cdots,\bar\pi)\) is increasing in \(\bar\pi\). Thus the optimal payoff is attained when \(\b g=\bar wK_n\) or \(\b g=\bar wK_{\frac{n}{2},\frac{n}{2}}\).

We next allow for more general cost functions instead, and solve the optimization problem
\begin{align*}
\max_{\b a\in\mathbb R^n,\ \b g\in\mathcal G_n}\quad& \b a^T[\b I-\phi\b g]^{-2}\b a\\\text{s.t.}\quad&h(\b g;\hb g)+\|\b a-\hb a\|^2\leq C.\stepcounter{equation}\tag{\theequation}\label{eq-cost}
\end{align*}

Here, the function \(h\) represents the cost of intervention in the network structure.\footnote{The problem of general cost functions for intervention on the \(\b a\) component is studied in \cite{ggg}.} In our base model, we have assumed that \(h\) is given by the square of the \(L_2\)-norm, \(h(\b g;\hb g)=\sum_{i\neq j}(g_{ij}-\hat g_{ij})^2\), which helped to simplify the characterization in Proposition \ref{th-1}. However, many other cost functions are possible, such as the \(L_1\)-norm, \(h(\b g-\hb g)=\sum_{i\neq j}|g_{ij}-\hat g_{ij}|\). The appropriate choice of cost function will depend on the policies and technologies available to the planner, but we show in the following proposition that the optimal network structure is independent of the cost function for large budgets. 

\begin{proposition}\label{prop-cost}
Suppose \(h\) is continuous, and Assumption \ref{as-1} holds. As \(C\) goes to infinity, the solution to \eqref{eq-cost} tends to \(\b g^*=\bar wK_n\) if \(\phi>0\), and tends to \(\b g^*=\bar wK_{\frac{n}{2},\frac{n}{2}}\) if \(\phi<0\) and \(n\) is even.
\end{proposition}
To show Proposition \ref{prop-cost}, first observe that since \(\mathcal G_n\) is compact, then the expenditure on network design \(h\) is bounded. By an argument analogous to \eqref{eq-o}, we find that as \(C\) grows, the cost on network design becomes irrelevant and the dominant term for the total welfare is will still be the social multiplier \(\frac{1}{(1-\lambda_1(\phi\b g))^2}\), so the graphs that maximize \(\lambda_1(\phi\b g)\) (see Lemma \ref{lem-graph}) will be optimal for large \(C\).

\section{Concluding remarks}\label{sec_Con}

In many economic and social environments, a planner can influence both individuals' incentives and the network through which their actions interact. Such \emph{joint interventions}, where the planner simultaneously modifies individuals' private returns to investment and the structure of the network, are increasingly relevant in applications where the network is exogenous to individual users, such as infrastructure networks. Despite their growing importance, most of the existing literature has focused on targeted interventions along a single margin, either by modifying individual incentives or by altering the network structure. This paper develops a general framework to analyze the design of optimal joint interventions and highlights their implications for welfare and inequality.

We provide a tractable characterization of the optimal intervention problem under quadratic costs and strategic interactions, showing how the planner simultaneously allocates the budget across private returns and network weights. Our theoretical results establish that the optimal network adopts simple structures in large budgets: either a complete network under strategic complements, or a complete balanced bipartite network under strategic substitutes. These results allow us to quantify both the welfare gains and the inequality implications of joint interventions relative to single interventions.

While joint interventions always yield higher welfare by expanding the planner's feasible set, we show that they are particularly effective in simultaneously improving welfare and reducing payoff inequality, especially for large budgets. However, we also document that for intermediate budgets, network adjustments may introduce nontrivial trade-offs between welfare and inequality, depending on the initial network structure. Our results highlight that incorporating network design into intervention policies can substantially reduce these trade-offs and enhance policy effectiveness.

Overall, our analysis demonstrates that jointly targeting individuals' incentives and the network structure can lead to significant improvements in both efficiency and equity, providing novel insights for the design of optimal interventions in networked environments.

One interesting direction for research is when the network forms endogenously through the choices of individual players, rather than being directly designed by the planner. In such models (e.g. the framework studied by \cite{sg}), each player decides which links to form, and the network structure arises as an equilibrium outcome of their collective decisions.  A social planner in this context cannot choose the network outright but can intervene indirectly by influencing the incentives for link formation. For example, the planner might subsidize the creation of certain beneficial links or impose taxes/fees on forming certain links to discourage them. Analyzing the optimal subsidy or tax scheme for link formation in an endogenous network game is a promising avenue for future work, as it would offer a new perspective on targeted interventions that align individual incentives with social welfare objectives.

Additionally, when interactions are strategic substitutes ($\phi <0$), there is a computational challenge in implementing the optimal network design. Choosing the best way to partition the players into two groups (to form the optimal balanced bipartite network) is an NP-hard problem (Proposition \ref{prop-np}).  This means that there is no known efficient algorithm to find the optimal bipartition for large networks, making exhaustive search infeasible as the network size grows. Further work is thus needed to develop approximation algorithms or heuristic methods that can guide the planner’s decisions in this scenario. Designing such algorithms would reduce the computational difficulty and enable near-optimal network interventions even when the exact optimum is too complex to compute.

\newpage
\appendix

\strut

\begin{center}{\bf \Large Appendix: Proofs of the results in the text}\end{center}

\bigskip 

\begin{proof}{Theorem \ref{prop-1}}
We begin by showing that the budge constraint must be binding under the optimal solution. If the budget constraint is not binding for the solution $(\b a^*,\b g^*)$, then there must be a parameter $\lambda>1$ such that $(\lambda\b a^*,\b g^*)$ satisfies the budget constraint. Since $\lambda^2 V(\b a^*,\b g^*;\hb g,\hb a, C)= V( \lambda\b a^*,\b g^*;\hb g,\hb a, C)$, $(\b a^*,\b g^*)$ cannot be optimal. ($V(\b a^*,\b g^*;\hb g,\hb a, C)>0$ since $(\b I-\phi \b g^*)^{-2}$ must be positive definite.)

Suppose $(\b a^*,\b g^*)$ is optimal. Let $L(\b a^*,\b g^*)$ be the Lagrangian of the Problem \ref{eq-prob}. Therefore, $L(\b a^*,\b g^*)=\b a^{*T}[\b I-\phi\b g^*]^{-2}\b a^*+\mu (C-\kappa\|\b g^*-\hb g\|^2-\|\b a^*-\hb a\|^2)$ where $\mu=\frac{\partial V^*}{\partial C}$. 

(\ref{eq-foc1}) is just the FOC of $L$ with respect to \(\b a^*\) (recall that $\b g^*$ is symmetric): \begin{equation} 2 [\b I-\phi\b g^*]^{-2}\b a^*=2\mu(\b a^*-\hb a).\label{eq-foca}\end{equation}
Rewriting (\ref{eq-foca}) with respect to the basis $\{\b u_1,\cdots,\b u_n\}$ gives (\ref{eq-foc1}).

For (\ref{eq-foc2}), we first observe that $L(\b a^*,\b g^*) \geq  L(\b a^*, (1-t) \b g^* +t \b g')$ for any $t\in[0,1]$ and $\b g'\in\mathcal G_n$. Thus the directional directive of $L(\b a^*, \cdot)$, in the direction of $\b g'-\b g^*$ must be nonpositive. We evaluate them in the following Lemma:

\begin{lemma}{(Some matrix calculus results)}\label{lem-dir}
Define \[\c H=\{\b h\in\mathbb R^{n\times n}|h_{ij}=h_{ji}\text{ and }h_{ii}=0\text{ for all }i,j.\}.\]
(a) As a function of the network \(\b g\), the directional derivative of \(\b a^T[\b I-\phi\b g]^{-2}\b a\) in the direction of \(\b h\in\c H\) equals \begin{equation}\hspace{-1cm}\lim_{\epsilon\to0}\frac{\b a^T[\b I-\phi(\b g+\epsilon\b h)]^{-2}\b a-\b a^T[\b I-\phi\b g]^{-2}\b a}{\epsilon}=2Tr(\phi[\b I-\phi\b g]^{-1}\b a\b a^T[\b I-\phi\b g]^{-2}\b h).\label{eq-lem1a}\end{equation}

(b) As a function of the network \(\b g\), the directional derivative of \(\|\b g-\hb g\|^2\) in the direction of \(\b h\in\c H\) equals \begin{equation}
    \lim_{\epsilon\to0}\frac{\|\b g+\epsilon\b h-\hb g\|^2-\|\b g-\hb g\|^2}{\epsilon}=2Tr((\b g-\hb g)\b h).\label{eq-lem1b}
\end{equation}
\end{lemma}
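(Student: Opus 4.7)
The plan is to treat both parts as exercises in differentiation in the scalar parameter $\epsilon$ around zero. For part (a), set $M(\epsilon) = \b I - \phi(\b g + \epsilon\b h)$; then $M(0) = \b I - \phi\b g$ is symmetric (by symmetry of $\b g$) and invertible under the maintained regularity condition $\lambda_1(\phi\b g) < 1$, with $M'(0) = -\phi\b h$. The standard matrix calculus identity
\[
\left.\frac{d}{d\epsilon} M(\epsilon)^{-1}\right|_{0} = -M(0)^{-1}M'(0)M(0)^{-1} = \phi[\b I-\phi\b g]^{-1}\b h[\b I-\phi\b g]^{-1}
\]
combined with the product rule applied to $M^{-2} = M^{-1}\cdot M^{-1}$ gives
\[
\left.\frac{d}{d\epsilon} M(\epsilon)^{-2}\right|_{0} = \phi\bigl([\b I-\phi\b g]^{-2}\b h[\b I-\phi\b g]^{-1} + [\b I-\phi\b g]^{-1}\b h[\b I-\phi\b g]^{-2}\bigr).
\]

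I would then sandwich between $\b a^T$ and $\b a$ and observe that the two resulting scalars coincide: each equals its own transpose, and since $\b h$ is symmetric (as an element of $\c H$) along with every power of $[\b I-\phi\b g]$, transposition swaps the two factors flanking $\b h$ while preserving the scalar value. Thus the directional derivative equals $2\phi\,\b a^T[\b I-\phi\b g]^{-1}\b h[\b I-\phi\b g]^{-2}\b a$. Converting the quadratic form into a trace via $\b a^T X \b a = Tr(X\b a\b a^T)$ and cycling the trace to move $\b h$ into the rightmost position yields exactly the expression in \eqref{eq-lem1a}.

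Part (b) is more elementary. Expanding the squared Frobenius norm using $\|Y\|^2 = Tr(Y^T Y)$ with $Y = \b g - \hb g + \epsilon\b h$ produces
\[
\|\b g + \epsilon\b h - \hb g\|^2 = \|\b g - \hb g\|^2 + \epsilon\bigl[Tr((\b g-\hb g)^T\b h) + Tr(\b h^T(\b g-\hb g))\bigr] + \epsilon^2\|\b h\|^2.
\]
Symmetry of both $\b g - \hb g$ and $\b h$ makes the two middle traces each equal to $Tr((\b g-\hb g)\b h)$, so the coefficient of $\epsilon$ is $2\,Tr((\b g-\hb g)\b h)$, which establishes \eqref{eq-lem1b} upon dividing by $\epsilon$ and sending $\epsilon \to 0$.

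The only substantive obstacle is the bookkeeping in part (a): differentiating a matrix inverse naturally produces two asymmetric-looking operator-valued terms, and one must carefully combine symmetry of $[\b I - \phi\b g]$, symmetry of $\b h$, and the cyclic invariance of the trace in the correct order before the expression collapses into the single clean trace written in the lemma.
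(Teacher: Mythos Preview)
Your proposal is correct and follows essentially the same route as the paper: both arguments rest on the derivative-of-inverse identity $\frac{d}{d\epsilon}M^{-1}=-M^{-1}M'M^{-1}$, exploit the symmetry of $\b h$ and of $[\b I-\phi\b g]^{-1}$, and finish by converting the quadratic form into a trace. The only cosmetic difference is that the paper writes $\b a^T[\b I-\phi\b g]^{-2}\b a=\|[\b I-\phi\b g]^{-1}\b a\|^2$ and applies a difference-of-squares factorization (so the factor $2$ appears directly and only $M^{-1}$ needs differentiating), whereas you differentiate $M^{-2}$ via the product rule and then use symmetry to collapse the two resulting terms.
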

\begin{proof}{Lemma \ref{lem-dir}} 
The proof follows from straightforward matrix operations.

(a) \begin{align*}&\hspace{-1cm}\lim_{\epsilon\to0}\frac{\b a^T[\b I-\phi(\b g+\epsilon\b h)]^{-2}\b a-\b a^T[\b I-\phi\b g]^{-2}\b a}{\epsilon} =\lim_{\epsilon\to0}\frac{\|[\b I-\phi(\b g+\epsilon\b h)]^{-1}\b a\|^2-\|[\b I-\phi\b g]^{-1}\b a\|^2}{\epsilon}\\=&\lim_{\epsilon\to0}\frac{\langle([\b I-\phi(\b g+\epsilon\b h)]^{-1}+[\b I-\phi\b g]^{-1})\b a,([\b I-\phi(\b g+\epsilon\b h)]^{-1}-[\b I-\phi\b g]^{-1})\b a\rangle}{\epsilon}\\=&\langle2[\b I-\phi\b g]^{-1}\b a,\lim_{\epsilon\to0}\frac{([\b I-\phi(\b g+\epsilon\b h)]^{-1}-[\b I-\phi\b g]^{-1})}{\epsilon}\b a\rangle\\
=&2\langle[\b I-\phi\b g]^{-1}\b a,[\b I-\phi\b g]^{-1}\phi\b h[\b I-\phi\b g]^{-1}\b a\rangle=2Tr(\phi[\b I-\phi\b g]^{-1}\b a\b a^T[\b I-\phi\b g]^{-2}\b h).\end{align*}

(b)
\begin{align*}\lim_{\epsilon\to0}\frac{\|\b g+\epsilon\b h-\hb g\|^2-\|\b g-\hb g\|^2}{\epsilon}=\lim_{\epsilon\to0}\frac{\langle2\b g+\epsilon\b h-2\hb g,\epsilon\b h\rangle}{\epsilon}&=\langle2(\b g-\hb g),\b h\rangle
 =2Tr((\b g-\hb g)\b h).\end{align*}
\end{proof}

Applying equations \eqref{eq-lem1a} and \eqref{eq-lem1b} in Lemma \ref{lem-dir}, we obtain that for any \(\b g'\in\mathcal G_n\),  \[\langle \left\{\phi[\b I-\phi\b g^*]^{-1}\b a^*\b a^{*T}[\b I-\phi\b g^*]^{-2}-\mu^*\kappa(\b g^*-\hb g) \right\},\b g'-\b g^*\rangle\leq0.\]

Define \(\b e_{ij}\) to be a matrix with 1 on the \((i,j)\) and \((j,i)\) entries and 0 elsewhere.
Whenever \(\b g^*_{ij}\in(0,\bar w)\), we can choose sufficiently small \(\eta>0\) so that \(\b g'=\b g^*\pm\eta\b e_{ij}\) are in \(\mathcal G_n\). Since \begin{align*}&\langle\left\{\phi[\b I-\phi\b g^*]^{-1}\b a^*\b a^{*T}[\b I-\phi\b g^*]^{-2}-\mu^*\kappa(\b g^*-\hb g) \right\},\eta\b e_{ij}\rangle\\&\qquad=-\langle \left\{\phi[\b I-\phi\b g^*]^{-1}\b a^*\b a^{*T}[\b I-\phi\b g^*]^{-2}-\mu^*\kappa(\b g^*-\hb g) \right\},-\eta\b e_{ij}\rangle,\end{align*}
we must have \[\langle \left\{\phi[\b I-\phi\b g^*]^{-1}\b a^*\b a^{*T}[\b I-\phi\b g^*]^{-2}-\mu^*\kappa(\b g^*-\hb g) \right\},\b e_{ij}\rangle=0.\]
Expanding the inner product gives the first case of (\ref{eq-foc2}) and similar arguments give the rest. 

\end{proof}

\begin{proof}{Proposition \ref{th-1}}
Parts (a) to (d) are derived in the main text, while part (e) is obtained by summing the result in (d) across all \(g_{ij}\).
\end{proof}

\strut

\begin{proof}{Proposition \ref{lem-ext2}}
    Lower bound: Suppose that \(\b a^*,\b g^*\) is an optimal solution to the problem \(\max V(\b a,\b g; \hb g,\hb a = 0,(\sqrt{C}-\|\hb a\|)^2)\). Then, by the triangle inequality,
    \begin{equation*}
        \begin{split}
            \|\b a^*-\hb a\|^2 & \leq (\|\b a^*\|+\|\hb a\|)^2 \\
            & \leq (\sqrt{(\sqrt{C}-\|\hb a\|)^2- \kappa\|\b g^*-\hb g\|^2}+\|\hb a\|)^2 \\
            & =(\sqrt{C}-\|\hb a|)^2+2\|\hb a\|\sqrt{(\sqrt{C}-\|\hb a\|)^2- \kappa\|\b g^*-\hb g\|^2}+\|\hb a\|^2-\kappa\|\b g^*-\hb g\|^2 \\
            & \leq (\sqrt{C}-\|\hb a\|)^2+2\|\hb a\|(\sqrt{C}-\|\hb a\|)+\|\hb a\|^2-\kappa\|\b g^*-\hb g\|^2 \\
            & = C - \kappa\|\b g^*-\hb g\|^2.
        \end{split}
    \end{equation*}
    so \(\b a^*,\b g^*\) is a feasible intervention for the problem \(\max V(\b a,\b g;\hb g,\hb a,C)\). Hence \(V^*(\hb g,\hb a,C)\geq V^*(\hb g,0,(\sqrt{C}-\|\hb a\|)^2)\).

    Upper bound: Suppose that \(\b a^{**},\b g^{**}\) is an optimal solution to the problem \(\max V(\b a,\b g; \hb g,\hb a ,C)\). Then, by the triangle inequality,
    \begin{equation*}
            \|\b a^{**}\|-\|\hb a\| \leq \|\b a^{**}-\hb a\|  \leq \sqrt{C-\kappa\|\b g^{**}-\hb g\|^2}.
    \end{equation*}
    Therefore, 
    \begin{equation*}
    \begin{split}
        \|\b a^{**}\|^2 & \leq (\sqrt{C-\kappa\|\b g^{**}-\hb g\|^2}+\|\hb a\|)^2 \\
        & =C+2\|\hb a\|\sqrt{C-\kappa\|\b g^{**}-\hb g\|^2}+\|\hb a\|^2 - \kappa\|\b g^{**}-\hb g\|^2        \\ 
        &\leq C+2\|\hb a\|\sqrt{C}+\|\hb a\|^2 - \kappa\|\b g^{**}-\hb g\|^2 \\
        & = (\sqrt{C}+\|\hb a\|)^2 - \kappa\|\b g^{**}-\hb g\|^2.
    \end{split}
    \end{equation*}
     so \(\b a^{**},\b g^{**}\) is a feasible intervention for the problem \(\max V(\b a,\b g;\hb g,\hb a=0,(\sqrt{C}+\|\hb a\|)^2)\). Hence \( V^*(\hb g,\hb a,C)\leq V^*(\hb g,0,(\sqrt{C}+\|\hb a\|)^2)\).

     Now we prove the second part of this proposition. By the first part of this proposition, for $C\geq \|\hb a\|^2$,
     \begin{equation}\label{eq-ext-1}
         \frac{V^*(\hb g,\hb a,C)}{V^*(\hb g,0,(\sqrt{C}-\|\hb a\|)^2)}\leq \frac{V^*(\hb g,0,(\sqrt{C}+\|\hb a\|)^2)}{V^*(\hb g,0,(\sqrt{C}-\|\hb a\|)^2)}=\frac{f((\sqrt{C}+\|\hb a\|)^2)}{f((\sqrt{C}-\|\hb a\|)^2)}.
     \end{equation}
     Also, by Proposition \ref{th-1}, we have that $f(x)$ is convex. Suppose $\b g^{***}$ is the optimal network solution to the problem $\max V(\b a,\b g,\hb g,0,(C+\|\hb a\|)^2)$. Thus, by the envelope theorem, 
     \begin{equation}\label{eq-ext-2}
         \begin{split}
             f((\sqrt{C}-\|\hb a\|)^2) & \geq ((\sqrt{C}-\|\hb a\|)^2-(\sqrt{C}+\|\hb a\|)^2)f'((\sqrt{C}+\|\hb a\|)^2)+f((\sqrt{C}+\|\hb a\|)^2) \\
             &=  ((\sqrt{C}-\|\hb a\|)^2-(\sqrt{C}+\|\hb a\|)^2)\frac{1}{(1-\lambda_1(\phi\b g^{***}))^2}+f((\sqrt{C}+\|\hb a\|)^2).
         \end{split}
     \end{equation}
     Therefore, by (\ref{eq-ext-1}), (\ref{eq-ext-2}), and $f((\sqrt{C}+\|\hb a\|)^2)=\frac{(\sqrt{C}+\|\hb a\|)^2-\kappa\|\b g^{***}-\hb g\|^2}{(1-\lambda_1(\phi\b g^{***}))^2}$,
     \begin{equation*}
         \begin{split}
              \frac{V^*(\hb g,\hb a,C)}{V^*(\hb g,0,(\sqrt{C}-\|\hb a\|)^2)} & \leq \frac{f((\sqrt{C}+\|\hb a\|)^2)}{f((\sqrt{C}-\|\hb a\|)^2)} \\
              & \leq \frac{f((\sqrt{C}+\|\hb a\|)^2)}{-\frac{4\sqrt{C}\|\hb a\|}{(1-\lambda_1(\phi\b g^{***}))^2}+f((\sqrt{C}+\|\hb a\|)^2)} \\
              & = \frac{(\sqrt{C}+\|\hb a\|)^2-\kappa\|\b g^{***}-\hb g\|^2}{-4\sqrt{C}\|\hb a\|+(\sqrt{C}+\|\hb a\|)^2-\kappa\|\b g^{***}-\hb g\|^2} \\
              & = 1+\frac{4\sqrt{C}\|\hb a\|}{(\sqrt{C}-\|\hb a\|)^2-\kappa\|\b g^{***}-\hb g\|^2}.
         \end{split}
     \end{equation*}
\end{proof}

\begin{lemma}\label{lem-graph}
Let \(\b g\in\mathcal G_n\).

(i) \[\lmax(\b g)\leq\bar w(n-1),\] with equality if and only if \(\b g\) is the complete graph \(\bar wK_n\).

(ii) \[\lmin(\b g)\geq-\bar w\sqrt{\left\lfloor\frac{n}{2}\right\rfloor\left\lceil\frac{n}{2}\right\rceil},\]
with equality if and only if \(\b g\) is isomorphic to the complete bipartite graph \(\bar wK_{\lfloor\frac{n}{2}\rfloor,\lceil\frac{n}{2}\rceil}\).
\end{lemma}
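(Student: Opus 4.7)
The plan is to establish both bounds via Rayleigh-quotient arguments and then extract the equality cases by tracking when the intermediate inequalities are tight. The two parts need different techniques: part (i) exploits monotonicity of the spectral radius of non-negative matrices, while part (ii) requires a sign-based decomposition of the unit vector, since \(\lmin\) is not monotone under entrywise ordering.

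For part (i), every \(\b g\in\c G_n\) satisfies \(\b g\leq\bar w K_n\) entrywise, with both matrices non-negative and symmetric. Monotonicity of the spectral radius gives \(\lmax(\b g)\leq\lmax(\bar w K_n)=\bar w(n-1)\) by Fact \ref{fact-1}(a). Because \(\bar w K_n\) is irreducible, the strict form of Perron--Frobenius monotonicity implies that any off-diagonal drop \(g_{ij}<\bar w\) yields \(\lmax(\b g)<\bar w(n-1)\); hence equality forces \(\b g=\bar w K_n\).

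For part (ii), take any unit vector \(\b v\) and partition indices by sign as \(S^+=\{i:v_i>0\}\), \(S^-=\{i:v_i<0\}\), with \(p=|S^+|\), \(q=|S^-|\). Expanding \(\b v^T\b g\b v\) by sign sector, the within-sector contributions \(\sum_{i,j\in S^\pm} g_{ij}v_iv_j\) are non-negative, while for the cross term I combine \(g_{ij}\leq\bar w\) with \(v_iv_j<0\) to obtain
\[
\b v^T\b g\b v\ \geq\ -2\bar w\Big(\sum_{i\in S^+}v_i\Big)\Big(\sum_{j\in S^-}|v_j|\Big).
\]
Applying Cauchy--Schwarz to each factor separately and then AM--GM, the right-hand side is bounded below by \(-\bar w\sqrt{pq}\); since \(p+q\leq n\), we have \(\sqrt{pq}\leq\sqrt{\lfloor\frac{n}{2}\rfloor\lceil\frac{n}{2}\rceil}\). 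Minimizing over \(\b v\) then yields the stated bound.

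For the equality case of (ii), every inequality in the chain must be tight simultaneously, which forces (a) \(g_{ij}=0\) within each sign sector, (b) \(g_{ij}=\bar w\) across sectors, (c) \(|v_i|\) constant on each of \(S^+\) and \(S^-\) from Cauchy--Schwarz, (d) the two half-norms equal from AM--GM, and (e) \(\sqrt{pq}=\sqrt{\lfloor\frac{n}{2}\rfloor\lceil\frac{n}{2}\rceil}\), which is possible only when \(p+q=n\) and \(\{p,q\}=\{\lfloor\frac{n}{2}\rfloor,\lceil\frac{n}{2}\rceil\}\). Together these conditions identify \(\b g\), after a relabeling, with \(\bar w K_{\lfloor\frac{n}{2}\rfloor,\lceil\frac{n}{2}\rceil}\), and the minimizing \(\b v\) is proportional to the eigenvector of Fact \ref{fact-1}(b). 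The main obstacle I anticipate is ruling out degenerate minimizers in this step: if \(\b v\) had any zero coordinate, then \(p+q<n\) would give \(\sqrt{pq}<\sqrt{\lfloor\frac{n}{2}\rfloor\lceil\frac{n}{2}\rceil}\), contradicting the assumed equality. Hence every extremal \(\b v\) has full support, and the structural conclusions apply to all of \(\b g\) rather than merely to a submatrix.
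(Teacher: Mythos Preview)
Your proposal is correct. Part (i) is essentially the paper's argument in different clothing: both rest on Perron--Frobenius, the paper via the explicit row-maximum inequality $\lmax u_k=\sum_i g_{ki}u_i\le\bar w(n-1)u_k$, you via the entrywise monotonicity of the spectral radius with strictness coming from irreducibility of $\bar wK_n$.

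Part (ii) is where the approaches genuinely diverge. The paper quotes the Bramoull\'e--Kranton--D'Amours reduction (replace $\b g$ by the bipartite graph $\b g'$ determined by the sign pattern of the bottom eigenvector, which weakly decreases $\lmin$), then optimizes over partition sizes, and finally treats uniqueness by a separate case split: if the eigenvector has no zero coordinate the BKD inequality is strict, while if it has a zero coordinate one restricts to the principal submatrix and appeals to the bound for $n-1$ vertices. Your argument instead bounds the Rayleigh quotient $\b v^T\b g\b v$ for every unit $\b v$ in one pass, using Cauchy--Schwarz on each sign sector and AM--GM on the two half-norms to reach $-\bar w\sqrt{pq}\ge-\bar w\sqrt{\lfloor n/2\rfloor\lceil n/2\rceil}$. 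This is more self-contained (no external lemma), and the equality case falls out automatically by tracking when each of the four inequalities is tight; in particular the integer constraint $pq=\lfloor n/2\rfloor\lceil n/2\rceil$ forces $p+q=n$, so the ``zero-coordinate'' case that the paper handles separately is excluded from the start. The paper's route has the advantage of making the graph-theoretic picture (remove within-part edges, add cross edges) explicit, whereas yours is a cleaner analytic estimate that delivers bound and uniqueness simultaneously.
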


\begin{proof}{Lemma \ref{lem-graph}}

(i) Let \(\b g\in\mathcal G_n\). Let \(\b u=\b u_1(\b g)\).\footnote{Since \(\b g\) is nonnegative, such a nonnegative eigenvector exists by the Perron-Frobenius theorem.} Pick any \(u_k=\max_i u_i>0\). Then 
\[\lmax(\b g)u_k=(\b g\b u)_k= \sum_{i=1}^ng_{ki}u_i\leq \bar w(n-1)u_k,\]
with equality only if \(g_{ki}=\bar w\) and \(u_i=u_k\) for all \(i\neq k\). The latter implies that our choice of \(k\) can be replaced by any other \(j\), so we have \(g_{ji}=\bar w\) for all \(i\neq j\). Hence \(\b g\) represents \(\bar w K_n\).

(ii) We begin by stating Proposition 7 of \cite{bkd}:
\begin{proposition*}[\cite{bkd}]
Let \(\b g\) be a simple graph. Let \(\b u\) be an eigenvector for \(\lmin(\b g)\) and let \(R=\{i:u_i\geq0\},\ S=\{j:u_j<0\}\). Construct \(\b g'\) by removing links within \(R\) and \(S\), and adding links between \(R\) and \(S\). Then \(\lmin(\b g')\leq\lmin(\b g)\).
\end{proposition*}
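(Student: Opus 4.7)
The plan is to prove the stated inequality $\lmin(\b g')\leq\lmin(\b g)$ by a direct Rayleigh-quotient argument, using the eigenvector $\b u$ itself as the test vector. Since $\lmin(\b g')$ is the minimum of $\b v^T \b g' \b v / \b v^T \b v$ over nonzero $\b v$, and $\lmin(\b g)=\b u^T \b g \b u / \b u^T \b u$ because $\b u$ is an eigenvector, it suffices to show the numerator inequality $\b u^T \b g' \b u \leq \b u^T \b g \b u$; the Rayleigh bound will then give $\lmin(\b g')\leq \b u^T\b g'\b u/\|\b u\|^2 \leq \b u^T\b g\b u/\|\b u\|^2=\lmin(\b g)$.

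The first step is to expand both quadratic forms as sums over unordered pairs. Since $\b g$ is symmetric with zero diagonal, $\b u^T \b g \b u = 2\sum_{i<j}g_{ij}u_iu_j$, and similarly for $\b g'$. I would then partition the index pairs into three blocks using the sign pattern of $\b u$: pairs inside $R$ (both entries $u_i, u_j\geq 0$, so $u_iu_j\geq 0$), pairs inside $S$ (both entries negative, so $u_iu_j>0$), and cross pairs with one endpoint in each of $R$, $S$ (so $u_iu_j\leq 0$).

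The second step is to track what the rewiring does on each block. By construction, $\b g'$ agrees with $\b g$ nowhere except that (i) every within-$R$ and within-$S$ edge of $\b g$ has been deleted, and (ii) every cross pair $\{i,j\}$ with $i\in R$, $j\in S$ is now an edge of $\b g'$. Writing out the difference term by term gives
\[
\b u^T \b g' \b u \;-\; \b u^T \b g \b u \;=\; 2\!\!\sum_{i\in R,\, j\in S,\, g_{ij}=0}\!\! u_iu_j \;-\; 2\!\!\sum_{\substack{i<j,\, g_{ij}=1 \\ \{i,j\}\subseteq R \text{ or } \{i,j\}\subseteq S}}\!\! u_iu_j.
\]
Each term in the first sum satisfies $u_iu_j\leq 0$ (one factor in $R$, one in $S$), so the first sum is $\leq 0$. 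Each term in the second sum satisfies $u_iu_j\geq 0$ (both factors have the same sign block), so the second sum is $\geq 0$ and its negation is $\leq 0$. Hence the difference is nonpositive, which is exactly what is needed.

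The argument has no genuine obstacle; the only place a mistake is easy to make is the sign bookkeeping in the two sums, so I would present that carefully. Note also that the test vector $\b u$ is nonzero (it is an eigenvector), so the Rayleigh quotient is well defined, and the proof does not require $\b g$ to be connected or that $R$, $S$ be nonempty (if one block is empty, $\b g'$ has no edges and the inequality is trivial since $\b u^T \b g \b u$ is a sum of nonnegative terms, making $\lmin(\b g)\geq 0\geq \lmin(\b g')$).
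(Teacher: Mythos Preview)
Your proposal is correct and follows essentially the same approach as the paper: use the eigenvector $\b u$ as a test vector in the Rayleigh quotient for $\b g'$, partition the quadratic form by the sign blocks $R$ and $S$, and observe that the rewiring can only decrease $\b u^T\b g\b u$ term by term. In fact you are slightly more careful than the paper, which writes the final expression as $=\lmin(\b g')$ when strictly speaking only $\geq\lmin(\b g')$ is warranted by the Rayleigh bound.
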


\textbf{Proof.} We have \begin{align}\lmin(\b g)&=\sum_{i,j\in R}u_iu_jg_{ij}+\sum_{i,j\in S}u_iu_jg_{ij}+2\sum_{i\in R,j\in S}u_iu_jg_{ij}\nonumber\\&\geq\sum_{i,j\in R}u_iu_jg'_{ij}+\sum_{i,j\in S}u_iu_jg'_{ij}+2\sum_{i\in R,j\in S}u_iu_jg'_{ij}=\lmin(\b g'),\label{eq-bkd}\end{align}

Clearly, the same argument applies even if \(\b g\) is allowed to be a weighted graph, so a complete bipartite graph is optimal. Furthermore, among the set of complete bipartite graphs, the smallest eigenvalue occurs when the vertices are partitioned into sets of size \(\lfloor\frac{n}{2}\rfloor\) and \(\lceil\frac{n}{2}\rceil\). It remains to show that \(\bar wK_{\lfloor\frac{n}{2}\rfloor,\lceil\frac{n}{2}\rceil}\) is the unique graph (up to isomorphism) that minimizes \(\lmin(\b g)\), with \[\lmin(\bar wK_{\lfloor\frac{n}{2}\rfloor,\lceil\frac{n}{2}\rceil})=-\bar w\sqrt{\left\lfloor\frac{n}{2}\right\rfloor\left\lceil\frac{n}{2}\right\rceil}.\]

Let \(\b g\) be a network that is not isomorphic to \(\bar wK_{\lfloor\frac{n}{2}\rfloor,\lceil\frac{n}{2}\rceil}\).
First suppose that \(u_i\neq 0\) for all \(i\). Then the inequality in \eqref{eq-bkd} holds strictly, so there exists \(\b g'\) with \(\lmin(\b g')<\lmin(\b g)\), thus \(\b g\) cannot be optimal. 

Otherwise, without loss of generality suppose that \(u_n=0\). Let \(\b g_{n-1}\) be the \((n-1)\)-th principal minor of \(\b g\), and \(\b u_{1:n-1}\) be the first \(n-1\) components of \(\b u\). Then \[\b g\b u=\lmin(\b g)\b u\implies \b g_{n-1}\b u_{1:n-1}=\lmin(\b g)\b u_{1:n-1},\] so \(\lmin(\b g)\) is also an eigenvalue of \(\b g_{n-1}\). This implies that \[\lmin(\b g)\geq\lambda_{n-1}(\b g_{n-1})\geq -\bar w\sqrt{\left\lfloor\frac{n-1}{2}\right\rfloor\left\lceil\frac{n-1}{2}\right\rceil}>-\bar w\sqrt{\left\lfloor\frac{n}{2}\right\rfloor\left\lceil\frac{n}{2}\right\rceil},\]

so \(\b g\) also cannot be optimal. Hence the only minimizers of \(\lmin(\b g)\) are isomorphic to \(\bar wK_{\lfloor\frac{n}{2}\rfloor,\lceil\frac{n}{2}\rceil}\).
\end{proof}

\strut

\begin{proof}{Remark \ref{rmk-2}}From the bounds in Lemma \ref{lem-graph},

(a) If \(\phi>0\), then \(\lmax(\phi\b g)\leq\phi\bar w(n-1)<1.\)

(b) If \(\phi<0\) and \(2\mid n\), then \[\lmax(\phi\b g) = \phi\lmin(\phi\b g) \leq-\phi\bar w\sqrt{\left\lfloor\frac{n}{2}\right\rfloor\left\lceil\frac{n}{2}\right\rceil}=-\phi\bar w\frac{n}{2}<1.\]

(c) If \(\phi<0\) and \(2\nmid n\), then \[\lmax(\phi\b g) = \phi\lmin(\phi\b g) \leq-\phi\bar w\sqrt{\left\lfloor\frac{n}{2}\right\rfloor\left\lceil\frac{n}{2}\right\rceil}=-\phi\bar w\sqrt{\frac{n^2-1}{4}}<1.\]
\end{proof}

\begin{proof}{Theorem \ref{th-2}}
To prove the existence of a cutoff \(\bar C\), we first take limits of \eqref{eq-foca}: 
\begin{align*}&\lim_{C\to\infty}2[\b I-\phi\b g^*]^{-2}\frac{\b a^*}{\sqrt C}=\lim_{C\to\infty}2\mu\frac{\b a^*-\hb a}{\sqrt C}\\\implies&\lim_{C\to\infty}[\b I-\phi\overline{\b g}]^{-2}\frac{\b a^*}{\sqrt C}=\lim_{C\to\infty}\mu\frac{\b a^*}{\sqrt C},\end{align*} so \[\lim_{C\to\infty}\mu=\frac{1}{(1-\overline\lambda)^2}.\] Similar to \cite{ggg}, \(\frac{\b a^*}{\sqrt C}\) goes to the corresponding eigenvector \(\b u(\overline{\b g})\). From (\ref{eq-foc2}), if there exists arbitrary large \(C\) such that \(g_{kl}^*\in(0,\bar w)\), we have
\begin{align*}0=&\lim_{C\to\infty}\frac{2\kappa(\b g^*-\hb g)_{kl}}{C}\\=&\lim_{C\to\infty}\frac{1}{\mu C}(\phi[\b I-\phi\b g^*]^{-1}\b a^*\b a^{*T}[\b I-\phi\b g^*]^{-2}+\phi[\b I-\phi\b g^*]^{-2}\b a^*\b a^{*T}[\b I-\phi\b g^*]^{-1})_{kl}\\=&\ \phi(1-\overline\lambda)^2\big([\b I-\phi\overline{\b g}]^{-1}\b u(\overline{\b g})\b u(\overline{\b g})^T[\b I-\phi\overline{\b g}]^{-2}+[\b I-\phi\overline{\b g}]^{-2}\b u(\overline{\b g})\b u(\overline{\b g})^T[\b I-\phi\overline{\b g}]^{-1}\big)_{kl}
\\=&\ \frac{2\phi}{1-\overline\lambda}u_k(\overline{\b g})u_l(\overline{\b g})\\\neq&\ 0,\end{align*}
with the last inequality because \(u_k(\overline{\b g})\neq0\) for all \(k\). Therefore, there cannot be interior \(g^*_{ij}\) for sufficiently large \(C\), so \(\b g^*\) must be either complete or complete bipartite from Lemma \ref{lem-graph}.
\end{proof}

\begin{proof}{Fact \ref{fact-1}}
(a) It is easy to check that \((1,1,\cdots,1)\) is an eigenvector of \(K_p\). By the Perron-Frobenius theorem, it must also be a basis of the eigenspace of \(\lmax(K_p)\).

(b) We note that \[K_{p,q}=\pmat{\b 0_p&\b J_{pq}\\\b J_{qp}&\b 0_q}\] is of rank two and has zero trace, so it has a unique eigenvector that corresponds to a negative eigenvalue. We can verify that the given vector is the desired eigenvector of \(\lambda_{p+q}(K_{p,q})\).
\end{proof}

\begin{proof}{Proposition \ref{prop-np}} Call the constrained version of MAX-CUT with \(|S|=\lfloor\frac{n}{2}\rfloor\) the balanced maximum cut (BAL-MAX-CUT) problem, and call a partition of \(\mathcal N\) into parts of sizes \(\lceil\frac{n}{2}\rceil\) and \(\lceil\frac{n}{2}\rceil\) a balanced cut.

MAX-CUT \(\leq_P\) BAL-MAX-CUT:\footnote{We write \(X\leq_P Y\) if problem \(X\) is reducible to problem \(Y\) in polynomial time.} Given an instance \(G\) of MAX-CUT with adjacency matrix \(\b m_{p\times p}\), consider the instance \(G'\) of BAL-MAX-CUT with adjacency matrix \(\pmat{\b m&\b 0_p\\\b 0_p&\b 0_p}\).

Then every cut of \(G\) can be extended to a balanced cut of \(G'\) by a suitable assignment of the independent vertices, without changing the total cut weight. Similarly, every balanced cut of \(G'\) can be restricted to a cut of \(G\) without changing the cut weight by removing the additional vertices. Thus the instance \(G'\) of BAL-MAX-CUT solves the MAX-CUT problem.

BAL-MAX-CUT \(\leq_P\) MAX-CUT: Given an instance \(H\) of BAL-MAX-CUT with adjacency matrix \(\b m_{p\times p}\), consider an instance \(H'\) of BAL-MAX-CUT with adjacency matrix \(\b m+\alpha(\b J_{pp}-\b I_p)\), where \(\alpha>\b 1^T_p\b m\b 1_p\) is sufficiently large.

Let \(k=\left\lfloor\frac{p}{2}\right\rfloor\left\lceil\frac{p}{2}\right\rceil\) be the number of edges in a half-cut of \(H'\). Then the weight of any  balanced cut is at least \(\alpha k\), while any other cut has at most \(k-1\) edges so has weight at most \(\alpha(k-1)+\b 1^T_p\b m\b 1_p<\alpha k\). Therefore, the maximal cut is the maximal balanced cut and the instance \(H'\) of MAX-CUT solves the BAL-MAX-CUT problem.

Therefore, BAL-MAX-CUT, and hence the orientation problem, is in the same computational class as the MAX-CUT problem and is NP-hard \citep{karp}.
\end{proof}

\begin{proof}{Theorem \ref{th-welfare}}
We have \[\lim_{C\to\infty} r^*(\hb g,\hb a,C)=\lim_{C\to\infty}\max_{\b g\in\mathcal G_n}\frac{V^*_{single}(\b g,\hb a,C)}{V^*_{single}(\hb g,\hb a,C)}=\max_{\b g\in\mathcal G_n}\left(\frac{1-\lambda_1(\phi\hb g)}{1-\lambda_1(\phi\b g)}\right)^2,\]
with \(\max_{\b g\in\mathcal G_n}\lambda_1(\phi\b g)\) given by Lemma \ref{lem-graph}.
\end{proof}

\begin{proof}{Theorem \ref{th-ineq}}
Part (b) is shown in  Example \ref{ex-4}, while part (a) follows directly from Lemma \ref{lem-2} and discussions in the main text.
\end{proof}

\begin{proof}{Proposition \ref{lem-2}}
The proof largely follows from the main text. It remains to justify that \[
\frac{x_i^{*}}{x_j^{*}} \approx  \frac{a_i^*}{a_j^*} \approx \frac{u_i^1}{u_j^1} \text{ for all }i,j\]
when \(C\) is large. By \eqref{eq-foc1}, and possibly multiplying \(\b u^1(\phi\b g^*)\) by \(-1\), we have the relation \[\lim_{C\to\infty}\frac{\b a^*}{\|\b a^*\|}=\lim_{C\to\infty}\b u^1(\phi\b g^*).\]
Therefore, by \eqref{eq-x*} and the above,
\[\lim_{C\to\infty}\frac{\b x^*}{\|\b a^*\|}=\lim_{C\to\infty}\frac{[\b I-\phi\b g]^{-1}\b a^*}{\|\b a^*\|}=\lim_{C\to\infty}[\b I-\phi\b g]^{-1}\b u^1(\phi\b g^*)=\lim_{C\to\infty}\frac{1}{1-\lambda_1(\phi\b g)}\b u^1(\phi\b g^*).\]
Consequently, \(\b x^*, \b a^*,\b u^1(\phi\b g^*)\) are approximately proportional vectors when \(C\) is large and the desired equation holds.
\end{proof}

\begin{proof}{Lemma \ref{lem-3}}
Since \(|u^n_i|=|u^n_j|\) for all \(i,j\), then \(|u^n_i|=\frac{1}{\sqrt n}\) for all \(i\). By a relabelling of the indices and possibly multiplying by \(-1\), without loss of generality let \(u^n_i=\frac{1}{\sqrt n}\) if \(i\in\{1,\cdots,k\}\), and \(u^n_i=-\frac{1}{\sqrt n}\) otherwise. Also let \(k>\frac{n}{2}\). By definition, \[\lmin(\b g)u^n_1=\sum_{i=1}^ng_{1i}u^n_i=\sum_{i=1}^kg_{1i}u^n_1-\sum_{i=k+1}^ng_{1i}u^n_1\geq -\bar w(n-k)u^n_1\geq -\bar w\left(\frac{n-1}{2}\right)u^n_1,\]
so \(\lmin(\b g)\geq -\bar w\left(\frac{n-1}{2}\right)\). Finally, it is easily verified that equality holds under the given choice of \(\b g\).
\end{proof}

\begin{proof}{Proposition \ref{prop-4}}
For zero inequality, we must have \(k\b z=\b x^*=[\b I-\phi\hb g]^{-1}\b a^*\) for some \(k\in\mathbb R\). Thus \(\b a^*=k[\b I-\phi\hb g]\b z\). By the budget constraint, \(\|\b a^*\|^2=C=k^2\|[\b I-\phi\hb g]\b z\|^2\), so  \[V_{single,eq}^*=(\b a^*)^T[\b I-\phi\hb g]^{-2}\b a=k^2=\frac{C}{\|[\b I-\phi\hb g]\b z\|^2}.\]
\end{proof}

\newpage
\bibliographystyle{chicago}
\bibliography{GGG-g}

@article{BFLZ2021,
  title={Perceived Competition in Networks},
  author={Bochet, Olivier and Faure, Mathieu and Long, Yan and Zenou, Yves},
  year={2024},
  journal="Working paper available at SSRN: https://ssrn.com/abstract=3753987 "
}

@book{pigou1920economics,
  title={The Economics of Welfare},
  author={Pigou, Arthur Cecil},
  year={1920},
  publisher={Macmillan},
  address={London}
}

@article{wardrop1952road,
  title={Some theoretical aspects of road traffic research},
  author={Wardrop, J.G.},
  journal={Proceedings of the Institution of Civil Engineers, Part II},
  volume={1},
  number={3},
  pages={325--378},
  year={1952}
}

@article{roughgarden2002bad,
  title={How bad is selfish routing?},
  author={Roughgarden, Tim and Tardos, Eva},
  journal={Journal of the ACM},
  volume={49},
  number={2},
  pages={236--259},
  year={2002},
  publisher={ACM}
}

@book{small2007urban,
  title={Urban Transportation Economics},
  author={Small, Kenneth A. and Verhoef, Erik T.},
  year={2007},
  publisher={Routledge}
}

@article{cachon2017role,
  title={The role of surge pricing on a service platform with self-scheduling capacity},
  author={Cachon, Gerard P and Daniels, Karen M and Lobel, Ruben},
  journal={Manufacturing \& Service Operations Management},
  volume={19},
  number={3},
  pages={368--384},
  year={2017},
  publisher={INFORMS}
}

@article{hardman2018review,
  title={A review of consumer preferences of and interactions with electric vehicle charging infrastructure},
  author={Hardman, Scott and Jenn, Alan and Tal, Gil and Axsen, Jonn and Beard, George and Daina, Nicolo and Figenbaum, Erik and Jakobsson, Niklas and Jochem, Patrick and Kinnear, Neale and others},
  journal={Transportation Research Part D: Transport and Environment},
  volume={62},
  pages={508--523},
  year={2018},
  publisher={Elsevier}
}

@article{ccz,author="Antonio Cabrales and Antoni Calvó-Armengol and Yves Zenou", year="2011", title="Social Interactions and Spillovers", journal="Games and Economic Behavior", volume="72", number="2", pages="339-360"}

@article{boy, author="Bimpikis, Kostas and Ozdaglar, Asuman and Yildiz, Ercan", year="2016", title="Competitive Targeted Advertising over Networks", journal="Operations Research", volume="64", number="3", pages="705-720"}

@article{cbo, author="Candogan, O. and K. Bimpikis and A. Ozdaglar", year="2012", title="Optimal Pricing in Networks with Externalities", journal="Operations Research", volume="60", number="4", pages="883-905"}

@article{hpt, author="Ken Hendricks and Michele Piccione and Guofu Tan", year="1995", title="The Economics of Hubs: The Case of Monopoly", journal="Review of Economic Studies", volume="62", number="1", pages="83-99"}

@article{mp,
  title={Organized crime networks: {A}n application of network analysis techniques to the {A}merican mafia},
  author={Mastrobuoni, Giovanni and Patacchini, Eleonora},
  journal={Review of Network Economics},
  volume={11},
  number={3},
  year={2012},
  publisher={Walter de Gruyter GmbH}
}

@article{svm, author="V.L. Sciabolazza and R. Vacca and C. McCarty", year="2020", title="Connecting the dots: implementing and evaluating a network intervention to foster scientific collaboration and productivity", journal="Social Networks", volume="61", pages="181-195"}

@article{bq, author="Bloch, Francis and Quérou, Nicolas", year="2013", title="Pricing in Social Networks", journal="Games and Economic Behavior", volume="80", number="1", pages="243-261"}

@book{oabl, author="O{’}Connor, Alan C. and B. Anderson and A. Brower and S. Lawrence", year="2020", title="Economic Impacts
of Submarine Fiber Optic Cables and Broadband Connectivity in Indonesia", publisher="Research
Triangle Park, NC, USA: RTI International"}

@article{ktz, author="König, Michael D and Tessone, Claudio J and Zenou, Yves", year="2014", title="Nestedness in Networks: A Theoretical Model and some Applications", journal="Theoretical Economics", volume="9", number="3", pages="695-752"}

@article{bbd, author="Belhaj, Mohamed and Bervoets, Sebastian and Deroïan, Frédéric", year="2016", title="Efficient Networks in Games with Local Complementarities", journal="Theoretical Economics", volume="11", number="1", pages="357-380"}

@article{li, 
title={Designing weighted and directed networks under complementarities},
  author={Li, Xueheng},
  journal={Games and Economic Behavior},
  volume={140},
  pages={556--574},
  year={2023},
  publisher={Elsevier}
}

@article{sg,author="Evan Sadler and Benjamin Golub", year="2024", title="Games on Endogenous Networks", journal="Available at arXiv: arxiv.org/abs/2102.01587v6"}

@article{szz,
  title={Structural interventions in networks},
  author={Sun, Yang and Zhao, Wei and Zhou, Junjie},
  journal={International Economic Review},
  volume={64},
  number={4},
  pages={1533--1563},
  year={2023},
  publisher={Wiley Online Library}
}

@article{ggg,author="Galeotti, A. and Golub, B. and Goyal, S.",year="2020",title="Targeting Interventions in Networks",journal="Econometrica",volume="88",number="6",pages="2445-2471"}

@article{bcz,author="Ballester, C. and Calvó-Armengol, A. and Zenou, Y.",year="2006",title="Who's Who in Networks. {W}anted: The Key Player",journal="Econometrica",volume="74",number="5",pages="1403-1417"}

@article{gl,author="Golub, B. and Lever, C.", year="2010",title="The Leverage of Weak Ties: How Linking groups Affect Inequality", journal="Available at http://bengolub.net/wp-content/uploads/2020/05/intergroup.pdf"}

@article{bkd,author="Bramoullé, Y. and Kranton, R and D{'}Amours, M.", year="2014", title="Strategic Interaction and Networks", journal="American Economic Review", volume="104", number="3", pages="898-930"}

@article{constantine, author="Gregory Constantine", year="1985", title="Lower Bounds on the Spectra of Symmetric Matrices with Nonnegative
Entries", journal="Linear Algebra and its Applications", volume="65", pages="171-178"}

@article{dem,author="Demange, G.",year="2017",title="Optimal targeting strategies in a network under complementarities",journal="Games and Economic Behavior",volume="105",number="9",pages="84-103"}

@article{bcz2,author="Ballester, C. and Calvó-Armengol, A. and Zenou, Y.",year="2010",title="Deliquent Networks",journal="Journal of the European Economic Association",volume="8",number="1",pages="34-61"}

@incollection{karp, author="Karp, R.M.", title="Reducability among Combinatorial Problems", booktitle="Complexity of Computer Computations", year="1972",publisher="Springer, Boston, MA", pages="85-103", editor="Miller, R.E. and Thatcher, J.W. and Bohlinger, J.D."}

@article{rogers, author="Brian Rogers and Lintao Ye",year="2021",title="A Strategic Theory of Network Status", journal="Working paper available at https://drive.google.com/file/d/1lZ-OuAKmBugQDM1m7yjJd2AexlDLbzh5/view"}

@article{sihua, author="Sihua Ding", year="2022", title="A Factor Influencing Network Formation: Link Investment Substitutability", journal="Games and Economic Behavior", volume="136", pages="340-359"}

@article{blochdutta, author="Francis Bloch and Bhaskar Dutta", year="2009", title="Communications networks with endogenous link strength", journal="Games and Economic Behavior", volume="66", number="1", pages="39-56"}

@article{baumann, author="Leonie Baumann", year="2021", title="A model of weighted network formation", journal="Theoretical Economics", volume="16", number="1", pages="1-23"}

@article{kinateder, author="Markus Kinateder and Luca Paolo Merlino", year="2022", title="Local public goods with weighted link formation", journal="Games and Economic Behavior", volume="132", pages="316-327"}

@article{elliot, author="Matthew Elliot and Benjamin Golub", year="2019", title="A Network Approach to Public Goods", journal="Journal of Political Economy", volume="127", number="2", pages="730–776"}

@article{penta, 
title={A network solution to robust implementation: The case of identical but unknown distributions},
  author={Oll{\'a}r, Mariann and Penta, Antonio},
  journal={Review of Economic Studies},
  volume={90},
  number={5},
  pages={2517--2554},
  year={2023},
  publisher={Oxford University Press US}
}

@article{carlson, author="George Carlson", year="2021", title="Searching for Results: Optimal Platform Design in a Network Setting", journal="Cambridge-INET Working Papers", number="WP2052"}

@article{ernest, author="Ernest Liu", year="2019", title="Industrial Policies in Production Networks", journal="The Quarterly Journal of Economics", volume="134", number="4", pages="1883–1948"}

@article{king, author="Maia King and Bassel Tarbush and Alexander Teytelboym", year="2019", title="Targeted Carbon Tax Reforms", journal="European Economic Review", volume="119", pages="526-547"}

@article{atkinson, author="Anthony B. Atkinson", year="1970", title="On the Measurement of Inequality", journal="Journal of Economic Theory", volume="2", number="3", pages="244-263"}

@article{allouch, author="Nizar Allouch", title="On the private provision of public goods on networks", year="2015", journal="Journal of Economic Theory", volume="157", pages="527-552"}

@article{kinateder2, author="Kinateder, M. and Merlino, L.P.", title="Free riding in networks", year="2023", journal="European Economic Review", volume="152", pages="104378"}
\end{document}